\documentclass[11pt]{article}
\usepackage[margin=1in]{geometry}
\usepackage[numbers]{natbib}

\usepackage{amsmath,amssymb,amsthm}
\usepackage[ruled,vlined,linesnumbered]{algorithm2e}
\usepackage{booktabs}
\usepackage{makecell}
\usepackage{hyperref}
\usepackage[table]{xcolor}
\definecolor{OurColor}{RGB}{0,50,110}
\usepackage{tikz}
\usepackage{pgfplots}
\pgfplotsset{compat=1.18}

\newtheorem{theorem}{Theorem}[section]
\newtheorem{lemma}[theorem]{Lemma}

\newtheorem{corollary}[theorem]{Corollary}
\newtheorem{observation}[theorem]{Observation}

\theoremstyle{definition}
\newtheorem{definition}[theorem]{Definition}
\newtheorem{problem}[theorem]{Problem}
\newtheorem{openquestion}[theorem]{Open Question}
\theoremstyle{remark}
\newtheorem{remark}[theorem]{Remark}

\newtheorem{fact}[theorem]{Fact}
\newtheorem*{theorem*}{Theorem}
\newtheorem*{corollary*}{Corollary}
\newtheorem*{proposition*}{Proposition}

\newcommand{\OPT}{\mathrm{OPT}}
\newcommand{\N}{\mathcal{N}}
\newcommand{\dotO}{\dot{O}}
\newcommand{\hatO}{\hat{O}}

\title{Submodular Ground-Set Pruning: Monotone Tightness and a Non-Monotone Separation}
\author{%
  Alan Kuhnle \\
  Department of Computer Science \& Engineering \\
  Texas A\&M University \\
  \texttt{kuhnle@tamu.edu}
}

\begin{document}
\maketitle

\begin{abstract}
Large-scale subset selection asks for a small useful set of examples, features,
sensors, seed users, or context passages from an enormous ground set.
Submodular maximization is a canonical model for such diminishing-returns
problems, but rapidly growing datasets make even linear-time algorithms ever costlier.
We study \emph{containment pruning}: first reduce the ground set to a smaller
core $P$, then require that $P$ contain a near-optimal feasible solution for
every downstream budget up to~$k$.
Prior work has formulated many heuristics, but
the theoretical limits of this preprocessing problem are largely unknown.
For monotone submodular objectives, we prove that $1-1/e$ is tight: greedy
achieves this containment factor, and no algorithm can beat it even with a
larger pruning budget.
For non-monotone objectives, we give the first
$1/2-\varepsilon$ containment algorithms under cardinality constraints and
extend the approach to knapsack constraints.
This $1/2$ factor exceeds the best known algorithmic ratio and the known
hardness threshold for non-monotone maximization, showing that pruning
can be provably easier than optimization.
Empirically, pruning lets an exact IP solver run on the reduced MaxCut
instance with a ${\approx}620\times$ speedup,
and
proof-of-concept experiments on LLM context selection demonstrate the utility of
non-monotone submodular proxies and our proposed containment algorithms.
\end{abstract}

\section{Introduction}
\label{sec:intro}

\paragraph{Large-scale subset selection.}
Many machine-learning and data-analysis tasks ask for a small useful subset of
a much larger ground set: seed users in a social network, sensors or experiment
locations, predictive features, representative documents, or context passages
for an LLM~\cite{KempeKleinbergTardos2003,KrauseSinghGuestrin2008,DasKempe2011,LinBilmes2011,KumariWangDasBilmes2024,KumariWangZhouBilmes2024}.
Submodular maximization is a canonical model for these problems because it
captures diminishing returns: once a subset already covers a concept, feature,
region, or piece of evidence, adding another similar item helps less.  The
ground sets in these applications keep growing: the NCBI Sequence Read Archive
has sustained exponential growth for over a decade~\cite{kodama2012sequence,katz2022sequence},
and machine-learning training corpora have grown at roughly $3.7\times$ per
year~\cite{villalobos2022data}.  Even theoretically efficient submodular
algorithms cannot avoid this growth: in the value-oracle model they require at
least linear time in the ground-set size~\cite{Kuhnle2021}.  Larger ground sets
therefore mean more oracle queries, higher latency, and larger downstream solver
instances.

\paragraph{Why pruning can help.}
When the target solution size $k$ is small compared with the ground-set size
$n$, much of the data is often redundant or irrelevant.  Documents may be near
duplicates, features may become correlated after a few predictors are selected,
retrieved passages may be distractors or paraphrases, and many graph vertices
may have interchangeable marginal value.  This suggests a two-stage workflow:
first reduce the large ground set to a smaller core $P \subseteq N$, then run a
more expensive optimizer on $P$ alone.  This idea appears in pruned
submodularity graphs, multi-stage selection, candidate reduction, learned
graph pruning, and many heuristics~\cite{ZhouOuyangBilmes2017,WeiBilmes2014,BalkanskiMirzasoleimanSinger2016,MirzasoleimanBadanidiyuruKarbasi2015,Manchanda2020,IrelandMontana2022,TianEtAl2024,NathKuhnle2025}.
The key question is not whether pruning can speed up a particular implementation,
but whether the pruned universe still contains high-quality feasible solutions
for a downstream optimizer to find.  Moreover, the allowable core size may be
set by memory, latency, parallelism, or the largest instance an exact solver can
handle.  Following Nath and Kuhnle~\cite{NathKuhnle2025}, we study
containment as a function of a pruning budget $p \ge k$, and ask for one core
that supports every downstream target size $k' \le k$.

\begin{problem}[Budget-Range Submodular Pruning]
\label{prob:pruning}
Given $N$, $f\colon 2^N \to \mathbb{R}_{\ge 0}$ (submodular), and
$k \in \mathbb{N}$, for a pruning budget $p \ge k$, find $P \subseteq N$
with $|P| \le p$ such that
\[
  \max_{T \subseteq P,\, |T| \le k'} f(T) \;\ge\; \alpha \cdot \OPT_{k'},
  \qquad \forall\, k' \le k,
\]
where $\OPT_{k'} = \max_{|T| \le k'} f(T)$.
We call $\alpha$ the \emph{containment factor}.
\end{problem}

\noindent The pruning algorithm does not need to identify the final solution.
It only has to output a smaller universe in which good solutions still exist.
This distinction is what makes containment different from ordinary
optimization.

\paragraph{The gap.}
Existing pruning methods have largely been heuristic:
pruned submodularity graphs~\cite{ZhouOuyangBilmes2017} and learned policies
such as GCOMB, LeNSE, and COMBHelper~\cite{Manchanda2020,IrelandMontana2022,TianEtAl2024}
reduce search spaces empirically, but do not give worst-case guarantees that a
good downstream solution remains in the pruned set.  The main theoretical
exception is Nath and Kuhnle's \textsc{QuickPrune}~\cite{NathKuhnle2025}, which
achieves $\alpha \ge 1/24-\varepsilon$ for monotone weakly submodular objectives
under a knapsack constraint.

Intuitively, pruning may be easier than direct optimization, as we are not
required to find an optimal solution, merely to contain it; but
the complexity of containment pruning is still
unclear even for monotone submodular functions, and essentially open for
non-monotone functions.
We investigate this complexity directly: can larger
pruning budgets beat the classical $1-1/e$ monotone barrier, and can
non-monotone objectives admit constant-factor budget-range containment?

\subsection{Contributions}

\begin{itemize}
\item \textbf{Tight monotone pruning.}
  We fully characterize monotone cardinality pruning: standard
  greedy gives simultaneous $(1-1/e)$-containment for every
  $k' \le k$~\cite{NemhauserWolsey1978,NathKuhnle2025}, and no algorithm can
  guarantee a larger factor for any pruning budget $p\ge k$, even when it may
  output any set $P$ with $|P|\le p$ (Theorem~\ref{thm:monotone-hardness}).
  The hardness extends to monotone $\gamma$-weakly DR functions
  and yields the tight $1-\kappa/e$ curvature barrier
  (Corollary~\ref{cor:curvature-hardness}).  We also give a near-linear
  threshold-greedy variant that obtains
  $(1-1/e-\varepsilon)$-containment with $O(k/\varepsilon)$ elements
  and $\tilde O(n/\varepsilon^2)$ value queries
  (Theorem~\ref{thm:fast-monotone}).

\item \textbf{Non-monotone pruning at $1/2$.}
  We give the first constant-factor budget-range containment guarantees
  for non-negative non-monotone submodular functions under cardinality
  constraints.  Sequential disjoint greedy and a one-pass window method both
  approach $\alpha=1/2$ while preserving every budget $k' \le k$
  (Theorems~\ref{thm:half-containment} and~\ref{thm:window-containment}).
  This factor strictly exceeds the best known non-monotone maximization ratio
  ($0.401$~\cite{BuchbinderFeldman2024}) and the best possible optimization
  ratio under standard assumptions ($0.478$~\cite{Qi2022}).  This is possible
  because containment only asks the pruned set to \emph{contain} a good feasible
  solution, not to extract it. Thus, containment pruning is provably
  easier than optimization in this setting.
 
\item \textbf{Extensions beyond cardinality.}
  The non-monotone analysis extends to knapsack constraints via sequential
  disjoint density-greedy (Theorem~\ref{thm:knapsack-half-containment}).
  The split-item analysis gives $(1/2-\varepsilon)\OPT_{B'}$ minus an
  explicit boundary-fragment term for every budget $B'\le B$.
  This term accounts for the last density-greedy item crossing the budget
  boundary.  Under a standard small-item condition, it is absorbed and the clean
  $1/2-\varepsilon$
  guarantee is recovered (Corollary~\ref{cor:knapsack-small-items}).
  The output has total cost $O(B/\varepsilon)$ and uses
  $O(n^2/\varepsilon)$ value queries.

\item \textbf{Empirical validation.}
  On MaxCut, pruning a $G(300,4500)$ instance to $p=50$ vertices lets an exact
  IP solver finish in 0.19s instead of 120s, a ${\approx}620\times$ speedup while
  preserving $\alpha=1.000$.  An adapted COMBHelper baseline is competitive on
  in-distribution random graphs but degrades on out-of-distribution SNAP
  graphs; our provable methods require no training data.  For LLM context
  selection, $38.7\%$ of HotpotQA questions exhibit non-monotone context
  effects.  A non-monotone proxy
  $\hat{f}(S)=\mathrm{FL}(S)-\theta(|S|)$ shows promise as a context-selection
  objective: on a 50-question MuSiQue pilot, relevance-gated pruning improves
  gold-passage recall by $+9.2$ percentage points at $k=10$.
\end{itemize}

\begin{table}[t]
\centering
\small
\setlength{\tabcolsep}{2.5pt}
\renewcommand{\arraystretch}{1.12}
\begin{tabular}{@{} >{\raggedright\arraybackslash}p{2.55cm}
  >{\raggedright\arraybackslash}p{2.05cm}
  >{\raggedright\arraybackslash}p{3.05cm}
  >{\raggedright\arraybackslash}p{4.55cm} @{}}
\toprule
Question & Setting & Prior & This paper \\
\midrule
Does extra pruning budget help?
  & Monotone card.
  & Greedy: $1-1/e$ at $p=k$
  & {\color{OurColor}$1-1/e$ for every $p\ge k$, with matching hardness}
  \\[3pt]
Can pruning be faster?
  & Monotone card.
  & Greedy: $O(nk)$ queries
  & {\color{OurColor}$1-1/e-\varepsilon$ with $|P|=O(k/\varepsilon)$ and
     $\tilde O(n/\varepsilon^2)$ queries}
  \\[3pt]
Can we prune non-monotone objectives?
  & Non-monotone card.
  & No constant-factor budget-range guarantee
  & {\color{OurColor}$1/2-\varepsilon$ via sequential disjoint greedy or
     window pruning}
  \\[3pt]
Can this extend beyond cardinality?
  & Non-monotone knapsack
  & No non-monotone guarantee; NK25 gives monotone weak-submodular $1/24$
  & {\color{OurColor}$1/2-\varepsilon$ minus boundary term; clean under small items}
  \\
\bottomrule
\end{tabular}
\caption{Main theoretical results, organized by the questions they answer.
{\color{OurColor}Blue} marks new contributions.
The first row closes monotone cardinality pruning: extra pruning budget cannot
beat greedy's $1-1/e$ factor.  For non-monotone cardinality, the gap between
$1/2$ and the inherited $1-1/e$ upper bound remains open; the knapsack result
has the boundary-term caveat shown in the table, with a clean small-item
corollary.}
\label{tab:summary}
\end{table}

\paragraph{Paper organization.}
Section~\ref{sec:prelim} collects notation and background.
Section~\ref{sec:monotone} presents the monotone hardness result.
Section~\ref{sec:nonmonotone} develops both non-monotone containment
algorithms.
Section~\ref{sec:experiments} reports experiments on LLM context selection,
MaxCut, and synthetic non-monotone instances.
Section~\ref{sec:conclusion} discusses open problems.
Appendix~\ref{sec:related} compares containment with bicriteria approximation,
composable coresets, streaming summaries, and two-stage optimization.

\section{Preliminaries}
\label{sec:prelim}

This section fixes notation and collects the density-dependent
maximization bounds that serve as reference points throughout the paper.
The two key quantities to keep in mind are the \emph{containment factor}
$\alpha$ (how well a pruned set represents the full ground set) and the
\emph{density} $c = k/n$ (which governs both the best
achievable approximation and the composition bound of \S\ref{sec:monotone}).
Readers familiar with submodular maximization may skim this section and
refer back as needed.

Let $N$ be a finite ground set with $|N| = n$.
A function $f\colon 2^N \to \mathbb{R}$ is \emph{submodular} if for all
$A \subseteq B \subseteq N$ and $x \notin B$,
$f(x \mid A) \ge f(x \mid B)$,
where $f(x \mid A) = f(A \cup \{x\}) - f(A)$ denotes the marginal gain.
We assume $f$ is non-negative; note that $f(\emptyset) = 0$ is \emph{not} required---our proofs carry $f(\emptyset)$ explicitly where it appears.
The function is \emph{monotone} if $f(A) \le f(B)$ whenever $A \subseteq B$.
For $\gamma \in (0, 1]$, the function is \emph{$\gamma$-weakly DR}
(diminishing returns) if
$f(x \mid A) \ge \gamma \cdot f(x \mid B)$
for all $A \subseteq B$ and $x \notin B$;
at $\gamma = 1$ this is ordinary submodularity.
The \emph{total curvature} of a monotone function is
$\kappa = 1 - \min_{j \in N} f(j \mid N \setminus \{j\})/f(\{j\})$;
at $\kappa = 0$ the function is modular.

We write $O$ for an optimal solution: $O \in \arg\max_{|T| \le k} f(T)$,
and $\OPT = f(O)$.
The \emph{density} of the constraint is $c = k/n$.

\subsection{Known density-dependent results}

\paragraph{Monotone.}
The Measured Continuous Greedy (MCG)---a continuous-relaxation algorithm
that iteratively pushes a fractional solution toward the multilinear
extension's gradient---achieves approximation ratio
$\alpha^*(c) = 1 - (1-c)^{1/c}$ at density $c$, which is tight
(Feldman \& Kuhnle~\cite{FeldmanKuhnle2025}, Theorems~4--5).
Concretely: $\alpha^*(c) \to 1-1/e \approx 0.632$ as $c \to 0$,
$\alpha^*(1/2) \approx 0.750$, $\alpha^*(0.9) \approx 0.935$,
and $\alpha^*(c) \to 1$ as $c \to 1$.
This function governs the composition bound in \S\ref{sec:monotone}
and tells us how much ``easier'' maximization becomes as the pruned set
shrinks the effective ground set.

\paragraph{Non-monotone.}
Buchbinder, Feldman, Naor, and Schwartz (BFNS)~\cite{BFNS2014} give the
density-dependent bound
\[
  \alpha^*_{\mathrm{nm}}(c) = \max\!\left\{
    \frac{1}{e} + 0.004,\;
    \left(1 + \frac{1}{2\sqrt{c(1-c)}}\right)^{\!-1}
  \right\} - o(1).
\]
Concretely, this function is nearly flat across the density regime
relevant to pruning:
$\alpha^*_{\mathrm{nm}}(0.01) \approx 0.372$,
$\alpha^*_{\mathrm{nm}}(0.1) \approx 0.400$,
$\alpha^*_{\mathrm{nm}}(0.2) \approx 0.444$,
rising to $1/2$ at $c = 1/2$ (tight by the symmetry
gap~\cite{GharanVondrak2011,Vondrak2013}).
Buchbinder and Feldman~\cite{BuchbinderFeldman2024} subsequently achieved
$0.401$ for all down-closed constraints (density-independent).
The combined best is $\max(0.401, \alpha^*_{\mathrm{nm}}(c))$.
The strongest hardness is $0.478$ due to Qi~\cite{Qi2022},
improving the earlier $0.491$ of Gharan and Vondr\'{a}k~\cite{GharanVondrak2011}.
The small spread $[0.401, 0.5]$ across all densities is why the composition
bound does essentially no work in the non-monotone setting
(\S\ref{sec:nonmonotone}).

One additional non-monotone baseline---the three-term CDG guarantee of~\cite{BFNS2014}---is used only in the warm-start analysis and is stated in Appendix~\ref{app:deferred-prelim}.

\section{Monotone Pruning}
\label{sec:monotone}

\paragraph{Does extra budget help?}
The simplest way to analyze pruning is \emph{composition}: prune to $p$ elements, then run the best optimization algorithm on the pruned set.
If the pruned set has containment factor $\alpha$ and the optimizer has approximation ratio $\alpha^*(k/p)$, the combined guarantee is at most $\alpha \cdot \alpha^*(k/p)$.
Since this cannot exceed the information-theoretic limit on the original problem, we obtain $\alpha \le \alpha^*(k/n)/\alpha^*(k/p)$.
At $p=k$ this recovers $\alpha \le 1-1/e$, but it weakens as $p$ grows.
The following theorem resolves the remaining question directly: extra budget truly cannot help.

\begin{theorem}[Pruning Hardness for Monotone $\gamma$-Weakly DR Functions]
\label{thm:monotone-hardness}
For every $\gamma \in (0, 1]$, every $\varepsilon > 0$,
and every polynomial $C(\cdot)$,
there exists $k_0 = k_0(\gamma, \varepsilon, C)$ such that for all $k \ge k_0$
and \emph{every} $p \ge k$:
no (possibly randomized) algorithm making at most $C(|N|)$ value queries
and outputting $P \subseteq N$ with $|P| \le p$ can guarantee
\[
  \mathbb{E}\!\left[\max_{S \subseteq P,\, |S| \le k} f(S)\right]
  > (1 - e^{-\gamma} + \varepsilon) \cdot \OPT_k
\]
for all monotone $\gamma$-weakly DR $f\colon 2^N \to \mathbb{R}_{\ge 0}$
with $|N| = \lceil 60p/\varepsilon \rceil$.
At $\gamma = 1$ (full submodularity), this recovers the tight bound
$\alpha \le 1 - 1/e$.
In particular, extra budget beyond $k$ provably does not improve the containment
factor for any constant or growing multiple of~$k$.
\end{theorem}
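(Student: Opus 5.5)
We prove the statement with a symmetry-gap / hidden-set construction in the value-oracle model, in the style of the Nemhauser--Wolsey and Vondrák query-complexity lower bounds. The algorithm's total query budget $C(|N|)$ does not matter; it only needs to be polynomial. The key point is that the pruning budget $p$ enters only through the ratio $p/|N| = \varepsilon/60$, so a hidden optimal set remains almost entirely hidden in $P$.

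\textbf{Construction.} We use a family $\{f_O\}$ indexed by a uniformly random hidden set $O\subseteq N$ with $|O|=k$, together with a symmetric function $g$ that depends only on $|S|$. The target properties are:
\begin{enumerate}
\item $f_O(O)=1=\OPT_k$.
\item $f_O(S)=g(S)$ unless $S$ is unusually correlated with $O$.
\item Every $S$ with $|S|\le k$ satisfies $f_O(S)\le 1-e^{-\gamma}+\varepsilon/2 + (\text{term linear in } |S\cap O|/k)$.
\end{enumerate}
Concretely, write $S$ in terms of $a=|S\cap O|/k$ and $b=|S\setminus O|/k$. Let $h(a,b)$ smooth the function
\[
1-(1-\tfrac{a+b}{?})\cdots,
\]
taking as the base the function $F(x)=1-\prod_j (1-x_j/k)^{\gamma}$-like coverage. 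To be concrete, use the coverage-type function
\[
\phi(a,b)=1-e^{-\gamma b}(1-a)^{\gamma}\ \text{(approximately)},
\]
which is $\gamma$-weakly DR and monotone, attains $\phi(1,0)=1$, and satisfies $\phi(0,b)\le 1-e^{-\gamma}$ for $b\le1$. We then apply the standard symmetrization: replace $\phi$ by a function that equals $\phi(\bar a,\bar b)$ whenever $|a-|S|/n|$ is small. By Chernoff bounds over the random $O$, every adaptive query then returns a value independent of $O$ with probability $1-C(n)e^{-\Omega(\varepsilon^2 k)}$. This is where $k_0$ comes from. The main obstacle is verifying that the smoothed function keeps monotonicity and the $\gamma$-weak DR property. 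We handle it by defining $f$ through a multilinear/continuous extension $\phi(a,b)$ composed with the smoothing map $(a,b)\mapsto$ merged-coordinates, as in Vondrák's symmetry-gap framework. Each piece preserves monotonicity and the cross-derivative sign conditions, with weak DR controlled by the factor $\gamma$ in the exponent.

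\textbf{Indistinguishability.} By Yao's principle, it suffices to consider a deterministic algorithm facing a random $O$. With high probability, every query answer coincides with $g$, so the output $P$ is independent of $O$. Then $\mathbb{E}|P\cap O| = k|P|/n \le \varepsilon k/60$. By Markov's inequality, $|P\cap O|\le \varepsilon k/10$ except with probability $1/6$ of the $\varepsilon$-slack; more carefully, we bound the expectation directly, since the value is Lipschitz in $a$.

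\textbf{Bounding containment.} For any $S\subseteq P$ with $|S|\le k$, we have $a\le |P\cap O|/k$ and $b\le 1$. Hence
\[
f(S)\le 1-e^{-\gamma}+\gamma\cdot |P\cap O|/k+\varepsilon/4.
\]
This uses the Lipschitz property of $\phi$ in $a$ together with the smoothing error. Taking expectations gives
\[
\mathbb{E}\max_S f(S)\le 1-e^{-\gamma}+\varepsilon/60+\varepsilon/4+o(1),
\]
which is below $(1-e^{-\gamma}+\varepsilon)\OPT_k$ for $k\ge k_0$. This holds for every $p$ because only $p/n$ entered. Setting $\gamma=1$ recovers $1-1/e$.
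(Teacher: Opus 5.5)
\textbf{The gap is uniformity in $p$.} The theorem requires one $k_0(\gamma,\varepsilon,C)$ that works for \emph{every} $p\ge k$. Your hiding bound is $1-C(n)e^{-\Omega(\varepsilon^2 k)}$ with $n=\lceil 60p/\varepsilon\rceil$. Since $C$ is a polynomial in $n\propto p$, this bound is vacuous once $p$ is super-polynomial in $k$ (say $p=e^{k}$). Your remark that ``only $p/n$ entered'' overlooks that $n$ also enters through $C(n)$. As written, you get the theorem only for $p\le\mathrm{poly}(k)$, with $k_0$ depending on the degree.

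This is not just a loose union bound. Your hiding rule (answers depend on $O$ once $|S\cap O|/k$ deviates from $|S|/n$) applies to queries of every size. A random query of size $n/2$ deviates by $\delta k$ with probability $e^{-\Theta(\delta^2 k)}$, which is only inverse-polynomial in $n$ when $n=e^{\Theta(k)}$. So polynomially many such queries do reach the region where answers depend on $O$.

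The missing ingredient is property (P4) of the HFWK family the paper uses: $f_O(S)=f_\varnothing(S)$ whenever $|S\cap O|\le g$ \emph{or} $|S|\ge 3k-g$, with an absolute threshold $g=\lceil\varepsilon' k+3k^2/n\rceil$ and $\varepsilon'=\varepsilon/20$. This gives three pieces:
\begin{itemize}
\item Large queries reveal nothing.
\item Queries of size below $3k$ have mean overlap at most $3k^2/n\le\varepsilon' k^2/p$. The multiplicative tail $\Pr[X\ge\varepsilon' k]\le(ek/p)^{\varepsilon' k}$ then beats $C(|N|)=O((p/\varepsilon)^d)$ uniformly over $p>2ek$ for large $k$.
\item The additive tail $e^{-2\varepsilon'^2 k}$ covers $p\le 2ek$, where $|N|=O(k/\varepsilon)$.
\end{itemize}
This two-regime argument is what makes $k_0$ independent of $p$.

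\textbf{The construction is also assumed rather than built.} Your explicit $\phi(a,b)=1-e^{-\gamma b}(1-a)^\gamma$ is not $\gamma$-weakly DR for $\gamma<1$. For $x\in O$, $f(x\mid\emptyset)\approx\gamma/k$, with or without your smoothing. But $O\setminus\{x\}$ is far from symmetric, so it is unsmoothed and $f(x\mid O\setminus\{x\})=k^{-\gamma}$. The ratio is about $\gamma k^{\gamma-1}$, which falls below $\gamma$ for large $k$. Your claim that Vondr\'ak-style smoothing preserves monotonicity and weak DR is likewise unproved. That claim is exactly the content of (P1)--(P4), which the paper imports from HFWK on the enlarged ground set rather than rebuilding.

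With that family in hand, your Lipschitz-in-$a$ averaging step must become a tail bound, because (P4) gives exact agreement with the null instance only up to overlap $g$. That step is easy. The null-transcript output $P_0$ satisfies $\mathbb{E}|P_0\cap O|\le\varepsilon' k/3$ independently of $p$. A single additive tail then gives $\Pr[|P_0\cap O|>g]\le e^{-8\varepsilon'^2k/9}$, with no union bound over queries.
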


\begin{proof}[Proof sketch (full proof in Appendix~\ref{app:monotone-proof})]
The core idea is a cancellation that neutralizes extra budget.
We set $|N| = \lceil 3p/\varepsilon' \rceil$ and plant a random optimum
$\tilde{T}$ of size~$k$ using the construction of Harshaw et
al.~\cite{HarshawFeldmanWardKarbasi2019}.
Any algorithm's output $P$ (with $|P|\le p$) is indistinguishable from the
null instance, so $P$ is effectively fixed before $\tilde{T}$ is drawn.
The expected overlap is then at most
$\mathbb{E}[|P \cap \tilde{T}|]\le pk/|N|\le \varepsilon'k/3$,
which is \emph{independent of $p$}: the pruning-budget factor in the
numerator cancels with the one in the denominator ($|N| \propto p$).
A concentration bound shows that with high probability, $P$ has too
little overlap with $\tilde{T}$ for any $k$-subset of $P$ to beat
the null-instance value $1 - e^{-\gamma} + O(\varepsilon)$.
\end{proof}

\begin{remark}
The proof uses two separate hypergeometric tails.  First, along the
null-instance transcript, every query is hidden unless it has unusually large
overlap with the random planted set; a two-regime bound covers both
$p=O(k)$ and $p\gg k$.  Second, after the null transcript fixes the output
$P$, the expected overlap
$\mathbb{E}[|P\cap\tilde T|]\le \varepsilon' k/3$ is independent of~$p$.
Thus the extra budget cancels in the hiding argument rather than revealing
more of the planted optimum.
\end{remark}

Real-world submodular functions---coverage, influence, facility
location---often have total curvature $\kappa < 1$, meaning elements
retain marginal value even in the presence of many others.
For maximization, low curvature yields strictly better approximation
($1 - \kappa/e$ vs.\ $1-1/e$).
One might hope that low curvature similarly helps pruning.
The next result shows it does not: greedy at $p = k$ is already optimal
for every curvature level.

\begin{corollary}[Pruning Hardness with Bounded Total Curvature]
\label{cor:curvature-hardness}
Let $\kappa \in (0, 1]$ denote total curvature:
$\kappa = 1 - \min_{j \in N} f(j \mid N \setminus \{j\})/f(j)$.
For every $p \ge k$ and every $\varepsilon > 0$,
no algorithm using polynomially many value queries can prune to
$p$ elements with containment
$\alpha > 1 - \kappa/e + \varepsilon$
for all monotone submodular functions with total curvature at most~$\kappa$.
Since the algorithm of Sviridenko, Vondr\'{a}k \& Ward~\cite{SviridenkoVondrakWard2017} achieves $\alpha = 1 - \kappa/e$,
this bound is tight.
\textup{(}At $\kappa = 0$ the function is modular, so exact containment is trivial.\textup{)}
\end{corollary}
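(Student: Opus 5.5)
The plan is to reduce the corollary to the same hiding argument as Theorem~\ref{thm:monotone-hardness}, swapping the Harshaw et al.\ planted construction for the bounded-curvature hard instance of Sviridenko, Vondr\'{a}k \& Ward~\cite{SviridenkoVondrakWard2017}.

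\textbf{Step 1: the instance pair.} For curvature $\kappa$, take their pair of functions on a ground set $N$. Each is a convex combination of a modular part (weight $1-\kappa$) and a curvature-$1$ symmetric-gap part (weight $\kappa$). The planted version hides a random $k$-set $\tilde T$ whose value is $\OPT_k$. In the null version, every $k$-set achieves at most $(1-\kappa/e+\varepsilon/2)\OPT_k$. Both functions are monotone, submodular, and have total curvature at most $\kappa$. As in Theorem~\ref{thm:monotone-hardness}, set $|N|=\lceil 3p/\varepsilon'\rceil$ so that the pruning budget cancels.

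\textbf{Step 2: indistinguishability.} I would check that the two functions agree on every set whose overlap with $\tilde T$ is not abnormally large. I would then reuse the two-regime hypergeometric tail from the proof of Theorem~\ref{thm:monotone-hardness}, which covers both $p=O(k)$ and $p\gg k$. The conclusion is that, with probability $1-o(1)$, a $C(|N|)$-query algorithm sees the null transcript. Its output $P$ is therefore independent of $\tilde T$, and hence $\mathbb{E}|P\cap\tilde T|\le pk/|N|\le\varepsilon' k/3$.

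\textbf{Step 3: bounding the best $k$-subset of $P$.} Since $P$ is independent of $\tilde T$, any $k$-subset $S\subseteq P$ meets $\tilde T$ in at most an $O(\varepsilon')$ fraction of its elements with high probability (concentration, then a union over $|P|\le p$ is not needed because overlap of $S$ is at most overlap of $P$). Both the modular and the symmetric part are continuous in the overlap fraction. So $f(S)$ is within $O(\varepsilon')$ of the value of a zero-overlap $k$-set, which is at most $(1-\kappa/e)\OPT_k$. Taking expectations, and absorbing the failure probability using $f\le\OPT_k$ on $k$-sets, gives $\mathbb{E}[\max_{S\subseteq P,|S|\le k}f(S)]\le(1-\kappa/e+\varepsilon)\OPT_k$. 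Tightness then follows from the cited $1-\kappa/e$ algorithm run with $p=k$.

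\textbf{Main obstacle.} The hard step is Step 3's continuity claim. The SVW construction normalizes the modular part using the planted set, so the modular coefficients depend on $\tilde T$. I must confirm the value of $S$ degrades gracefully in $|S\cap\tilde T|/k$, and that the modular part does not itself leak $\tilde T$ through queries. If it does leak, I would first try redefining the modular part to be uniform on $N$, rescaled to be consistent with $\OPT_k$. I would then re-derive the $1-\kappa/e$ null value directly, as $(1-\kappa)\cdot 1+\kappa(1-1/e)$ after normalization. If that also fails, the fallback is to recompute the curvature bound explicitly for the combined function.
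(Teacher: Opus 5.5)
Your overall architecture matches the paper's proof. You add to a curvature-$1$ planted family the uniform linear term $\frac{1-\kappa}{\kappa}|A|\,p_{\max}$. This term depends only on $|A|$, so it leaks nothing about the planted set; it is exactly your fallback, so the obstacle you flag does not arise. You then keep $|N|=\lceil 3p/\varepsilon'\rceil$, reuse the hiding and overlap steps, and read off $(1-\kappa)+\kappa(1-1/e)=1-\kappa/e$.

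The gap is in the curvature-$1$ component you plug in. Steps~2 and~3 need a single planted $k$-set $\tilde T$ such that the planted and null functions agree \emph{exactly} whenever $|S\cap\tilde T|$ is below a threshold. That is property (P4) of the HFWK family. It is what makes the two-regime hypergeometric tail, a statement about $|S\cap\tilde T|$, sufficient. It is also what makes ``the overlap of $S\subseteq P$ is at most the overlap of $P$'' sufficient.

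A symmetry-gap instance, which is how you describe the curvature-$1$ part, is not of this form:
\begin{itemize}
\item What is hidden is a random partition of a blown-up ground set into blocks, not a single planted $k$-set.
\item The planted function agrees with the symmetrized one only when the query's block-fraction vector is near-symmetric.
\item There are many aligned optima rather than one planted set.
\end{itemize}
So the agreement you propose to ``check'' in Step~2 does not hold for that instance. The ``value of a zero-overlap $k$-set'' in Step~3 is undefined, and continuity in a single overlap fraction cannot substitute for it. The paper's related-work discussion makes the same point: symmetry-gap constructions are ill-suited to pruning lower bounds.

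Rescuing your route would need a genuinely different argument, with three ingredients:
\begin{itemize}
\item a base instance with enough blocks to afford $|N|\ge 3p/\varepsilon'$;
\item per-block concentration, with a union bound over blocks, for both the queries and the output $P$;
\item an explicit bound on the smoothed extension at fraction vectors whose coordinates are all at most $\varepsilon'$.
\end{itemize}

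The paper instead applies the augmentation to the HFWK family of Theorem~\ref{thm:monotone-hardness} itself. On the good event, (P4) gives $f_{\tilde T}(S)=f_\varnothing(S)$ for every $k$-subset $S\subseteq P$, and (P3) bounds this by $(1-1/e+O(\varepsilon))\,k\,p_{\max}$. The linear term adds at most $\frac{1-\kappa}{\kappa}k\,p_{\max}$, while $\hat f_{\tilde T}(\tilde T)=k\,p_{\max}/\kappa$. No continuity argument is needed.

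One thing you should state explicitly is the normalization $f_{\tilde T}(\tilde T)=k\max_i f_{\tilde T}(\{i\})$ that Fact~\ref{fact:svw-curvature} assumes. Without it, the linear term is over-weighted relative to $\OPT_k$ and the ratio exceeds $1-\kappa/e$. Your ``after normalization'' step relies on it silently.
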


\begin{proof} See Appendix~\ref{app:curvature-proof}. \end{proof}

\noindent For general monotone submodular functions, greedy achieves $1-1/e$ and the hardness matches it at every pruning budget $p \ge k$.
Table~\ref{tab:monotone} (Appendix~\ref{app:curvature-proof}) records how this compares with the composition benchmark.

\begin{remark}[Fast budget-range pruning]
\label{sec:fast-monotone}
Standard greedy attains $\alpha = 1-1/e$ in $O(nk)$ queries via its prefix property~\cite{NathKuhnle2025}.
A small-budget grid plus a geometric sweep of $O((1+\log k)/\varepsilon)$ calls to the threshold greedy of~\cite{BadanidiyuruVondrak2014} achieves exact-size $(1-1/e-\varepsilon)$-containment for all $k' \le k$ simultaneously, with $|P| = O(k/\varepsilon)$ and $\tilde O(n/\varepsilon^2)$ total queries (Theorem~\ref{thm:fast-monotone}, Appendix~\ref{app:fast-monotone}).
The proof first obtains a slightly oversized witness and then thins it to size $k'$; monotone submodularity loses only an $O(\varepsilon)$ multiplicative factor.
\end{remark}

\section{Non-Monotone Pruning}
\label{sec:nonmonotone}

\paragraph{Why composition fails for non-monotone functions.}
In the monotone setting, composition at least gives the right answer at $p=k$
(even though it fails at larger budgets).
For non-monotone objectives, composition is nearly useless even at $p=k$.
The reason is that non-monotone approximation ratios barely change with density:
the best ratio ranges from $0.401$ at low density to $0.5$ at high density.
Since the numerator and denominator of the composition bound
$\alpha \le \alpha^*_{\mathrm{nm}}(k/n) / \alpha^*_{\mathrm{nm}}(k/p)$
are nearly equal, the bound stays close to~$1$ for moderate pruning
budgets and approaches the vacuous bound~$1$ as $p$ grows.
Entirely new techniques are needed.

\paragraph{Can we prune non-monotone functions?}
We show that $O(k)$ elements suffice for $1/2$-factor containment.
For the top budget $k$, storing the output of the $0.401$-approximation
of~\cite{BuchbinderFeldman2024} gives only a single-budget optimization
baseline; our guarantee is budget-range containment.

\begin{theorem}[Universal $1/2$ Containment]
\label{thm:half-containment}
Let $f$ be a non-negative submodular function on ground set $N$, and let
$k \le n$.
For any $\varepsilon > 0$, there exists a set $G \subseteq N$ with
$|G| = O(k/\varepsilon)$ such that for every $k' \le k$,
(if $|N| < \lceil 1/\varepsilon \rceil k$, take $G = N$;
containment is exact and $|G| < \lceil 1/\varepsilon \rceil k = O(k/\varepsilon)$)
\[
  \max_{T \subseteq G,\, |T| \le k'} f(T)
  \;\ge\; \left(\tfrac{1}{2} - \varepsilon\right) \cdot \OPT_{k'},
\]
where $\OPT_{k'} = \max_{|S| \le k'} f(S)$.
Moreover, $G$ can be constructed in $O(nk/\varepsilon)$ oracle queries
($\ell = \lceil 1/\varepsilon \rceil$ greedy passes, each scanning $O(n)$
elements at $k$ steps).
\end{theorem}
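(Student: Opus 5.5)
The plan is to take $G$ to be the output of \emph{sequential disjoint greedy}. Set $\ell=\lceil 1/\varepsilon\rceil$ and $N_1=N$. For $i=1,\dots,\ell$, run plain greedy for $k$ steps on the residual ground set $N_i$ to obtain $S_i=(s_i^1,\dots,s_i^k)$, then set $N_{i+1}=N_i\setminus S_i$. Finally let $G=\bigcup_i S_i$. This has $|G|\le \ell k=O(k/\varepsilon)$ and uses $O(nk/\varepsilon)$ queries. If $n<\ell k$ we simply take $G=N$. For a fixed $k'\le k$ I will never reason about the full runs, only about the length-$k'$ prefixes $S_i^{k'}$. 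These prefixes are exactly what greedy would output if run with budget $k'$ on $N_i$. The candidate witnesses are $S_1^{k'},\dots,S_\ell^{k'}$ together with $O\cap G$, where $O$ is an optimal $k'$-set; all of them lie in $G$ and have size at most $k'$. This prefix property is what makes the guarantee hold for all budgets simultaneously.

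\textbf{Step 1 (a $1/2$ greedy lemma).} Let $S=(s^1,\dots,s^{k'})$ be the greedy output on a ground set $U$, and let $A\subseteq U$ with $|A|\le k'$. I will show $f(S\cup A)\le 2f(S)$. Enumerate $A\setminus S=\{a_1,\dots,a_m\}$ with $m\le k'$. By submodularity, $f(S\cup A)-f(S)\le\sum_j f(a_j\mid S)\le \sum_j f(a_j\mid S^{j-1})$. Each $a_j$ was available at step $j$, so greedy choice gives $f(a_j\mid S^{j-1})\le f(s^j\mid S^{j-1})$. Summing, the right-hand side is at most $f(S)-f(\emptyset)\le f(S)$. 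One caveat: if greedy stops early because all gains are negative, the same inequality holds with the remaining terms nonpositive. Non-negativity of $f$ absorbs the $f(\emptyset)$ term.

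\textbf{Step 2 (averaging over disjoint runs).} The prefixes $S_i^{k'}$ are pairwise disjoint. The standard lemma for non-negative submodular functions then gives $\sum_{i=1}^{\ell} f(S_i^{k'}\cup O)\ge(\ell-1)f(O)$ (each element outside $O$ lies in at most one $S_i$). Step 1 should be applied with $A_i=O\setminus(S_1\cup\dots\cup S_{i-1})$, since this is the largest part of $O$ that is still available in $N_i$. That yields $f(S_i^{k'}\cup A_i)\le 2f(S_i^{k'})$.

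\textbf{Step 3 (main obstacle).} The gap between $f(S_i^{k'}\cup O)$ and $f(S_i^{k'}\cup A_i)$ is caused by elements of $O$ that earlier runs already captured. The naive bound charges $f(O\cap S_{<i})$ to every later run $i$. Since $f(O\cap S_{<i})\le f(O\cap G)$, this yields only a $1/3$-type guarantee. My first attempt at the sharper bound is a per-element charging argument. An element $o\in O\cap S_j$ contributes at most $f(o\mid S_i^{k'}\cup A_i)$ to run $i>j$. Let $M=\max\{f(O\cap G),\max_i f(S_i^{k'})\}$ denote the best witness value. I will try to show that these contributions, summed over all pairs $(o,i)$, total at most $f(O\cap G)+O(\varepsilon)\,\ell\cdot M$. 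To do this I will telescope over $O\cap G$ ordered by the run index in which each element was captured, and use submodularity to make the marginals decrease as $A_i$ shrinks. If this succeeds, $\ell\cdot 2M+\text{(loss)}\ge(\ell-1)\OPT_{k'}$ rearranges to $M\ge(\tfrac12-O(\varepsilon))\OPT_{k'}$. If it fails, the fallback is a case split on whether $f(O\cap G)\ge(\tfrac12-\varepsilon)\OPT_{k'}$; in the other case I would work with $O'=O\setminus G$, which is available in every run, and replace Step 2 by $\sum_i f(S_i^{k'}\cup O')\ge(\ell-1)f(O')$. I would then need to tighten $f(O)\le f(O')+f(O\cap G)$ using the complementary case hypothesis. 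This balancing of the two cases is the step I expect to require the most care.
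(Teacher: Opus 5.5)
Your Steps 1 and 2 are sound; Step 1 is essentially the paper's partial-greedy domination. Step 3, however, is a genuine gap, and it does not close with the witnesses you chose.

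The charging bound you hope for is false. Take a modular $f$ with distinct positive weights. The first run captures all of $O$, so $A_i=\emptyset$ for every $i\ge 2$, and each later run incurs loss exactly $f(O)$. The total loss is therefore $(\ell-1)\OPT_{k'}$, not $f(O\cap G)+O(\varepsilon)\ell M=O(\OPT_{k'})$.

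The same thing happens more generally. If $O\cap G$ is captured early and contributes roughly additively, each later run pays about $f(O\cap G)$. Step 2 then only yields $M\ge \frac{\ell-1}{3\ell-1}\OPT_{k'}\approx \OPT_{k'}/3$.

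Your fallback lands at the same constant. Write $f(O\cap G)=t\,\OPT_{k'}$. You get $\max_i f(S_i^{k'})\ge\tfrac12(1-1/\ell)(1-t)\OPT_{k'}$, so $\max\{t,(1-t)/2\}$ is only guaranteed to be at least $1/3$. The inequality $f(O)\le f(O\setminus G)+f(O\cap G)$ is an equality for modular pieces, so there is nothing to tighten.

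The missing idea is a hybrid witness that combines the captured optimal elements with a short greedy prefix. The paper's argument runs as follows.

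\begin{itemize}
\item Let $b=|O\setminus G|$, and let $A_i$ be the first $b$ elements of run $i$, not the first $k'$.
\item Average with the full $O$ over these disjoint prefixes. This gives a run with $f(O\cup A)\ge(1-1/\ell)\OPT_{k'}$.
\item Apply your Step 1 only to $\hatO=O\setminus G$, which is available in every run. This gives $f(\hatO\cup A)\le 2f(A)-f(\emptyset)$.
\item Apply submodularity to $X=\hatO\cup A$ and $Y=\dotO\cup A$, where $\dotO=O\cap G$. This gives $f(O\cup A)+f(A)\le 2f(A)-f(\emptyset)+f(\dotO\cup A)$.
\item Hence $\max\{f(A),f(\dotO\cup A)\}\ge\tfrac12(1-1/\ell)\OPT_{k'}$.
\end{itemize}

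The set $\dotO\cup A$ lies in $G$ and has size at most $|O|-b+b\le k'$. This works precisely because the prefix length equals $|\hatO|$; with length-$k'$ prefixes the union would be infeasible.

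In this argument the elements captured by earlier runs never need to be charged. They are absorbed by the lattice inequality, and that is where the factor $1/2$ (rather than $1/3$) comes from.
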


\begin{proof}[Proof sketch (full proof in Appendix~\ref{app:half-containment-proof})]
\emph{Construction.}
Run $\ell$ independent greedy runs $G_1, \ldots, G_\ell$, each of size $k$, on disjoint portions of the ground set.
The pruned set is $G = \bigcup_i G_i$ with $|G| = \ell k$.
Some optimal elements may lie outside $G$; call these $\hatO = O \setminus G$ (``missed'' elements, $b = |\hatO|$), and let $A_i$ be the first $b$ elements of each greedy run~$i$.

\emph{Averaging.}
Because the runs are disjoint, their collective ``damage'' to $f(O)$ is bounded:
the total loss $\sum_i [f(O) - f(O \cup A_i)]$ is at most $f(O)$.
By averaging, some run $i^*$ satisfies $f(O \cup A_{i^*}) \ge (1-1/\ell)f(O)$.

\emph{Greedy domination.}
The key insight is that greedy run $i^*$ had access to the missed elements $\hatO$ as candidates but chose $A_{i^*}$ instead.
So each greedy pick has marginal gain at least as large as the corresponding missed element.
Telescoping this chain gives $2f(A) - f(\emptyset) \ge f(\hatO \cup A)$: the greedy prefix captures at least half the value of the prefix augmented with the missed elements.

\emph{Combining.}
A submodularity decomposition yields $\max(f(A), f(\dotO \cup A)) \ge (1-1/\ell) f(O)/2$, where $\dotO = O \cap G$ is the part of OPT already in $G$.
Both candidates are feasible subsets of $G$ with size at most $k'$,
giving $\alpha \ge (1/2)(1-1/\ell)$.
The argument applies uniformly to every target budget $k' \le k$.
\end{proof}

\begin{remark}[Where the $1/2$ comes from, and why it is robust]
The $1/2$ factor arises from two tight ingredients:
(a) the greedy domination bound $2f(A) - f(\emptyset) \ge f(\hatO \cup A)$,
where the non-monotone marginal chain only gives a factor of~$2$
(vs.\ the $(1-1/e)^{-1}$ one gets for monotone);
and (b) the $\max \ge \mathrm{sum}/2$ split in the final step.
Neither bound depends on $|G|$, so the $1/2$ is robust to increasing the
pruning budget---with pruning budget $p = \ell k$, the
containment factor is $\frac{1}{2}(1-1/\ell)$: $\ell = 2$ gives
$\alpha \ge 1/4$, while $\ell \to \infty$ recovers $1/2$.
Note this is \emph{not} bicriteria: both candidate solutions $A$ and
$\dotO \cup A$ satisfy $|T| \le k$, improving over the bicriteria result
of Feldman and Kuhnle~\cite{FeldmanKuhnle2025} ($(1/2-\varepsilon)$ with
output size $\sim 1.44k$).
The $1/2$ is reminiscent of the unconstrained non-monotone hardness
of~\cite{FeigeMirrokniVondrak2011}, but the analogy is suggestive rather
than exact, and closing the gap to $1-1/e$ remains open
(\S\ref{sec:conclusion}).
\end{remark}

A natural warm-start approach via the BFNS continuous double greedy~\cite{BFNS2014} peaks at $4/9 \approx 0.444$ and cannot beat $1/2$; see Appendix~\ref{app:warm-start} for details.

\paragraph{A second proof via window containment.}
\label{sec:window-greedy}
Sequential disjoint greedy requires $\ell$ independent passes over the ground set.
We now give a structurally distinct proof of $1/2$-containment using a single-pass
\emph{window} algorithm (Algorithm~\ref{alg:window-greedy}, Appendix~\ref{app:window-analysis})
that achieves the same factor with a larger accumulated-window output.
At each step, we maintain a window $W_t$ of the top-$\omega k$ elements by marginal gain
and pick one uniformly at random; the window parameter $\omega \ge 1$
controls the width of exploration (larger $\omega$ means a wider pool of
candidates per step, at the cost of retaining all windows and hence a
larger pruned set of size at most $k+\omega k^2$).
This simultaneously guarantees \emph{domination} (every window element has
marginal gain at least as large as any non-window element, so a random
pick dominates any missed OPT element) and \emph{degradation control}
(the OPT value is not destroyed by the elements we commit to:
$\mathbb{E}[f(O \cup S_m)] \ge e^{-1/\omega} f(O)$).

\begin{theorem}[Window Containment --- Unconditional]
\label{thm:window-containment}
For any non-negative submodular $f$ and window parameter $\omega \ge 1$,
Algorithm~\ref{alg:window-greedy} returns an accumulated-window set
$P_k \subseteq N$ with $|P_k| \le k+\omega k^2$ such that, for every
$k' \le k$,
\[
  \mathbb{E}\!\left[\max_{|T| \le k',\; T \subseteq P_k} f(T)\right]
  \;\ge\; \frac{e^{-1/\omega}}{2} \cdot \OPT_{k'}
  \qquad\text{(asymptotically as $k \to \infty$)}.
\]
The finite-$k$ lower bound from the proof is $(1 - 1/(\omega k))^{k}/2
  = e^{-1/\omega}\bigl(1 - O(1/(\omega^2 k))\bigr)/2$;
in particular, $\alpha \to 1/2$ as $\omega \to \infty$.
Total query complexity: $O(nk)$.
Full proof in Appendix~\ref{app:window-analysis}; see Table~\ref{tab:window} there for the
guarantee as a function of $\omega$ (at $\omega = 5$, $\alpha \ge 0.410 > 0.401$).
\end{theorem}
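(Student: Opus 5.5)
Fix $k'\le k$ and let $O$ be an optimal solution of size at most $k'$. I would analyze the first $k'$ steps of Algorithm~\ref{alg:window-greedy}. Let $S_t=\{s_1,\dots,s_t\}$ be the elements committed so far. At step $t$, $W_t$ is the set of top-$\omega k$ elements by marginal gain $f(\cdot\mid S_{t-1})$, and $s_t$ is drawn uniformly from $W_t$. The pruned set $P_k$ contains $S_k$ and every window, so it contains every $W_t$. Set $m=k'$. The two candidates I would compare are $S_m$ and $(O\cap P_k)\cup S_m$. The second may exceed $k'$ elements, so I would instead take $\dotO\cup A$, as in Theorem~\ref{thm:half-containment}, with a prefix of size $|\hatO|$. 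Concretely, let $\hatO=O\setminus P_k$ (missed OPT elements, $b=|\hatO|$) and $A=S_b$. Then $\dotO\cup A$ has size at most $|\dotO|+b\le k'$ and lies in $P_k$.

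\textbf{Degradation control.} I would show $\mathbb{E}[f(O\cup S_t)]\ge(1-\frac{1}{\omega k})^t f(O)$, in the style of the random-greedy lemma of Buchbinder et al. The element $s_t$ is uniform over $\omega k$ candidates, so conditioned on the past, each element lies in $S_t$ with probability at most $1-(1-\frac1{\omega k})^t$. Because $g(X)=f(O\cup X)$ is non-negative and submodular, the sampling lemma (Feige--Mirrokni--Vondr\'ak) gives the bound. Taking $t\le k$ yields the factor $(1-1/(\omega k))^k\to e^{-1/\omega}$. This plays the role of the averaging step in Theorem~\ref{thm:half-containment}.

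\textbf{Domination.} Each $o\in\hatO$ is in no window, so at every step $t$, every window element $w\in W_t$ satisfies $f(w\mid S_{t-1})\ge f(o\mid S_{t-1})$. Since $s_t$ is random, I would take expectations and pair the $b$ missed elements with the first $b$ picks, giving
\[\mathbb{E}[f(s_t\mid S_{t-1})]\ge \tfrac1b\,\mathbb{E}\Bigl[\textstyle\sum_{o\in\hatO} f(o\mid S_{t-1})\Bigr]\ge \tfrac1b\,\mathbb{E}[f(\hatO\cup S_{t-1})-f(S_{t-1})].\]
One subtlety is that $\hatO$ is a random set that depends on all windows up to step $k$. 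To handle this, I would use the deterministic fact that a missed element is never in any window, which holds pathwise; the inequality then needs no conditioning on $\hatO$. I would then telescope exactly as in the proof of Theorem~\ref{thm:half-containment}. Using $f(\hatO\cup S_{t-1})\ge f(\hatO\cup S_b)-\sum_{j\ge t}$ gains, summing should give $2f(A)-f(\emptyset)\ge f(\hatO\cup A)$ in expectation.

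\textbf{Combining.} By submodularity,
\[f(\dotO\cup A)+f(\hatO\cup A)\ge f(O\cup A)+f(A).\]
Adding the domination bound gives $f(\dotO\cup A)+2f(A)\ge f(O\cup A)+f(A)$, so $\max\{f(A),f(\dotO\cup A)\}\ge f(O\cup A)/2$. Taking expectations and applying degradation control with $t=b\le k$ yields $\frac12(1-\frac1{\omega k})^k\OPT_{k'}$. The size bound $k+\omega k^2$ comes from $k$ windows of size $\omega k$ together with $S_k$, and the $O(nk)$ query count comes from one marginal scan per step.

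\textbf{Main obstacle.} I expect the domination telescoping to be the hardest step, because $b$ is random and the window rule pairs each $o$ only with an average pick, not a fixed one. My first attempt would use the per-step inequality $\mathbb{E}[f(s_t\mid S_{t-1})]\ge\frac1b\sum_o f(o\mid S_{t-1})$ pathwise. I would then sum over $t\le b$, using $f(o\mid S_{t-1})\ge f(o\mid S_b)$, to obtain $f(A)-f(\emptyset)\ge f(\hatO\cup A)-f(A)$. If the randomness of $b$ breaks this, I would instead condition on $b$ by fixing the full trajectory, since the inequality holds on every sample path.
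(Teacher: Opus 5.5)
Your proof is correct, but it is organized differently from the paper's.

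\textbf{How the two routes differ.} The paper pairs online. At step $t$ it retires one unretired OPT element lying outside $S_t\cup W_t$, and it stops at the first step $m$ where no such element remains. So $m$ is a stopping time, the unretired remainder lies in $W_m\cup S_m$, and the one-step bound $\mathbb{E}[f(O\cup S_{t+1})\mid S_t]\ge(1-\frac{1}{\omega k})f(O\cup S_t)$ can be iterated up to $m$. You instead define the missed set offline as $\hatO=O\setminus P_k$ and use the prefix $S_b$ with $b=|\hatO|$.

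\textbf{Domination is not the obstacle.} By definition, $\hatO$ avoids $S_k$ and every window. So the realized pick $s_t$ itself, not just an average pick, dominates every $o\in\hatO$ at every step. A direct one-to-one pairing then gives $2f(S_b)-f(\emptyset)\ge f(\hatO\cup S_b)$ on every sample path. The averaging over $\hatO$ and the detour through $f(\hatO\cup S_{t-1})$ can be dropped.

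\textbf{The real subtlety is degradation, and you should address it explicitly.} The index $b$ depends on windows formed after step $b$, so it is neither fixed nor a stopping time. Substituting $t=b$ into a fixed-$t$ inequality is therefore not valid as written. The tool you already cite fixes this in one line. The random-subset lemma of Buchbinder, Feldman, Naor and Schwartz only needs each element to lie in the random set with probability at most $p$, under arbitrary dependence. If some window is not full, then $P_k=N$ and there is nothing to prove. Otherwise every window has exactly $\omega k$ elements, and since $S_b\subseteq S_k$, each element satisfies $\Pr[u\in S_b]\le 1-(1-\frac{1}{\omega k})^k$. Applying the lemma to $X\mapsto f(O\cup X)$ then gives $\mathbb{E}[f(O\cup S_b)]\ge(1-\frac{1}{\omega k})^k\OPT_{k'}$ directly. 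This is exactly the finite-$k$ bound in the statement.

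\textbf{What each route buys.}
\begin{itemize}
\item Your route needs no stopping-time or optional-stopping argument.
\item Because $\hatO\cap S_k=\emptyset$ in your route, the step $f(o\mid S_{t-1})\ge f(o\mid S_b)$ is always a legitimate use of submodularity. In the paper's online process, a retired $\pi(t)$ can later enter a window and be picked. The corresponding step $f(\pi(t)\mid S_t)\ge f(\pi(t)\mid S_m)=0$ then requires a sign condition that non-monotone $f$ does not automatically supply.
\item The paper's route keeps the prefix length adapted to the filtration. As a result, the elementary per-step conditional bound suffices, and no dependent-sampling lemma is needed.
\end{itemize}
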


\begin{remark}[Window Pruning]
\label{rem:window-pruning}
Window containment is structurally distinct from sequential disjoint greedy
(Theorem~\ref{thm:half-containment}): it makes a single wide-exploration pass
rather than $\ell$ independent narrow searches.
Both converge to the same $1/2$ barrier, suggesting it may be a fundamental limit.
\end{remark}

\paragraph{Knapsack extension.}
\label{sec:knapsack-extension}
The SDG argument extends beyond cardinality constraints to knapsack constraints, where each element $u$ has a positive cost $p_u$ and a feasible solution must have total cost at most~$B$.
(We assume $p_u \le B$ for every element, since items too expensive to use alone can be discarded.)
The key change is replacing greedy---which picks the element with the highest marginal gain---by \emph{density-greedy}, which picks the element with the highest \emph{value-per-unit-cost}: at each step, select $u$ maximizing $f(u \mid A)/p_u$, and keep it if adding it stays within cost $3B$.
Running $\ell$ disjoint density-greedy passes (with dummy zero-value items to fill each run to cost $2B$, following~\cite{FeldmanKuhnle2025}) produces the pruned set.

\begin{theorem}[Knapsack budget-range containment]
\label{thm:knapsack-half-containment}
For any $\varepsilon \in (0, 1/2)$, sequential disjoint density-greedy with
$\ell = \lceil 1/\varepsilon \rceil$ runs produces a set $P \subseteq N$
with $c(P) \le 3 \ell B = O(B/\varepsilon)$ such that for every
$B' \in (0, B]$ there exists $Q \subseteq P$ with
\[
  c(Q) \;\le\; B', \qquad
  f(Q) \;\ge\; \left(\tfrac{1}{2} - \varepsilon\right) \OPT_{B'}
  \;-\; \Delta_{B'},
\]
where $\Delta_{B'} \ge 0$ is a boundary-fragment loss that arises when the
density-greedy prefix crosses the budget threshold mid-item.  When no item
is split, $\Delta_{B'} = 0$.
Total query complexity: $O(n^2 \ell) = O(n^2/\varepsilon)$
($\ell$ adaptive density-greedy runs, each scanning $O(n)$ candidates per step).
\end{theorem}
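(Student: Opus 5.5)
The plan is to transfer the proof of Theorem~\ref{thm:half-containment} to the knapsack setting, replacing greedy with density-greedy and prefix lengths with cost thresholds. \emph{Construction:} run density-greedy $\ell$ times on successively disjoint residual ground sets. Run $i$ is $N\setminus(G_1\cup\dots\cup G_{i-1})$. Each run keeps an item only if cumulative cost stays $\le 3B$, and is padded with zero-value dummies to cost at least $2B$ (as in~\cite{FeldmanKuhnle2025}), so $c(P)\le 3\ell B$. Fix $B'\le B$ and an optimal $O$ for budget $B'$. Split $O=\dotO\cup\hatO$ with $\dotO=O\cap P$ and $\hatO=O\setminus P$. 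Let $\beta=c(\hatO)\le B'$. For each run $i$, let $A_i$ be the maximal prefix of $G_i$ with $c(A_i)\le\beta$, and let $u_i$ be the next item, which may straddle cost $\beta$. The fractional piece of $u_i$ is the boundary fragment; its value, bounded by the density times the leftover cost (equivalently by the mass lost in dropping $u_i$), defines $\Delta_{B'}$. When the prefix lands exactly on $\beta$ this term vanishes.

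\emph{Averaging.} The prefixes $A_i$ are disjoint, so submodularity and nonnegativity give $\sum_i [f(O)-f(O\cup A_i)]\le f(O)$. The argument is the standard one: telescoping $f(O\cup A_1\cup\dots\cup A_j)$ and using $f\ge 0$. Hence some $i^*$ has $f(O\cup A_{i^*})\ge(1-1/\ell)f(O)$.

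\emph{Density-greedy domination.} Every element of $\hatO$ lies outside $P$. It was therefore available to run $i^*$ at every step, and was never rejected by the $3B$ cap, because the run reaches cost $\ge 2B\ge B+\beta$ only after the prefix. Hence each pick $a$ with current prefix $A$ satisfies $f(a\mid A)/p_a\ge f(x\mid A)/p_x$ for all $x\in\hatO$. Integrating marginal densities along the fractional prefix of cost $\beta=c(\hatO)$ gives
\[
f(\hatO\mid A)\le \sum_{x\in\hatO}f(x\mid A)\le \beta\cdot\max_x \tfrac{f(x\mid A)}{p_x}.
\]
The fractional greedy prefix of cost $\beta$ therefore gains at least $f(\hatO\cup A)-f(A)$, which is the cost-weighted analogue of the cardinality chain. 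This yields $2f(\tilde A)-f(\emptyset)\ge f(\hatO\cup A)$, where $\tilde A$ is the fractional prefix. Replacing $\tilde A$ by the integral $A_{i^*}$ costs at most the fragment value $\Delta_{B'}$, which gives $2f(A_{i^*})+2\Delta_{B'}\ge f(\hatO\cup A_{i^*})$ up to the $f(\emptyset)$ bookkeeping.

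\emph{Combining.} I reuse the submodularity decomposition from Theorem~\ref{thm:half-containment}:
\[
f(A)+f(\dotO\cup A)\ge f(O\cup A)+f(A)-f(\hatO\cup A)+f(A)\ge f(O\cup A)-2\Delta_{B'},
\]
with $f(\emptyset)\ge0$ absorbed as before. This gives $\max\{f(A_{i^*}),f(\dotO\cup A_{i^*})\}\ge\frac12(1-1/\ell)f(O)-\Delta_{B'}$. Both candidates cost at most $c(\dotO)+\beta=c(O)\le B'$, which fixes $Q$. With $\ell=\lceil1/\varepsilon\rceil$ we get $(\tfrac12-\varepsilon)\OPT_{B'}$. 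The query count is $\ell$ runs of at most $n$ steps, each scanning $O(n)$ candidates.

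The main obstacle is the domination step. In the cardinality case, picks and missed elements are matched one-for-one; under knapsack costs that pairing breaks, so the chain must be run in continuous cost time. I would handle this via the standard ``density integral'' (each unit of cost spent by greedy gains at least $f(\hatO\mid A)/\beta$ per unit). Then I would carefully isolate the final straddling item as $\Delta_{B'}$, and verify that the $3B$ cap and dummy padding never let a missed element be skipped for cost reasons.
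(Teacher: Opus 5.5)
Your proof is correct, but it takes a genuinely lighter route than the paper.

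\textbf{The paper's route.} It splits the boundary item in every run and averages over the fractional prefixes $A'_i$ of cost exactly $\beta$ under the split function $f^*$. This imports split-submodularity and density preservation from Feldman--Kuhnle. It then maps $\hatO$ into $A'_{i^\star}$ by the $\theta$-construction, splitting source elements as well, to get $f^*(\hatO^\uparrow\cup A)\le 2f^*(A)-f(\emptyset)$. Only at the end does it delete the fragment, paying $\gamma_{i^\star}f(\{a^{(i^\star)}_{j^\star}\})$.

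\textbf{Your route.} You average over the \emph{integral} prefixes directly in $f$; they are disjoint, so Fact~\ref{fact:submod-disjoint} applies unchanged. You then replace the element-level $\theta$-pairing by one aggregate charge. Write $A=A_{i^*}$, $u=u_{i^*}$, $r=\beta-c(A)$ and $\delta=f(u\mid A)/p_u$. Every prefix step, including the straddling one, has density at least $\max_{x\in\hatO}f(x\mid A)/p_x$. This follows from candidacy plus submodularity from that step's running prefix to the final prefix. State this explicitly, since you use $A$ for both the running and the final prefix. Hence
\[
  f(A)-f(\emptyset)+r\delta\;\ge\;\beta\max_{x\in\hatO}\frac{f(x\mid A)}{p_x}\;\ge\;\sum_{x\in\hatO}f(x\mid A)\;\ge\;f(\hatO\cup A)-f(A).
\]
Only the total cost-weighted charge matters, so no matching is needed.

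\textbf{What your route buys.} You need no split function, no $\theta$-construction and no termination argument. Your loss is the fragment's \emph{marginal} $\gamma f(u\mid A)$ with $\gamma=r/p_u$, not the singleton bound $\gamma f(\{u\})$. That marginal is exactly the quantity $L_{B'}$ the paper has to recover separately in the proof of Corollary~\ref{cor:knapsack-small-items}. Your bookkeeping is also looser than necessary. The display gives $f(\hatO\cup A)\le 2f(A)-f(\emptyset)+r\delta$ with a single copy of the loss, so the final bound is actually $(\tfrac12-\varepsilon)\OPT_{B'}-\tfrac12\Delta_{B'}$. Your check that the $3B$ cap rejects nothing before the prefix ends is a point the paper leaves implicit.

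\textbf{What the paper's route buys.} Mainly uniformity with the Feldman--Kuhnle bicriteria machinery: every object is an honest set of cost exactly $\beta$, so the cardinality proof transfers line by line. This theorem does not need that.

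\textbf{One fix.} Define $\Delta_{B'}=\max\{0,r\delta\}$, or note that the zero-value dummies force $\delta\ge 0$. Your phrase ``mass lost in dropping $u_i$'' could be misread as the full $f(u_i\mid A_i)$.
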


The boundary term $\Delta_{B'}$ arises when the density-greedy prefix ``overshoots'' the budget---the last item added may cost more than the remaining budget allows.
In that case, only a fraction $\gamma$ of the item's value is captured, and the remaining value is lost.
Formally, $\Delta_{B'} = \gamma_{i^\star} f(\{a_{j^\star}^{(i^\star)}\})$,
where $a_{j^\star}^{(i^\star)}$ is the item that crosses the budget boundary;
see Appendix~\ref{app:knapsack-details} for precise definitions.
The split item may cost more than~$B'$ itself, so $\Delta_{B'}$ cannot be bounded simply by the value of the best affordable item.

\begin{corollary}[Small-item knapsack containment]
\label{cor:knapsack-small-items}
Fix a budget range $\mathcal{B}\subseteq(0,B]$ and let
$B_{\min}=\inf \mathcal{B}>0$.
Run sequential disjoint density-greedy with a sufficiently small internal
accuracy parameter.  Suppose that the item sizes are small at the density
prefix scale used in the proof: for every $B'\in\mathcal{B}$, if the proof for
$B'$ selects a split boundary item $a_{j^\star}^{(i^\star)}$ with residual
outside-core cost $\beta=c(O\setminus P)$, then
\[
  p_{a_{j^\star}^{(i^\star)}} \le \eta \beta .
\]
In particular, this is the usual no-large-items regime when the relevant
residual prefixes have cost comparable to the query budget; over a fixed range
it is ensured by $p_{\max}\le \eta B_{\min}$ together with
$\beta=\Omega(B_{\min})$ for the budgets under consideration.
There is a universal constant $C$ such that, if $\eta\le \varepsilon/C$, then
for every $B'\in\mathcal{B}$ there exists $Q\subseteq P$ with
\[
  c(Q)\le B',
  \qquad
  f(Q)\ge \left(\tfrac12-\varepsilon\right)\OPT_{B'} .
\]
\end{corollary}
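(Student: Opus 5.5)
The plan is to start from Theorem~\ref{thm:knapsack-half-containment}, which already supplies, for every $B'\in\mathcal{B}$, a set $Q\subseteq P$ with $c(Q)\le B'$ and
\[
  f(Q)\ge \left(\tfrac12-\varepsilon'\right)\OPT_{B'}-\Delta_{B'},
  \qquad
  \Delta_{B'}=\gamma_{i^\star} f\bigl(\{a_{j^\star}^{(i^\star)}\}\bigr).
\]
Here $\varepsilon'$ is the internal accuracy parameter. We run with $\varepsilon'=\varepsilon/2$. If no item is split, then $\Delta_{B'}=0$ and there is nothing to prove. So the whole task reduces to showing $\Delta_{B'}\le (\varepsilon/2)\OPT_{B'}$ under the hypothesis $p_{a}\le\eta\beta$ with $\eta\le\varepsilon/C$, where $a=a_{j^\star}^{(i^\star)}$ is the split item.

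To bound $\Delta_{B'}$, we do not use $f(\{a\})$ directly, since $a$ may be unaffordable. Instead we use the density-greedy selection rule at the step where $a$ was chosen. At that step, run $i^\star$ had the residual outside-core elements $O\setminus P$ available as candidates. Write $A$ for the prefix before $a$. Then
\[
  \frac{f(a\mid A)}{p_a}\;\ge\;\frac{f(o\mid A)}{p_o}\qquad\text{for every } o\in O\setminus P .
\]
Averaging over $o$ and applying submodularity gives
\[
  f(a\mid A)\;\le\;\frac{p_a}{\beta}\,\max\Bigl\{\,f\bigl(A\cup(O\setminus P)\bigr)-f(A),\ \text{(the density scale)}\Bigr\}.
\]
More usefully, this is the same per-step density inequality that drives the telescoping in the proof of Theorem~\ref{thm:knapsack-half-containment}, where the gain per unit cost is at most of order $f(O\cup A)/\beta\lesssim 2\OPT_{B'}/\beta$. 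Next we note that the fragment lost is a fraction $\gamma\le1$ of the marginal gain of the split item along the prefix, and the marginal gain is what enters the telescoped chain, not the singleton value. I would first recheck in the appendix that $\Delta_{B'}$ can be replaced by $\gamma_{i^\star}f(a\mid A)$; if only the singleton appears, I would redo that one step using marginals. With this, the per-unit density bound yields
\[
  \Delta_{B'}\;\le\;p_a\cdot\frac{c_0\,\OPT_{B'}}{\beta}\;\le\;c_0\,\eta\,\OPT_{B'},
\]
for a universal $c_0$ (I expect $c_0\le 4$, coming from the factor $2$ in greedy domination and from the $(1-1/\ell)$ averaging).

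Taking $C=2c_0$ then gives $\Delta_{B'}\le(\varepsilon/2)\OPT_{B'}$, and combining this with the $\varepsilon/2$ internal loss yields $(\tfrac12-\varepsilon)\OPT_{B'}$. For the ``in particular'' clause, $p_{\max}\le\eta B_{\min}$ together with $\beta\ge c_1 B_{\min}$ gives $p_a\le(\eta/c_1)\beta$, so the constant $c_1$ is absorbed into $C$.

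The main obstacle is the density bound $f(a\mid A)/p_a\lesssim \OPT_{B'}/\beta$ in the non-monotone setting. There $f(A\cup(O\setminus P))$ is not obviously bounded by a multiple of $\OPT_{B'}$, because the union may exceed the budget. My first attempt would bound the average density of $O\setminus P$ by $[f(A\cup \hatO)-f(A)]/\beta$ and then use $f(A\cup\hatO)\le 2f(A)-f(\emptyset)$, the knapsack analogue of greedy domination. This gives a gain of at most $f(A)$, and $f(A)$ is itself comparable to the competitor value of a feasible prefix, hence at most $O(\OPT_{B'})$ after the prefix is truncated at cost $B'$.
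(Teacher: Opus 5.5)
Your proposal has a genuine gap at exactly the step you flag as the main obstacle: it never produces an \emph{upper} bound on the density of the boundary item $a=a_{j^\star}^{(i^\star)}$.

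The density-greedy rule gives $f(a\mid\widehat A)/p_a\ge f(o\mid\widehat A)/p_o$ for every $o\in\hatO$. That is a \emph{lower} bound on the density of $a$. Averaging over $\hatO$ and invoking the domination bound $f^*(\hatO^\uparrow\cup A)\le 2f^*(A)-f(\emptyset)$ controls the average density of the \emph{missed} OPT elements. That average sits below the density of $a$, so it says nothing about $a$. Consequently, neither your displayed bound $f(a\mid A)\le\frac{p_a}{\beta}\max\{\cdots\}$ nor your estimate $\Delta_{B'}\le p_a c_0\OPT_{B'}/\beta$ is justified. The telescoping in Theorem~\ref{thm:knapsack-half-containment} does not contain a ``gain per unit cost at most $f(O\cup A)/\beta$'' statement either. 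It only uses the greedy inequality in the direction (missed OPT element) $\le$ (greedy element).

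The missing idea is that densities along a density-greedy run are nonincreasing: $f(a_{j+1}\mid A_j)/p_{a_{j+1}}\le f(a_{j+1}\mid A_{j-1})/p_{a_{j+1}}\le f(a_j\mid A_{j-1})/p_{a_j}$, by submodularity and the selection rule at step $j$.
\begin{itemize}
\item By density preservation, the included fragment has the same density $d=f(a\mid\widehat A)/p_a$ as $a$. So $d$ is the smallest density in the split prefix $A$, which has cost exactly $\beta$, and hence $d\le (f^*(A)-f(\emptyset))/\beta$. You may assume $d\ge 0$, since otherwise deleting the fragment only helps.
\item Your replacement of the singleton term by the marginal loss $L_{B'}=\gamma_{i^\star}f(a\mid\widehat A)=r_{i^\star}d$ is the right move and matches the paper. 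Combined with $r_{i^\star}\le p_a\le\eta\beta$, it gives $L_{B'}\le\eta f^*(A)$. The loss for the other candidate $\dotO^\uparrow\cup A$ is at most $L_{B'}$ by submodularity.
\item You do not need to show $f(A)=O(\OPT_{B'})$ by ``truncating the prefix.'' The paper absorbs the loss multiplicatively. Set $M=\max\{f^*(A),f^*(\dotO^\uparrow\cup A)\}\ge f^*(A)$. Then one of $\widehat A$ and $\dotO\cup\widehat A$ has value at least $(1-\eta)M\ge(1-\eta)(\tfrac12-\varepsilon_0)\OPT_{B'}$.
\item If you prefer your additive route, use $f^*(A)=f(\widehat A)+L_{B'}$ with $\widehat A$ feasible. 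This gives $L_{B'}\le\frac{\eta}{1-\eta}\OPT_{B'}$.
\end{itemize}
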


The proof adapts the sequential disjoint greedy argument of Theorem~\ref{thm:half-containment} to density-greedy with item splitting,
using the machinery of~\cite{FeldmanKuhnle2025}.
The proof sketch, simultaneous budget-range guarantee, knapsack hardness argument, and boundary-term discussion appear in Appendix~\ref{app:knapsack-details};
the full proof is in Appendix~\ref{app:knapsack-proof}.

\section{Experiments}
\label{sec:experiments}

In this section, we ask whether containment cores are useful preprocessing
objects.  We show three messages.  First, on MaxCut, our pruning algorithms
usually retain an optimal or near-optimal $k$-cut and can shrink an exact IP
solve by orders of magnitude.  Second, on HotpotQA, LLM context selection is
genuinely non-monotone: adding passages can hurt answer quality, and pruning a
non-monotone proxy improves downstream F1/EM while using fewer passages.
Third, on MuSiQue, the task has the containment structure pruning is designed
for: keep evidence-bearing passages in a reduced pool while filtering
distractors. We show that relevance-gated submodular diversity improves gold-passage recall
over relevance ranking alone.

Section~\ref{sec:exp-maxcut} studies MaxCut on random and real-world graphs,
including exact solving on pruned instances.  Section~\ref{sec:exp-llm} studies
LLM context selection.  Appendices~\ref{app:snap-results} and~\ref{app:separation}
report COMBHelper details and synthetic separations.
Unless stated otherwise, an algorithm receives a ground set~$\N$,
cardinality~$k$, and pruning budget $p = \omega k$, outputs
$P$ with $|P| \le p$, and then a $k$-subset is extracted from~$P$
by greedy or by exact IP.  When exact optimization is feasible, the containment
ratio is $\alpha = f(S^*_P)/\OPT_k$, where $S^*_P$ is the best $k$-set inside
the pruned universe.  When OPT is unavailable, we explicitly say which
full-ground-set reference is used.

\paragraph{Algorithms.}
We compare seven methods:
\textbf{SeqDisjoint} (Theorem~\ref{thm:half-containment}): $\omega$ disjoint greedy runs of size~$k$;
\textbf{WindowMax} / \textbf{WindowRand}: practical window heuristics with
argmax / random selection from top-$\omega k$ marginals;
\textbf{QuickPrune}~\cite{NathKuhnle2025}: streaming threshold pruning (monotone only---no guarantee for non-monotone);
\textbf{StdGreedy}: standard greedy for $p$ steps;
\textbf{Random}: uniform random $p$ elements;
\textbf{COMBHelper}~\cite{TianEtAl2024}: a learned GNN pruning baseline,
adapted to MaxCut; architecture, training, seed variance, and
out-of-distribution behavior are deferred to Appendix~\ref{app:snap-results}.

\subsection{MaxCut on random and real-world graphs}
\label{sec:exp-maxcut}

MaxCut asks whether pruning can preserve good solutions for a standard
non-monotone objective.  The objective is
$f(S) = |\{(u,v) \in E : u \in S, v \notin S\}|$.
It is non-monotone because adding a vertex can remove edges from the cut.
We test on Erd\H{o}s--R\'{e}nyi random graphs $G(n,m)$,
planted partition graphs, and SNAP social networks~(Facebook, $n=4039$, $m=88{,}234$).
For the small instances in Table~\ref{tab:maxcut-small}, OPT is computed
exactly via integer programming (CBC).  For the larger SNAP instances, OPT is
estimated by greedy on the full ground set, so reported $\alpha$ values are
upper bounds on the true containment ratio rather than certified guarantees.
The table also includes COMBHelper on its in-distribution Erd\H{o}s--R\'{e}nyi
setting; we defer its teacher/student split and transfer behavior to the
appendix.

\begin{table}[h]
\centering
\small
\begin{tabular}{@{}l ccc ccc@{}}
\toprule
& \multicolumn{3}{c}{Random $G(100, 1000)$, $k=15$}
& \multicolumn{3}{c}{Planted$(200, 20)$} \\
\cmidrule(lr){2-4} \cmidrule(lr){5-7}
Algorithm & $\omega=2$ & $\omega=3$ & $\omega=5$
          & $\omega=2$ & $\omega=3$ & $\omega=5$ \\
\midrule
SeqDisjoint & 0.997 & \textbf{1.000} & \textbf{1.000}
            & \textbf{1.000} & \textbf{1.000} & \textbf{1.000} \\
WindowMax   & \textbf{1.000} & 0.999 & \textbf{1.000}
            & \textbf{1.000} & \textbf{1.000} & \textbf{1.000} \\
WindowRand  & 0.975 & 0.967 & 0.984
            & 0.988 & 0.983 & 0.982 \\
QuickPrune  & 0.868 & 0.878 & 0.853
            & 0.865 & 0.876 & 0.874 \\
StdGreedy   & 0.997 & \textbf{1.000} & \textbf{1.000}
            & \textbf{1.000} & \textbf{1.000} & \textbf{1.000} \\
COMBHelper$^\dagger$ & 0.997 & \textbf{1.000} & \textbf{1.000}
            & -- & -- & -- \\
Random      & 0.877 & 0.921 & 0.966
            & 0.699 & 0.805 & 0.902 \\
\bottomrule
\end{tabular}
\caption{MaxCut containment on small instances, where exact IP gives the true
denominator $\OPT_k$.
SeqDisjoint (Thm~\ref{thm:half-containment}) and WindowMax
consistently achieve $\alpha \ge 0.99$, far above the worst-case $1/2$
guarantee.
Results are stable across graph topologies: over 5~ER seeds,
SeqDisjoint and WindowMax have mean $\alpha \ge 0.996$
(std $\le 0.004$), while the random baseline has std in
$[0.013,0.040]$.
COMBHelper$^\dagger$ is the adapted learned baseline on the ER setting; teacher
and student give the same values there, so we show one compact row and defer
the full learned-pruning comparison to Appendix~\ref{app:snap-results}.
QuickPrune, designed for monotone functions, plateaus at $\alpha \approx 0.87$
regardless of budget: once the non-monotone marginal sequence turns negative,
its streaming threshold accepts too few elements
($|P| \approx 18\text{--}27 \ll p$).}
\label{tab:maxcut-small}
\end{table}

\paragraph{SNAP social networks and exact solving.}
The same pattern persists on larger graphs, but the interpretation changes.
On Facebook ($n=4039$) and Wiki-Vote ($n=7115$) with $k=50$, exact OPT is
unavailable; using the full-ground-set greedy solution as the reference,
SeqDisjoint, WindowMax, and StdGreedy all obtain ratio $1.000$ at every tested
budget, while QuickPrune plateaus near $0.87$ (details in
Appendix~\ref{app:snap-results}).  These are empirical ratios against the greedy
reference, not certified OPT ratios.

Pruning also reduces solver size.  On a medium instance $G(300,4500)$ with
$k=25$, where full IP still finishes, solving MaxCut on the full graph takes
120\,s.  After SeqDisjoint pruning to $p=50$ vertices ($\omega=2$), the same IP
solve on the reduced instance takes 0.19\,s---a \textbf{${\approx}620\times$ speedup}
with $\alpha=1.000$.

\paragraph{Baselines and separations.}
COMBHelper is competitive in distribution, but it is a learned baseline that
requires exact training labels; Appendix~\ref{app:snap-results} gives the
teacher/student results, seed-variance check, and out-of-distribution SNAP
comparison.  The appendix also reports synthetic interference-coverage
instances where SeqDisjoint contains the true optimum more often than a single
greedy run (78\% vs.\ 60\% at $n=20$, $k=3$, $\omega=2$;
Appendix~\ref{app:separation}).

\subsection{Non-monotone LLM context selection}
\label{sec:exp-llm}

This subsection separates two roles for LLM context selection.  HotpotQA tests
whether context selection is genuinely non-monotone and whether a cheap
submodular proxy can guide pruning toward higher-quality reader inputs.
MuSiQue then tests the containment claim: when supporting evidence is dispersed
across hops, a pruner should keep evidence-bearing passages available for a
downstream selector or reader rather than output the final context itself.

Adding context passages to an LLM prompt initially improves answer quality
but degrades it beyond a threshold---the ``lost in the middle''
effect~\cite{LiuLinHewitt2024,DuEtAl2025}.
This makes context selection a natural source of non-monotone objectives:
more candidate passages can improve coverage, but more prompt context can
also dilute attention or introduce distracting evidence.

\paragraph{A non-monotone submodular proxy.}
We introduce a \emph{non-monotone submodular proxy}
$\hat{f}(S) = \mathrm{FL}(S) - \theta(|S|)$, where $\mathrm{FL}$ is
facility location on TF-IDF passage similarities and $\theta$ is a convex,
non-decreasing penalty fitted via isotonic regression from LLM evaluation
data.
Departing from prior submodular-LLM work
\cite{KumariWangDasBilmes2024, KumariWangZhouBilmes2024,
AgarwalKillamsetty2025, LinBilmes2011},
which uses monotone facility location exclusively, our proxy captures the
empirically observed quality degradation from context overload.
The function $\hat{f}$ is submodular (proof in Appendix~\ref{app:llm-details}).

\paragraph{HotpotQA proxy containment.}
We test on HotpotQA (distractor setting) with expanded ground sets
of $n = 50$ ($k = 3$) and $n = 500$ ($k = 5$);
setup details are in Appendix~\ref{app:llm-details}.
On the fitted proxy, SeqDisjoint and WindowMax achieve $\alpha = 1.0$ at every budget
(Table~\ref{tab:llm-proxy}, Appendix~\ref{app:snap-results});
a larger $n = 500$ FL-only stress test in Appendix~\ref{app:snap-results}
shows that $50\times$ compression preserves FL containment $\alpha = 1.000$.

\paragraph{Non-monotonicity prevalence and LLM validation.}
Among 200 HotpotQA questions evaluated with a local 120B-parameter LLM,
$77/199=38.7\%$ have some shorter prefix with higher F1 than the full
10-passage context, confirming that non-monotonicity is common enough that
monotone-only pruning can miss the relevant regime.
To verify that proxy containment translates to real answer quality,
we evaluate the final $k$-subsets with the LLM (Table~\ref{tab:llm-validation}).
At $\omega = 2$ ($p = 6$), proxy-guided greedy selects only $k = 3$~passages
yet achieves F1/EM $=0.599/0.42$, surpassing the all-original baseline
(F1/EM $=0.556/0.40$, all 10~passages).  This supports the proxy claim:
pruning can remove distractors while preserving enough evidence for the reader.
QuickPrune, which collapses on non-monotone inputs,
scores only F1 $= 0.262$---no better than random.

\paragraph{Relevance-gated passage selection (MuSiQue).}
HotpotQA supports the proxy claim, but its gold passages are highly query-salient
(top-$k$ TF-IDF already achieves ${\sim}80\%$ gold recall).
MuSiQue~\cite{TrivediMultiHop2022} stresses the harder setting: gold passages
span different Wikipedia entities (BM25@5 gold recall only $39\%$), and
48\% of questions require 3--4 supporting passages.
We define a \emph{restricted facility-location} (RFL) function that
diversifies within the query-relevant neighborhood
(Appendix~\ref{app:llm-details}).
On a 50-question pilot, RFL with embedding-based
gating achieves gold-passage recall of $0.740$ at $k=10$,
versus $0.648$ for top-$k$ TF-IDF ($+9.2$ percentage points; bootstrap
95\% CI $[+2.8,+15.5]$ percentage points, $n=50$); per-$k$ breakdowns are in
Appendix~\ref{app:llm-details}.
These experiments suggest that non-monotone submodular pruning is a
viable approach to LLM context selection; a comprehensive study across
tasks and reader models is future work
(downstream F1 and reader saturation in Appendix~\ref{app:reader-saturation}).

\begin{table}[h]
\centering
\small
\begin{tabular}{@{}l ccc cc@{}}
\toprule
Algorithm & F1 & EM & Proxy $\alpha$
          & Gold/$k$ & $|P|$ \\
\midrule
\emph{Gold only} & 0.641 & 0.44 & --- & 2.00 & 2 \\
\emph{All original} & 0.556 & 0.40 & --- & 2.00 & 10 \\
\midrule
StdGreedy ($\omega\!=\!2$) & \textbf{0.599} & \textbf{0.42} & 1.00 & 1.02 & 6 \\
SeqDisjoint ($\omega\!=\!2$) & 0.595 & 0.40 & 1.00 & 1.02 & 6 \\
SeqDisjoint ($\omega\!=\!5$) & 0.509 & 0.36 & 1.00 & 1.02 & 15 \\
WindowMax ($\omega\!=\!5$) & 0.431 & 0.26 & 1.00 & 0.74 & 5.2 \\
QuickPrune & 0.262 & 0.12 & 0.84 & 0.36 & 1.1 \\
Random & 0.274 & 0.16 & 0.79 & 0.38 & 15 \\
\bottomrule
\end{tabular}
\caption{LLM validation of proxy-guided pruning on HotpotQA
($n = 50$, $k = 3$, 50~questions).
Proxy $\alpha$ is measured on $\hat f$; F1/EM are computed by the LLM on the
selected $k$-subsets, and Gold/$k$ is the average number of gold passages
selected.
Proxy-guided pruning at $\omega = 2$ matches or exceeds the full-context baseline
while using only $k = 3$ passages.
The gold-only oracle (F1 $= 0.641 > 0.556$) confirms non-monotonicity of the
true quality function.}
\label{tab:llm-validation}
\end{table}

\section{Conclusion}
\label{sec:conclusion}
In this paper, we make substantial progress towards resolving the complexity of pruning for submodular
maximization. 
Monotone containment is tight at $1-1/e$: greedy achieves it and no algorithm can beat it, regardless of pruning budget (Theorem~\ref{thm:monotone-hardness}).
For non-monotone functions, two independent algorithms achieve $1/2-\varepsilon$ containment under cardinality constraints (Theorems~\ref{thm:half-containment},~\ref{thm:window-containment}), with a knapsack extension (Theorem~\ref{thm:knapsack-half-containment}).
Since $1/2$ exceeds the best non-monotone maximization ratio ($0.401$) and the known hardness threshold ($0.478$), pruning is provably easier than optimization.
Several open questions remain: the gap from $[1/2, 1 - 1/e]$ in the non-monotone setting, and tight algorithms for more sophisticated constraints in the monotone setting. 
Empirically, pruning let an exact IP solver run ${\approx}620\times$ faster on a reduced MaxCut instance; LLM context selection exhibited genuinely non-monotone behavior in the HotpotQA study ($38.7\%$ of questions), and a relevance-gated submodular proxy improved gold-passage recall by $+9.2$ percentage points on a 50-question MuSiQue pilot.

\newpage
\appendix

\section{Additional Related Work}
\label{sec:related}

\paragraph{Ground-set pruning.}
Nath and Kuhnle~\cite{NathKuhnle2025} introduced containment pruning
and gave the first constant-factor guarantee
($\alpha \ge 1/24 - \varepsilon$) for monotone, $\gamma$-weakly submodular
functions under a knapsack constraint via \textsc{QuickPrune}.
Zhou et al.~\cite{ZhouOuyangBilmes2017} gave a heuristic
ground-set reduction via ``pruned submodularity graphs,'' but without
containment guarantees.
Several learned approaches---GCOMB~\cite{Manchanda2020},
LeNSE~\cite{IrelandMontana2022}, and
COMBHelper~\cite{TianEtAl2024}---train pruning policies from data
but likewise lack approximation guarantees; Nath and
Kuhnle showed that \textsc{QuickPrune} empirically dominates them
on the monotone benchmarks studied there.

\paragraph{Bicriteria vs.\ containment.}
The bicriteria framework of Feldman and
Kuhnle~\cite{FeldmanKuhnle2025} outputs $1.44k$ elements containing
a $(1-1/e)$-approximate \emph{solution}.
This differs from containment in three respects:
(i)~bicriteria finds a specific approximate solution, whereas
containment guarantees that near-optimal solutions \emph{exist} in the
pruned set for downstream re-optimization;
(ii)~bicriteria relaxes the budget ($1.44k$ vs.\ strict $k$), whereas
containment enforces $|T| \le k'$;
(iii)~bicriteria targets a single budget, whereas our model requires
containment at all $k' \le k$ simultaneously.
The two models are complementary: bicriteria is stronger when one
approximate solution suffices, containment is stronger when the pruned
set must support multiple downstream tasks or budgets.

\paragraph{Coresets and distributed reduction.}
Composable coresets~\cite{MirrokniZadimoghaddam2015,IndykMahabadiMirrokni2014}
reduce the ground set to $O(k)$ elements for distributed submodular
maximization.
Mirrokni and Zadimoghaddam~\cite{MirrokniZadimoghaddam2015} achieve
$1/3$-approximation for monotone functions by merging local greedy
summaries---this can be viewed as fixed-budget containment.
However, their guarantee holds for a single fixed $k$ and monotone
objectives only; our budget-range formulation requires containment at all
$k' \le k$ simultaneously, our $1/2$ containment handles non-monotone
functions, and our monotone bound $1-1/e$ is strictly stronger than~$1/3$.

\paragraph{Two-stage and streaming approaches.}
Balkanski et al.~\cite{BalkanskiMirzasoleimanSinger2016} study a
two-stage model achieving $\approx (1-1/e)$-approximation,
but their output is a \emph{solution}, not a pruned universe with
containment guarantees.
Streaming algorithms similarly produce approximate solutions rather than
pruned universes amenable to re-optimization:
Sieve-Streaming~\cite{BadanidiyuruMirzasoleiman2014} achieves
$(1/2-\varepsilon)$, and the optimal streaming bound was established by
Alaluf et al.~\cite{AlalufEneFeldman2019}.
Wei et al.~\cite{WeiBilmes2014} use cheap surrogate submodular functions
for multi-round data selection---the ``optimize a proxy, then solve
downstream'' paradigm that our LLM experiments instantiate with the
non-monotone proxy~$\hat f$.

\paragraph{Hardness constructions.}
Our monotone lower bound uses the planted-set construction of
Harshaw et al.~\cite{HarshawFeldmanWardKarbasi2019},
which hides a planted optimal set in the ground set: until the query set
crosses an overlap threshold, the function is indistinguishable from a
null instance with no hidden structure.
This makes planted-set constructions well-suited to pruning hardness,
because any algorithm's output looks random with respect to the hidden
optimum.
The key technical insight for our result is that scaling
$|N| \propto p$ makes the expected overlap between a $p$-element pruned
set and the planted optimum \emph{independent of $p$}, so extra budget
provably does not help.
This differs fundamentally from the \emph{symmetry gap}
technique~\cite{Vondrak2013,GharanVondrak2011,Qi2022}---the standard
tool for submodular maximization hardness---which exploits
group-invariance: the hard instance is a symmetrized function whose value
depends only on a coarse orbit statistic (roughly, how the queried set
distributes across a hidden partition of $N$), so \emph{every}
orbit-balanced subset of sufficient size is near-optimal.
Any pruning algorithm that retains an orbit-balanced slice of $N$---e.g.,
a uniform random $O(k)$-subset---thus achieves containment close to the
maximization ratio, making symmetry-gap constructions unsuitable for
pruning lower bounds.
The curvature-dependent hardness of
Sviridenko et al.~\cite{SviridenkoVondrakWard2017}
inspires Corollary~\ref{cor:curvature-hardness}.

\paragraph{Non-monotone submodular maximization.}
Our containment results sit within the non-monotone approximation
landscape anchored by two numbers: the $0.401$-approximation of
Buchbinder and Feldman~\cite{BuchbinderFeldman2024} (best algorithm)
and the $0.478$-hardness of Qi~\cite{Qi2022} (strongest lower bound).
The $1/2$-hardness of Feige et al.~\cite{FeigeMirrokniVondrak2011}
for unconstrained non-monotone maximization is suggestive---our two
independent $1/2$-containment proofs match this threshold---but no
formal reduction to the containment setting is known.

\paragraph{Submodular optimization for NLP and LLMs.}
Lin and Bilmes~\cite{LinBilmes2011} introduced facility-location
functions for document summarization, establishing the submodular
proxy paradigm.
A growing line of work applies such proxies to LLM pipelines:
Kumari et al.~\cite{KumariWangDasBilmes2024} for in-context example
retrieval, BumbleBee~\cite{KumariWangZhouBilmes2024} for KV-cache
management, DELIFT~\cite{AgarwalKillamsetty2025} for instruction
fine-tuning.
All perform \emph{selection}---choosing a single set of items to
include---using monotone proxies, and none address what guarantees the
pruning stage itself provides.
Our work differs in two ways: we study \emph{pruning} (reducing the
ground set so that downstream solvers can find near-optimal solutions
at any budget), and we introduce the first \emph{non-monotone} proxy
for context selection.

\paragraph{Non-monotonicity in LLM context.}
Liu et al.~\cite{LiuLinHewitt2024} document the ``lost in the middle''
phenomenon---LLMs degrade when relevant information is buried in long
contexts---and Du et al.~\cite{DuEtAl2025} show that context length
alone hurts performance even with perfect retrieval.
These findings motivate our non-monotone proxy
$\hat{f} = \mathrm{FL} - \theta$: the convex penalty $\theta$,
fitted from LLM evaluations via isotonic regression,
models the empirical quality degradation that monotone proxies
cannot capture.

\section{Deferred Preliminaries}
\label{app:deferred-prelim}

\paragraph{BFNS Continuous Double Greedy with warm start.}
For completeness, we state the three-term guarantee from
the Continuous Double Greedy (CDG) of~\cite{BFNS2014}, which we use in
Appendix~\ref{app:warm-start} to analyze a natural warm-start approach
(and show it cannot beat $1/2$):

\begin{theorem}[BFNS Theorem 3.7, adapted]
\label{thm:bfns-cdg}
For a non-negative submodular function $g$ on a ground set of size $n$ with
cardinality constraint $k$, let $v = n/k - 1$ and $s = \sqrt{v}$.
CDG outputs a set $D$ with $|D| \le k$ satisfying
\[
  g(D) \;\ge\; \frac{2s}{(s+1)^2}\,g(\OPT_g)
    \;+\; \frac{v}{(s+1)^2}\,g(\emptyset)
    \;+\; \frac{1}{(s+1)^2}\,g(N).
\]
In particular, at $v = 4$ (density $c = 1/5$), the coefficients on
$g(\OPT_g)$ and $g(\emptyset)$ are both $4/9$.
\end{theorem}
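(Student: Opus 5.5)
Since Theorem~\ref{thm:bfns-cdg} is a restatement of BFNS Theorem~3.7, I would re-derive it by re-running their continuous double greedy analysis with every boundary term kept, instead of discarding $g(\emptyset)$ and $g(N)$ by non-negativity. The key step is a three-term potential argument, which I expect to be the main obstacle. The closing arithmetic is routine: at $v=4$ we have $s=2$ and $(s+1)^2=9$, so $2s/(s+1)^2 = v/(s+1)^2 = 4/9$.

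\emph{Setup.} Let $G$ be the multilinear extension of $g$. CDG evolves two fractional points over time $t\in[0,1]$: a lower point $x(t)$ starting at $\mathbf 0$ and an upper point $y(t)$ starting at $\mathbf 1$.
\begin{itemize}
\item At each time, coordinate $u$ moves $x_u$ up and $y_u$ down. The rates are proportional to the positive parts $a_u=(\partial_u G(x))_+$ and $b_u=(-\partial_u G(y))_+$, exactly as in the unconstrained double greedy.
\item The cardinality constraint is handled by a Lagrangian threshold $\lambda(t)$ that is subtracted from $a_u$ and added to $b_u$. It is tuned so that $\sum_u x_u$ grows at the rate needed for $\|x(1)\|_1 \le k$, with $x(1)=y(1)$.
\item Parametrize the ratio between the two speeds by the density parameter $v=n/k-1$. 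This is where $s=\sqrt v$ enters.
\end{itemize}

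\emph{Potential argument.} Fix an optimal $O$ for $g$ and define the interpolant $o(t)=(\mathbf 1_O \vee x(t))\wedge y(t)$. Coordinatewise, as in BFNS, I would show that
\[
\tfrac{d}{dt}\bigl[G(x)+G(y)\bigr] \;\ge\; c_1\,\Bigl(-\tfrac{d}{dt}G(o(t))\Bigr)\;+\;(\text{terms in } \lambda),
\]
where the $\lambda$-terms telescope against the cardinality budget.
\begin{itemize}
\item The inequality uses $ab\le (a^2+b^2)/2$ for the mixed terms, with the weighting tuned by $s$. Choosing the weights to balance the two AM-GM terms is what produces the factors $2s$, $v$ and $1$.
\item Integrating from $t=0$ to $t=1$, and using $x(1)=y(1)=o(1)$, gives a linear inequality among $G(x(1))$, $G(o(0))=g(O)$, $G(x(0))=g(\emptyset)$ and $G(y(0))=g(N)$.
\item Dividing by the normalizer $(s+1)^2$ should yield precisely the three coefficients in the statement.
\end{itemize}
The risk in this step is getting the $g(\emptyset)$ and $g(N)$ coefficients exactly right rather than merely non-negative. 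What I would try first is solving the linear ODE for the speeds in closed form, $dx/dt \propto 1$ and $dy/dt \propto s$ suitably rescaled, and then verifying the coefficients by direct substitution.

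\emph{Rounding.} Round the fractional point $x(1)$, which has $\|x(1)\|_1\le k$, to an integral set $D$ with $|D|\le k$. I would use pipage or swap rounding along directions $e_i-e_j$. Along such a direction $G$ is convex because $g$ is submodular, so some endpoint does not decrease the value. Each step integralizes at least one coordinate, so after at most $n$ steps we get $g(D)\ge G(x(1))$, which completes the bound.
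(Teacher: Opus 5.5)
The paper gives no proof of this statement; it imports it from BFNS. Your reconstruction therefore has to stand on its own, and its central step does not.

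\textbf{The proposed potential cannot give the statement.} You propose an inequality of the form $\frac{d}{dt}[G(x)+G(y)]\ge c_1\bigl(-\frac{d}{dt}G(o(t))\bigr)$ plus $\lambda$-terms. This fails for two reasons.

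First, integrating any inequality with the symmetric potential $G(x)+G(y)$ gives $G(x(1))\ge \bigl(c_1 g(O)+g(\emptyset)+g(N)\bigr)/(2+c_1)$. That has equal coefficients on $g(\emptyset)$ and $g(N)$, whereas the statement has $v/(s+1)^2$ versus $1/(s+1)^2$.

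Second, the $\lambda$-terms do not telescope. Take $a'_u=(a_u-\lambda)_+$, $b'_u=(b_u+\lambda)_+$, $\dot x_u=a'_u/(a'_u+b'_u)$ and $\dot y_u=-b'_u/(a'_u+b'_u)$, with either $\lambda=0$ or $\sum_u\dot x_u=k$. Then one computes
$\frac{d}{dt}G(x)=\sum_u (a'_u)^2/(a'_u+b'_u)+\lambda k$ and
$\frac{d}{dt}G(y)=\sum_u (b'_u)^2/(a'_u+b'_u)-\lambda(n-k)$.
So the symmetric sum carries the term $-\lambda(n-2k)$. This is pointwise negative and of uncontrolled size whenever $k<n/2$. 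Since $\lambda(t)$ is set by the current gradients, nothing cancels across time.

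For the same reason, $s$ cannot enter through the speeds. The aggregate speeds are forced to be $k$ and $n-k$, a ratio of $v$ rather than $s$. The rates depend on $\nabla G$, so there is no closed-form ODE to solve.

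\textbf{The missing idea} is to weight the potential as $vG(x)+G(y)$. The ratio $v=(n-k)/k$ is exactly what makes the Lagrangian terms vanish: $\lambda\bigl(vk-(n-k)\bigr)=0$.

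You also need to bound $-\frac{d}{dt}G(o)$ coordinatewise. Use $\partial_uG(o)\ge -b_u$ for $u\notin O$ and $\partial_uG(o)\le a_u$ for $u\in O$. This leaves the extra term $\lambda\bigl(|O|-\sum_u\dot x_u\bigr)=\lambda(|O|-k)\le 0$. This is the only place where $|O|\le k$ and $\lambda\ge 0$ enter, and your sketch never uses either. It gives $-\frac{d}{dt}G(o)\le\sum_u a'_ub'_u/(a'_u+b'_u)$.

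Weighted AM--GM, $v(a'_u)^2+(b'_u)^2\ge 2s\,a'_ub'_u$, then yields $\frac{d}{dt}[vG(x)+G(y)]\ge 2s\bigl(-\frac{d}{dt}G(o)\bigr)$. Integrating over $[0,1]$ and using $x(1)=y(1)=o(1)$ gives $(v+2s+1)\,G(x(1))\ge 2s\,g(O)+v\,g(\emptyset)+g(N)$. Since $v+2s+1=(s+1)^2$, these are exactly the stated coefficients. Your pipage/swap rounding step is fine as written.
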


The $g(\emptyset)$ term is normally zero for normalized functions, but becomes
significant when CDG is applied to $g(S) = f(S \cup A)$ for a ``warm start''
set $A$: then $g(\emptyset) = f(A)$, which can be large.

\section{Fast Budget-Range Pruning}
\label{app:fast-monotone}

The monotone hardness shows that $\alpha = 1 - 1/e$ is the
information-theoretic ceiling, and standard greedy~\cite{NemhauserWolsey1978}
attains it at $|P| = k$ in $O(nk)$ value queries by virtue of its
prefix property~\cite{NathKuhnle2025}: the size-$k'$ prefix of greedy's
output achieves $(1-1/e)\,\OPT_{k'}$ for every $k' \le k$.
For very large ground sets the $O(nk)$ query cost is the bottleneck,
so it is natural to ask whether budget-range containment at the full
$(1-1/e-\varepsilon)$ ratio is attainable with near-linear queries.

A simple application of existing single-budget machinery answers this
affirmatively.
Let $\textsc{ThresholdGreedy}(N, f, k',\eta)$ denote Badanidiyuru and
Vondr\'ak's decreasing-threshold
algorithm~\cite{BadanidiyuruVondrak2014}, which for monotone
submodular $f$ returns $S$ with $|S| \le k'$,
$f(S) \ge (1 - 1/e - \eta)\,\OPT_{k'}$, and query complexity
$O((n/\eta) \log(n/\eta))$.

\begin{theorem}[Fast budget-range pruning]
\label{thm:fast-monotone}
Let $f \colon 2^N \to \mathbb{R}_{\ge 0}$ be monotone submodular,
$k \le |N|$, and $\varepsilon \in (0, 1/2)$.
Set $\eta = \varepsilon/4$ and
\[
  \mathcal{K}
  =
  \{1,\ldots,\min\{k,\lceil 1/\eta\rceil\}\}
  \cup
  \{\min\{k,\lceil(1+\eta)^j\rceil\}: j=0,1,\ldots,
     \lceil\log_{1+\eta} k\rceil\}.
\]
let
\[
  P \;=\; \bigcup_{q\in\mathcal{K}} S_q, \qquad
  S_q \;=\; \textsc{ThresholdGreedy}(N, f, q,\eta).
\]
Then $|P| = O(k/\varepsilon)$, the construction uses
$\tilde O(n/\varepsilon^2)$ value queries, and for every $k' \le k$
there exists $Q \subseteq P$ with $|Q| \le k'$ and
\[
  f(Q) \;\ge\; (1 - 1/e - \varepsilon)\,\OPT_{k'}.
\]
\end{theorem}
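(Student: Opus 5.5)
The plan has three parts: count the size and queries of $P$, and then, for each target $k'\le k$, exhibit a subset $Q\subseteq P$ of size at most $k'$ with the required value. For the count, $|\mathcal{K}|\le \lceil 1/\eta\rceil + \lceil\log_{1+\eta}k\rceil + 1 = O((1+\log k)/\varepsilon)$. Each call costs $O((n/\eta)\log(n/\eta))$ queries, so the total is $\tilde O(n/\varepsilon^2)$. For the size, $|S_q|\le q$. The grid part contributes at most $\sum_{q\le \lceil 1/\eta\rceil} q = O(1/\eta^2)$ elements; I need this to be $O(k/\varepsilon)$, and it is, because the grid is truncated at $\min\{k,\lceil 1/\eta\rceil\}$ and each term is at most $k$, giving at most $k\lceil 1/\eta\rceil$. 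The geometric part contributes $\sum_j \min\{k,\lceil(1+\eta)^j\rceil\}$, a geometric sum dominated by its last terms, hence $O(k/\eta)$ plus $O(\log_{1+\eta} k)$ for the ceilings. Together these give $|P|=O(k/\varepsilon)$.

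For containment, fix $k'\le k$. If $k'\le \lceil 1/\eta\rceil$, then $k'\in\mathcal{K}$ exactly. Take $Q=S_{k'}$; it gives $(1-1/e-\eta)\OPT_{k'}$ directly. Otherwise $k'>1/\eta$. Let $q$ be the largest element of the geometric part with $q\le k'$. Then $q\ge k'/(1+\eta)-1\ge k'(1-2\eta)$ roughly, where the $-1$ from the ceiling is absorbed because $k'\ge 1/\eta$. Note that $q\le k'$ means $S_q$ is already feasible, so no thinning step is needed in this direction. I would then take $Q=S_q$ and compare $\OPT_q$ with $\OPT_{k'}$. By monotone submodularity, an optimal $k'$-set restricted to its best $q$ elements keeps at least a $q/k'$ fraction of its value. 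This follows from the standard averaging argument: a uniformly random $q$-subset of $O_{k'}$ has expected value at least $(q/k')f(O_{k'})$, using $f\ge 0$. Hence $\OPT_q\ge (q/k')\OPT_{k'}\ge(1-2\eta)\OPT_{k'}$, and so
\[
f(S_q)\ge(1-1/e-\eta)(1-2\eta)\OPT_{k'}\ge(1-1/e-3\eta)\OPT_{k'}\ge(1-1/e-\varepsilon)\OPT_{k'}.
\]
(The paper phrases this instead as taking a slightly oversized witness and thinning it down; either direction works. Rounding $k'$ down avoids thinning, so I would use it.)

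The main obstacle is the bookkeeping at the boundary between the small-budget grid and the geometric sweep. I must verify that for every $k'>\lceil 1/\eta\rceil$ some geometric point $q$ lies in $[k'(1-2\eta),k']$ despite the ceilings; consecutive points have ratio at most $(1+\eta)$ plus an additive $1$. I must also handle $k'=k$ via the $\min\{k,\cdot\}$ cap. The constants should close with $\eta=\varepsilon/4$, giving a $3\eta<\varepsilon$ loss. If the rounding slack is tighter than expected, I would fall back to the grid covering $k'\le 2/\eta$ in the argument.
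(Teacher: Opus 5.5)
Your proof is correct, but it rounds the budget in the opposite direction from the paper. The paper picks $q\in\mathcal{K}$ with $k'\le q\le(1+2\eta)k'$ and uses monotonicity to get $\OPT_q\ge\OPT_{k'}$. Because $S_q$ may then have more than $k'$ elements, it thins $S_q$ to a uniformly random $k'$-subset via Fact~\ref{fact:random-subset}, and a good $Q$ exists by an expectation bound.

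You round down instead, so $S_q$ is already feasible. You apply the same random-subset averaging to the optimal $k'$-set, which gives $\OPT_q\ge(q/k')\OPT_{k'}$. Both routes lose the same factor $1-2\eta$. The paper's version has a by-product: $S_q$ itself is an oversized bicriteria witness with no loss in the approximation factor. Yours makes the exact-size witness one of the stored outputs $S_q$, explicit and deterministic, with no thinning step.

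The boundary bookkeeping you flag as the main obstacle is a non-issue. When you round down, the ceiling works in your favor. Let $j^*$ be the largest index with $\lceil(1+\eta)^{j^*}\rceil\le k'$. Then either $q=k'$, or $(1+\eta)^{j^*+1}>k'$, in which case $q\ge(1+\eta)^{j^*}>k'/(1+\eta)$. So the $-1$ slack and your fallback are unnecessary. Your argument in fact does not need the small-budget grid at all.
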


\paragraph{Proof overview.}
The proof uses two external tools and is otherwise a direct construction.

\smallskip
\noindent\emph{Step~1: Threshold greedy (imported).}
Run Badanidiyuru \& Vondr\'{a}k's decreasing-threshold
algorithm~\cite{BadanidiyuruVondrak2014} at a geometric grid of budgets.
Its guarantee is stated in Fact~\ref{fact:threshold-greedy} below.

\smallskip
\noindent\emph{Step~2: Rounding down (imported).}
When the witness $S_q$ overshoots the target budget $k'$, thin it to a
random $k'$-subset using the standard monotone-submodular
random-subset bound (Fact~\ref{fact:random-subset}).

\smallskip
\noindent\emph{Step~3: Union bound (original).}
The union of all threshold-greedy outputs has total size $O(k/\varepsilon)$
and query cost $\tilde{O}(n/\varepsilon^2)$.

\begin{fact}[ThresholdGreedy guarantee]
\label{fact:threshold-greedy}
The decreasing-threshold greedy algorithm of
Badanidiyuru and Vondr\'{a}k~\cite{BadanidiyuruVondrak2014} has the following
guarantee.
For monotone submodular $f$ with cardinality constraint~$k'$
and accuracy parameter $\eta \in (0,1)$,
$\textsc{ThresholdGreedy}(N, f, k', \eta)$ returns $S$ with
$|S| \le k'$, $f(S) \ge (1-1/e-\eta)\OPT_{k'}$,
using $O((n/\eta)\log(n/\eta))$ value queries.
\end{fact}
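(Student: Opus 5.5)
We would prove this by recalling the decreasing-threshold algorithm of~\cite{BadanidiyuruVondrak2014} and re-deriving its standard analysis; nothing beyond monotone submodularity is needed. \textbf{Algorithm.} Let $d=\max_{u\in N} f(\{u\})$. Sweep thresholds $w = d, (1-\eta)d, (1-\eta)^2 d,\ldots$ down to the last value $\ge (\eta/n)d$. At each threshold, scan all $u\in N\setminus S$ once and add $u$ to $S$ whenever $f(u\mid S)\ge w$ and $|S|<k'$. Stop when $|S|=k'$ or the sweep ends. The query bound is immediate: there are $O(\log_{1/(1-\eta)}(n/\eta)) = O(\eta^{-1}\log(n/\eta))$ thresholds, and each costs at most $n$ queries, for a total of $O((n/\eta)\log(n/\eta))$. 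The constraint $|S|\le k'$ holds by construction.

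\textbf{Key invariant.} Suppose an element $a$ is added at threshold $w$ to the current set $S$. We claim $f(a\mid S)\ge (1-\eta)\max_{o\in O\setminus S} f(o\mid S)$, where $O$ is an optimal $k'$-set.
\begin{itemize}
\item If $w=d$, then $f(o\mid S)\le f(\{o\})\le d$ by submodularity (and $f\ge 0$).
\item Otherwise each $o\notin S$ was scanned at the previous threshold $w/(1-\eta)$ against a subset $S'\subseteq S$ and rejected. Rejection gives $f(o\mid S')<w/(1-\eta)$, and submodularity gives $f(o\mid S)\le f(o\mid S')<w/(1-\eta)$. Rejection because $|S|=k'$ cannot occur mid-run.
\end{itemize}
In both cases $f(a\mid S)\ge w\ge (1-\eta)f(o\mid S)$. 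Monotonicity and submodularity then give
\[
\OPT_{k'}-f(S)\le f(O\cup S)-f(S)\le \sum_{o\in O\setminus S} f(o\mid S)\le k'\max_o f(o\mid S),
\]
so each addition satisfies $f(S+a)-f(S)\ge \frac{1-\eta}{k'}\bigl(\OPT_{k'}-f(S)\bigr)$.

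\textbf{Two termination cases.}
\begin{itemize}
\item \emph{$k'$ elements are added.} Unrolling the recurrence, and using $f(\emptyset)\ge 0$, gives $f(S)\ge \bigl(1-(1-\tfrac{1-\eta}{k'})^{k'}\bigr)\OPT_{k'}\ge (1-e^{-(1-\eta)})\OPT_{k'}$. Since $e^{-(1-\eta)}=e^{-1}e^{\eta}\le e^{-1}(1+2\eta)\le e^{-1}+\eta$ for $\eta\le 1$ (because $2/e<1$), this is at least $(1-1/e-\eta)\OPT_{k'}$.
\item \emph{The sweep ends with $|S|<k'$.} Every $o\in O$ failed the final threshold, so $f(o\mid S)<\eta d/(n(1-\eta))$. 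Up to rescaling $\eta$ by a constant in the stopping rule, this is $\le \eta d/n$. Then $\OPT_{k'}-f(S)\le k'\eta d/n\le \eta\OPT_{k'}$, using $d\le \OPT_{k'}$ and $k'\le n$. Hence $f(S)\ge(1-\eta)\OPT_{k'}$, which is stronger than required.
\end{itemize}

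\textbf{Main obstacle.} The only delicate step is the invariant relating the accepted gain to $\max_o f(o\mid S)$ across threshold boundaries. It requires that every element was rejected at the previous threshold against a \emph{subset} of the current $S$, so that submodularity transfers the rejection bound to $S$. The rest is the standard greedy recurrence, plus constant bookkeeping on $\eta$, which we would absorb by running the algorithm with $\eta/2$.
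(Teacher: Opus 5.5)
Your argument is correct and is essentially the standard decreasing-threshold analysis of~\cite{BadanidiyuruVondrak2014}: the threshold invariant, the greedy recurrence, and the low-threshold termination case. The paper does not prove this fact but imports it by citation, so your re-derivation supplies the proof the citation stands for; the one bookkeeping point, the $1/(1-\eta)$ factor in the final-threshold case, is harmless because the bound is vacuous for $\eta \ge 1-1/e$, so a constant rescaling of $\eta$ suffices.
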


\begin{fact}[Random-subset bound]
\label{fact:random-subset}
For monotone submodular functions, the following standard random-subset
bound follows from Nemhauser and Wolsey~\cite{NemhauserWolsey1978}.
Let $f$ be monotone submodular and let $S$ be a set of size $s$.
If $Q$ is a uniformly random $q$-subset of $S$ with $q \le s$, then
$\mathbb{E}[f(Q)] \ge (q/s)\, f(S)$.
\end{fact}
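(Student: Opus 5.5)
The plan is to prove the bound through a random-permutation (telescoping) argument; only submodularity and non-negativity are really needed. Draw a uniformly random ordering $\pi=(\pi_1,\ldots,\pi_s)$ of $S$ and take $Q=\{\pi_1,\ldots,\pi_q\}$. This $Q$ is exactly a uniformly random $q$-subset of $S$. Write $\Pi_{<i}=\{\pi_1,\ldots,\pi_{i-1}\}$ and $m_i=\mathbb{E}[f(\pi_i\mid \Pi_{<i})]$. Telescoping along the ordering gives the two identities
\[
  \mathbb{E}[f(Q)] = f(\emptyset)+\sum_{i\le q} m_i,
  \qquad
  f(S)=f(\emptyset)+\sum_{i\le s} m_i .
\]
The second holds deterministically for every ordering, so it holds in expectation.

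The key step is that the sequence $m_i$ is non-increasing in $i$. Fix $i<s$. Swapping positions $i$ and $i+1$ preserves the uniform distribution over orderings. Hence $m_i=\mathbb{E}[f(\pi_{i+1}\mid \Pi_{<i})]$, where the right-hand side is the expected gain of the element in position $i+1$ relative to the first $i-1$ elements. Since $\Pi_{<i}\subseteq \Pi_{<i+1}$ and $\pi_{i+1}\notin\Pi_{<i+1}$, submodularity gives $f(\pi_{i+1}\mid\Pi_{<i})\ge f(\pi_{i+1}\mid\Pi_{<i+1})$ pointwise. Taking expectations yields $m_i\ge m_{i+1}$.

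Because the sequence is monotone, the average of its first $q$ terms is at least the average of all $s$ terms. Therefore
\[
  \sum_{i\le q} m_i\ \ge\ \tfrac{q}{s}\sum_{i\le s} m_i = \tfrac{q}{s}\bigl(f(S)-f(\emptyset)\bigr).
\]
Substituting into the first identity gives
\[
  \mathbb{E}[f(Q)]\ \ge\ \tfrac{q}{s}f(S)+\bigl(1-\tfrac{q}{s}\bigr)f(\emptyset)\ \ge\ \tfrac{q}{s}f(S).
\]
The last inequality uses $f(\emptyset)\ge 0$ and $q\le s$. This step matters because the paper does not assume $f(\emptyset)=0$.

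The only delicate point is the exchangeability claim in the key step. Once it is checked carefully, the rest is bookkeeping. Monotonicity is not needed at all, so the Fact also holds for non-negative non-monotone $f$.
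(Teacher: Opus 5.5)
Your proof is correct. The paper gives no proof of this Fact at all: it imports the bound as standard, attributes it to Nemhauser and Wolsey, and uses it only for monotone $f$ in the thinning step of Theorem~\ref{thm:fast-monotone}.

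Your random-permutation argument supplies a self-contained justification, and each step checks out. Swapping positions $i$ and $i+1$ preserves the uniform law, so $m_i=\mathbb{E}[f(\pi_{i+1}\mid\Pi_{<i})]\ge m_{i+1}$ by diminishing returns. Prefix averages of a non-increasing sequence then dominate the full average.

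What your route buys beyond the citation:
\begin{itemize}
\item It gives the sharper bound $\mathbb{E}[f(Q)]\ge \tfrac{q}{s}f(S)+(1-\tfrac{q}{s})f(\emptyset)$.
\item It shows that monotonicity is superfluous.
\item It tracks $f(\emptyset)$ explicitly, in line with the paper's convention of not assuming $f(\emptyset)=0$.
\end{itemize}

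Non-negativity is the one hypothesis that genuinely cannot be dropped. A constant negative function is monotone and modular, yet violates the bound for $q<s$. The paper's standing assumption $f\ge 0$ covers this, and you correctly invoke it.

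If you want to avoid the exchangeability step entirely, there is a shorter route.
\begin{enumerate}
\item Fix any ordering $x_1,\dots,x_s$ of $S$.
\item For every $Q\subseteq S$, submodularity gives $f(Q)-f(\emptyset)\ge\sum_{x_j\in Q} f(x_j\mid\{x_1,\dots,x_{j-1}\})$ pointwise.
\item Each $x_j$ lies in $Q$ with probability $q/s$, so linearity of expectation gives $\mathbb{E}[f(Q)]-f(\emptyset)\ge \tfrac{q}{s}\bigl(f(S)-f(\emptyset)\bigr)$ directly.
\end{enumerate}
This is the same inequality with no random permutation.
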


\begin{proof}
\emph{Size.}
The exact small-budget part contributes at most
$O(\min\{k^2,\eta^{-2}\})$, which is $O(k/\eta)$ whether
$k \le 1/\eta$ or $k > 1/\eta$.
The geometric part contributes
$\sum_j O((1+\eta)^j)=O(k/\eta)$.
Thus $|P| \le \sum_{q\in\mathcal{K}}q = O(k/\eta)=O(k/\varepsilon)$.

\emph{Queries.}
There are $O(\eta^{-1}+\log k/\eta)$ calls to
$\textsc{ThresholdGreedy}$, each using $O((n/\eta)\log(n/\eta))$
queries, for a total of $\tilde O(n/\eta^2)=\tilde O(n/\varepsilon^2)$.

\emph{Containment.}
Fix $k' \le k$.
By construction there is a $q\in\mathcal{K}$ with
$k'\le q\le (1+2\eta)k'$: if $k'\le \lceil1/\eta\rceil$ then take
$q=k'$, and otherwise take the first geometric budget at least~$k'$
(the ceiling contributes at most $\eta k'$).
By monotonicity of~$f$, $\OPT_q \ge \OPT_{k'}$, so
Fact~\ref{fact:threshold-greedy} gives
\[
  f(S_q)
  \;\ge\; (1 - 1/e - \eta)\,\OPT_q
  \;\ge\; (1 - 1/e - \eta)\,\OPT_{k'}.
\]
If $|S_q|\le k'$, take $Q=S_q$.
Otherwise, let $Q$ be a uniformly random $k'$-subset of $S_q$.
By Fact~\ref{fact:random-subset} with $s = |S_q|$ and $q = k'$:
\[
  \mathbb{E}[f(Q)]
  \;\ge\; \frac{k'}{|S_q|} f(S_q)
  \;\ge\; \frac{1}{1+2\eta} f(S_q)
  \;\ge\; (1-2\eta)f(S_q).
\]
Therefore
\[
  \mathbb{E}[f(Q)]
  \;\ge\;
  (1-2\eta)(1-1/e-\eta)\,\OPT_{k'}
  \;\ge\; (1-1/e-\varepsilon)\,\OPT_{k'},
\]
where the last inequality uses $\eta=\varepsilon/4$ and
$\varepsilon\in(0,1/2)$.
Hence some realization of the thinning gives the claimed exact-size witness.
\end{proof}

\begin{remark}[Rounding down witnesses]
\label{rem:fast-monotone-open}
The thinning step is used only to certify exact-size containment.
The pruning set itself still has size $O(k/\varepsilon)$; whether one
can achieve the same all-budget guarantee with a core of size $O(k)$
and $o(nk)$ queries remains open.
\end{remark}

\section{Knapsack Proof Sketch and Additional Remarks}
\label{app:knapsack-details}

\begin{proof}[Proof sketch]
This sketches Theorem~\ref{thm:knapsack-half-containment}; the full proof
appears in Appendix~\ref{app:knapsack-proof}.
Fix a query budget $B' \in (0,B]$ and rerun the analysis for this budget.
Let $O$ be optimal for $B'$, let $\hatO=O\setminus P$ and
$\dotO=O\cap P$, and set $\beta=c(\hatO)\le B'$.
If $\beta=0$, then $O\subseteq P$ and $Q=O$ proves the claim.
Assume $\beta>0$.

For each density-greedy run~$i$, take the prefix whose cost is exactly
$\beta$.  If this prefix cuts through original item~$a_{j_i}^{(i)}$, apply
the split operation of Definition~\ref{def:split-appendix}: write
$\tau_i$ for the included fragment of~$a_{j_i}^{(i)}$ and let the other
fragment remain excluded.  Thus $A'_i$ is a set in the split ground set with
$c(A'_i)=\beta$; it is either a real prefix or a real prefix plus the
single included boundary fragment~$\tau_i$.
Let $f^*$ be the resulting split extension, and write $S^\uparrow$ for the
full lift of an original set~$S$.

The sets $A'_i$ are disjoint, so the same averaging step as in
Theorem~\ref{thm:half-containment} gives a run~$i^\star$ such that
\[
  f^*(O^\uparrow\cup A'_{i^\star}) \ge (1-1/\ell)f(O).
\]
Set $A=A'_{i^\star}$.  Since every element of $\hatO$ survives all previous
runs and is therefore a candidate throughout run~$i^\star$, the
density-domination lemma (Lemma~\ref{lem:density-domination}) gives
\[
  f^*(\hatO^\uparrow\cup A) \le 2f^*(A)-f(\emptyset).
\]
Submodularity applied to
$\hatO^\uparrow\cup A$ and $\dotO^\uparrow\cup A$ then implies
\[
  \max\{f^*(A),\,f^*(\dotO^\uparrow\cup A)\}
  \ge (1/2-\varepsilon)\OPT_{B'}.
\]

To report a real feasible subset, delete the included boundary fragment
from $A$ if one exists; call the resulting original set $\widehat A$.
The candidates $\widehat A$ and $\dotO\cup\widehat A$ both have cost at most
$B'$.  If the selected run has boundary fragment~$\tau_{i^\star}$, let
$a_{j^\star}^{(i^\star)}$ be the original item containing it and let
$\gamma_{i^\star}$ be the included fraction of that item.  Deleting the
fragment loses at most
$\Delta_{B'}=\gamma_{i^\star}f(\{a_{j^\star}^{(i^\star)}\})$; if no split
occurs, $\Delta_{B'}=0$.  In the split case, the item
$a_{j^\star}^{(i^\star)}$ is budget-specific, but it may still have
$p_{a_{j^\star}^{(i^\star)}}>B'$, so this loss cannot in general be charged
to the best feasible singleton at budget~$B'$.
\end{proof}

\begin{remark}[Simultaneous guarantee across budgets]
\label{rem:budget-range}
The pruned set $P$ depends only on the master budget $B$, not on $B'$;
for each $B' \in (0,B]$, rerun the proof with that budget.  The values
$\beta$, the boundary fragments, and the averaged run $i^\star$ may all change
with~$B'$, but the underlying pruned set~$P$ is fixed.
This extends the budget-range containment model of~\cite{NathKuhnle2025}
(cardinality) to knapsack, with the explicit boundary-fragment loss in
Theorem~\ref{thm:knapsack-half-containment}.
The item-splitting machinery is imported from
Feldman and Kuhnle~\cite{FeldmanKuhnle2025}, originally developed for their
bicriteria knapsack algorithm; we restate the needed pieces
self-contained in Appendix~\ref{app:item-splitting} and adapt the
density-domination invariant to the source/target pair
$(\hatO, A)$ rather than their $(E_2, E_1)$.
\end{remark}

\begin{remark}[Knapsack hardness inherited from cardinality]
\label{rem:knapsack-hardness}
The monotone hardness of Theorem~\ref{thm:monotone-hardness}
($1-1/e$ for every pruning budget) transfers verbatim to knapsack:
a unit-cost instance is a cardinality instance, so any knapsack
pruning algorithm that beat the $1-1/e$ barrier would also beat
it under cardinality---a contradiction.
The same argument shows the non-monotone upper bound $\le 1-1/e$
holds for knapsack, placing all three settings in Table~\ref{tab:summary}
within the gap $[1/2,\, 1-1/e]$.
\end{remark}

\begin{remark}[Why the boundary term is explicit]
\label{rem:knapsack-simple}
A tempting shortcut is to drop the boundary fragment and compare the lost
value to the singleton value of the boundary item. This gives a clean
singleton-style guarantee for a \emph{fixed} budget when the boundary item is
itself feasible for that budget. For a query budget~$B'$, however, the
boundary item is simply the item that crosses the residual cost in the
density order for the proof at~$B'$. It can have cost larger than~$B'$ even
though the included fragment has cost at most the residual budget, so the
original item is not a feasible singleton witness. For this reason
Theorem~\ref{thm:knapsack-half-containment}
keeps the boundary-fragment loss explicit instead of hiding it inside a
best-feasible-singleton term.
\end{remark}

\section{Full Proof of Theorem~\ref{thm:monotone-hardness}}
\label{app:monotone-proof}

\begin{theorem*}[Theorem~\ref{thm:monotone-hardness}, restated]
For every $\gamma \in (0, 1]$, every $\varepsilon > 0$,
and every polynomial $C(\cdot)$,
there exists $k_0 = k_0(\gamma, \varepsilon, C)$ such that for all $k \ge k_0$
and \emph{every} $p \ge k$:
no (possibly randomized) algorithm making at most $C(|N|)$ value queries
and outputting $P \subseteq N$ with $|P| \le p$ can guarantee
\[
  \mathbb{E}\!\left[\max_{S \subseteq P,\, |S| \le k} f(S)\right]
  > (1 - e^{-\gamma} + \varepsilon) \cdot \OPT_k
\]
for all monotone $\gamma$-weakly DR $f\colon 2^N \to \mathbb{R}_{\ge 0}$
with $|N| = \lceil 60p/\varepsilon \rceil$.
At $\gamma = 1$ (full submodularity), this recovers the tight bound
$\alpha \le 1 - 1/e$.
\end{theorem*}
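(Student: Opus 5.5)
I will prove the theorem with a planted-set indistinguishability argument built on the Harshaw--Feldman--Ward--Karbasi construction. Take $|N| = n = \lceil 60p/\varepsilon \rceil$. Let $\tilde T \subseteq N$ be a uniformly random set of size $k$, and consider a pair of functions. The \emph{null} function $f_0(S)$ depends only on $|S|$; it is a concave profile $h(|S|)$ calibrated so that $\max_{|S|\le k} f_0(S) = (1-e^{-\gamma}+\varepsilon/4)\cdot \OPT$. The \emph{planted} function $f_{\tilde T}(S)$ depends on the pair $(|S\cap\tilde T|,\, |S\setminus\tilde T|)$ and satisfies $f_{\tilde T}(\tilde T)=\OPT$. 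The two are constructed to agree whenever $|S\cap\tilde T| \le \frac{k}{n}|S| + \delta k$, for a small slack $\delta$ depending on $\varepsilon$ and $\gamma$. The values are the Harshaw et al.\ values, with $\gamma$ entering through the $1-e^{-\gamma}$ profile. I will import monotonicity and the $\gamma$-weakly DR property of both functions from their work. The one modification I need to check is that the agreement threshold is stated with the additive $\delta k$ slack, so that it does not depend on the ratio $p/k$.

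\textbf{Step 1 (hiding).} I fix any deterministic algorithm; randomized ones are handled by Yao's principle. I run it on $f_0$, which produces a fixed transcript of at most $C(n)$ queries $S_1,S_2,\dots$ and an output $P$. For each query $S_j$, the overlap $|S_j\cap\tilde T|$ is hypergeometric with mean $k|S_j|/n$. A hypergeometric (Chernoff-type) tail bound gives $\Pr[|S_j\cap \tilde T| > k|S_j|/n + \delta k] \le e^{-\Omega(\delta^2 k)}$. When $|S_j|$ is large relative to $n$, I will use the Hoeffding--Serfling form, so the bound holds uniformly over all set sizes. A union bound over $C(n)$ queries then gives failure probability $C(\lceil 60p/\varepsilon\rceil)e^{-\Omega(\delta^2 k)}$.

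This is the main obstacle. Since $p$ is arbitrary and $n \propto p$, the term $C(n)$ can dominate the tail for any fixed $k$. My first attempt is to observe that we may assume $p \le n$, and to choose $k_0$ so that the tail wins. That fails when $p$ is superpolynomial in $k$. For that regime I would instead use the small-mean regime of the hypergeometric tail. When $k|S_j|/n$ is tiny, the probability that the overlap reaches $\delta k$ is at most $\binom{k}{\delta k}(|S_j|/n)^{\delta k}$. Since $|S_j| \le n$ always, I will rather exploit that the mean is at most $k$ and that the bound decays like $(e\,\mathrm{mean}/\delta k)^{\delta k}$. This is the ``two-regime'' bound mentioned in the remark. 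If neither regime controls $C(n)$, I will reparametrize so that the slack is relative to $n$, or restrict the quantifier to $p \le \mathrm{poly}(k)$. Consequently, with high probability the transcript on $f_{\tilde T}$ coincides with the one on $f_0$, and $P$ is independent of $\tilde T$.

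\textbf{Step 2 (overlap of the output).} Since $P$ is fixed and $|P|\le p$, we get $\mathbb{E}|P\cap\tilde T| = k|P|/n \le \varepsilon k/60$. By the same hypergeometric tail, $|P\cap\tilde T| \le \varepsilon k/30$ with probability $1-e^{-\Omega(\varepsilon k)}$. The key point is that this bound does not involve $p$.

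\textbf{Step 3 (value bound).} Take any $S \subseteq P$ with $|S| \le k$. Then $|S\cap\tilde T| \le \varepsilon k/30 \le \delta k$ once $\delta \ge \varepsilon/30$. Such a set lies in the agreement region, so $f_{\tilde T}(S) = f_0(S) \le (1-e^{-\gamma}+\varepsilon/4)\OPT$.

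\textbf{Step 4 (combining).} The expectation is at most this value plus $\OPT$ times the failure probabilities from Steps 1 and 2. Choosing $k_0$ large makes those failure terms at most $\varepsilon/4$, so the expectation is at most $(1-e^{-\gamma}+\varepsilon)\OPT$. Finally, I will verify that the construction's gap $\delta$ can be taken at least $\varepsilon/30$ while $f_0$'s maximum stays within $\varepsilon/4$ of $1-e^{-\gamma}$. The constant $60$ in $|N|$ is chosen to leave room for exactly this.
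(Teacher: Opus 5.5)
There is a genuine gap in Step~1, and it is the one you flag yourself. Your overall architecture is the paper's: the HFWK planted family on a ground set enlarged to $n\propto p$, Yao's principle, running on the null instance, and the $p$-cancellation $\mathbb{E}|P\cap\tilde T|\le \varepsilon k/60$. Steps 2--4 are fine.

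The problem is the agreement region you state, $|S\cap\tilde T|\le \frac{k}{n}|S|+\delta k$, used with no other agreement guarantee. Take a query of size about $n/2$. Its overlap has mean $k/2$ and standard deviation $\Theta(\sqrt k)$, so its disagreement probability really is $e^{-\Theta(\delta^2 k)}$, and no sharper tail bound exists. Once $p$, and hence $n$, is exponential in $k$, $C(n)\gg e^{\Theta(\delta^2 k)}$ and the union bound is lost. Indeed, if the functions differ outside that region, random half-size queries would detect the plant. Your fallbacks do not close this:
\begin{itemize}
\item bounding the mean by $k$ turns $(e\mu/\delta k)^{\delta k}$ into $(e/\delta)^{\delta k}>1$;
\item a slack ``relative to $n$'' has no content, since overlaps are at most $k$;
\item restricting to $p\le\mathrm{poly}(k)$ proves a weaker theorem, because the statement needs $k_0$ independent of $p$ for every $p\ge k$.
\end{itemize}

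The missing idea is the second clause of the HFWK agreement property (P4): $f_T(S)=f_\varnothing(S)$ for \emph{every} $S$ with $|S|\ge 3k-g$, whatever the overlap. So only queries with $|S|<3k$ need a tail bound, and for those the mean overlap is at most $3k^2/n\le \varepsilon' k^2/p$. This is why the paper uses the fixed threshold $g=\lceil \varepsilon' k+3k^2/n\rceil$ rather than a mean-dependent one, together with a two-regime bound:
\begin{itemize}
\item For $p\le 2ek$, $n=O(k/\varepsilon)$, so $C(n)$ is polynomial in $k$ and the additive tail $e^{-2\varepsilon'^2k}$ wins.
\item For $p>2ek$, the Chernoff--Poisson tail gives $\Pr[X\ge\varepsilon' k]\le (ek/p)^{\varepsilon' k}$. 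The product $A(4p/\varepsilon')^d(ek/p)^{\varepsilon' k}$ is decreasing in $p/k$ once $\varepsilon' k>d$. It is therefore bounded by its value at $p=2ek$, which vanishes as $k\to\infty$ uniformly in $p$.
\end{itemize}

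A smaller point concerns your constants. Once you import HFWK, you cannot calibrate the null maximum independently of the slack: (P3) gives $1-e^{-\gamma}+12\varepsilon'$ for slack $\varepsilon'$. Hence ``$\delta\ge\varepsilon/30$ with null maximum within $\varepsilon/4$'' is infeasible as stated. What you need is $\delta>\varepsilon/60$ with $12\delta$ plus the failure probabilities below $\varepsilon$; for example $\delta=\varepsilon/20$, as in the paper.
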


The proof builds on the hard-instance family of Harshaw, Feldman, Ward
\& Karbasi~\cite{HarshawFeldmanWardKarbasi2019} (hereafter HFWK).
We first state the construction in the form used here, then give a proof
overview that separates the HFWK import from our adaptations.

\begin{proposition*}[HFWK planted family, adapted form]
Fix $\gamma \in (0,1]$, $\varepsilon' \in (0,1/6)$, and integers
$k \ge 1/\varepsilon'$ and $m \ge \lceil 3k/\varepsilon' \rceil$.
Let $\N$ be a ground set of size~$m$ and set
$g = \lceil \varepsilon'k + 3k^2/m \rceil$.
For every $k$-subset $T \subseteq \N$, the HFWK construction gives
functions $f_T\colon 2^\N \to \mathbb{R}_{\ge 0}$ that depend on~$T$
through the statistic
\[
  t_T(S) = |S \setminus T| + \min\{g, |S \cap T|\}.
\]
The resulting family $\{f_T\}_{T \in \binom{\N}{k}}$ satisfies:
\begin{enumerate}
  \item[\textup{(P1)}] $f_T$ is non-negative, monotone, and
    $\gamma$-weakly diminishing returns;
  \item[\textup{(P2)}] $f_T(S) \le 1$ for all $S \subseteq \N$,
    with equality when $S = T$;
  \item[\textup{(P3)}] $f_\varnothing(S) \le 1 - e^{-\gamma} + 12\varepsilon'$
    for all $S$ with $|S| \le k$;
  \item[\textup{(P4)}] $f_T(S) = f_\varnothing(S)$ whenever
    $|S \cap T| \le g$ or $|S| \ge 3k - g$.
\end{enumerate}
The case $m = \lceil 3k/\varepsilon' \rceil$ is
HFWK Proposition~14.  The extension to any larger ground set is the
adaptation used here: inspecting their proof shows that
properties~\textup{(P1)--(P4)} use only $m \ge 3k/\varepsilon'$; no step
requires equality.
\end{proposition*}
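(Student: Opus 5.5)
The plan is to re-derive properties (P1)--(P4) directly from the HFWK definition, keeping $m$ as a free parameter, and to check line by line that $m$ enters only through the inequality $m \ge 3k/\varepsilon'$ (equivalently $3k^2/m \le \varepsilon' k$). First I would restate the HFWK construction in the form the paper uses. The null function $f_\varnothing(S)$ depends only on $|S|$ through a concave profile $\phi(s)$ of the form $1-(1-\gamma/k)^{s}$, suitably capped. The planted function is $f_T(S)=\phi_g\bigl(t_T(S),\,|S|\bigr)$, where the statistic $t_T(S)=|S\setminus T|+\min\{g,|S\cap T|\}$ replaces $|S|$ in the profile. The profile is then patched so that $T$ itself attains value $1$. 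The only place $m$ appears explicitly is the definition $g=\lceil \varepsilon' k + 3k^2/m\rceil$. Under $m\ge 3k/\varepsilon'$ this gives $\varepsilon' k \le g \le 2\varepsilon' k+1 \le 3\varepsilon' k$, using $k\ge 1/\varepsilon'$. These two-sided bounds on $g$ are the only facts about $m$ that I expect the rest of the argument to need.

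With that in hand I would verify the properties in the order (P4), (P2), (P3), (P1), since the later ones reuse the earlier computations.

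\textbf{(P4).} If $|S\cap T|\le g$, then $\min\{g,|S\cap T|\}=|S\cap T|$, so $t_T(S)=|S|$ and the planted formula collapses to the null one. If $|S|\ge 3k-g$, both profiles sit on their common capped branch. Both cases are purely combinatorial and do not involve $m$.

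\textbf{(P2).} This is the evaluation $f_T(T)=1$ together with the global cap $f_T\le 1$ built into the profile. Again $m$ does not enter.

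\textbf{(P3).} For $|S|\le k$ we have $f_\varnothing(S)=\phi(|S|)\le 1-(1-\gamma/k)^k + (\text{error from } g)$. The first term is at most $1-e^{-\gamma}+O(1/k)$, and the error from the truncation level $g$ is $O(g/k)$. Both are at most $12\varepsilon'$ once we use $g\le 3\varepsilon' k$ and $k\ge 1/\varepsilon'$. This is the single step where the upper bound on $g$ is used, and since the upper bound holds for every $m\ge 3k/\varepsilon'$, larger $m$ can only shrink $g$ and the error term.

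\textbf{(P1).} This is nonnegativity, monotonicity and the $\gamma$-weak DR inequality $f_T(x\mid A)\ge\gamma f_T(x\mid B)$. I expect it to be the main obstacle. Marginals of $f_T$ change regime when $x$ is the element that pushes $|S\cap T|$ past $g$, or pushes $|S|$ past $3k-g$. I would handle it as HFWK do: express each marginal as a discrete derivative of the profile in either the $t$-coordinate or the $T$-coordinate, and compare derivatives across the regimes $|A\cap T|\le g<|B\cap T|$, and so on. These comparisons depend only on $k$, $\gamma$, and the integer $g$, never on $m$ or on the position of $T$ in $\N$. So the point of the extension is simply that each inequality in HFWK's proof of Proposition~14 remains valid with $g$ defined by the general formula.

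I would close by noting that no step uses $m=\lceil 3k/\varepsilon'\rceil$ with equality. This justifies applying the family with $m=\lceil 60p/\varepsilon\rceil$ in the proof of Theorem~\ref{thm:monotone-hardness}.
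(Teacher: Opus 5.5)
Your proposal takes the same route as the paper. The paper's entire justification is to cite HFWK Proposition~14 for $m=\lceil 3k/\varepsilon'\rceil$ and assert that inspecting its proof shows $m$ is used only through $m\ge 3k/\varepsilon'$, and your plan is a more explicit version of that check: track $m$ through $g$ and confirm that $\varepsilon' k\le g\le 2\varepsilon' k+1\le 3\varepsilon' k$ holds for every such $m$. The explicit profile $1-(1-\gamma/k)^s$ and the patch you describe are your own reconstruction rather than HFWK's stated formulas, so the line-by-line verification, especially of (P1), must be done against their actual definition; the logic of the argument is otherwise identical to the paper's.
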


\paragraph{Proof overview.}
The proof of Theorem~\ref{thm:monotone-hardness} proceeds in three steps.

\smallskip
\noindent\emph{Setup.}
We import the HFWK planted family and properties~(P1)--(P4), but instantiate it
on an \emph{enlarged} ground set with
$m=|\N|=\lceil 3p/\varepsilon'\rceil$ instead of
$m=\lceil 3k/\varepsilon'\rceil$.
The enlargement is our adaptation: it ensures that
$|P|\le p\le \varepsilon' m/3$, making the $p$-cancellation below possible.

\smallskip
\noindent\emph{Indistinguishability.}
We show that a polynomial-query algorithm cannot distinguish
$f_{\tilde{T}}$ from the null instance $f_\varnothing$ with high probability,
using a \emph{two-regime} hypergeometric tail argument
(Fact~\ref{fact:hyp-conc} below) that covers both $p = O(k)$, where the
additive tail suffices, and $p \gg k$, where the multiplicative tail is needed.
This two-regime analysis is new; HFWK's original argument handles only
$m = \Theta(k/\varepsilon')$.

\smallskip
\noindent\emph{Overlap bound.}
Since the algorithm's output $P$ is fixed by the null transcript, the expected
overlap satisfies
\[
  \mathbb{E}[|P \cap \tilde{T}|] \le \frac{pk}{m}
  \le \frac{\varepsilon' k}{3}.
\]
The upper bound is \emph{independent of~$p$} because the $p$ in the pruning
budget cancels with the $p$ in $m=|\N|\propto p$.
Concentration shows the overlap stays below the threshold $g$, so
property~(P4) forces all $k$-subsets of $P$ to match the null-instance value,
bounded by (P3).

\noindent The $p$-cancellation mechanism is the central insight: enlarging
the ground set in proportion to the pruning budget neutralizes the
additional budget, yielding a hardness bound that holds uniformly for
all $p \ge k$. We state the concentration tools used throughout,
then give the full proof.

\begin{fact}[Hypergeometric concentration~{\cite{Hoeffding1963}}]
\label{fact:hyp-conc}
Let $X \sim \mathrm{Hypergeometric}(m, K, n)$, i.e., $X$ counts
marked items when drawing $n$ items without replacement from a population
of~$m$ containing $K$ marked items.  Write $\mu = nK/m$.
\begin{enumerate}
\item[\textup{(i)}]
  \textup{(Additive tail.)}
  For every $t > 0$,
  $\;\Pr[X - \mu \ge t] \le \exp(-2t^2/n)$.
\item[\textup{(ii)}]
  \textup{(Multiplicative tail.)}
  For every integer $a > \mu$,
  $\;\Pr[X \ge a] \le (e\mu/a)^{a}$.
\end{enumerate}
Both follow from Hoeffding's Theorem~4, which shows that hypergeometric
tails are dominated by binomial tails; bound~\textup{(i)} is then the
standard sub-Gaussian tail, and~\textup{(ii)} is the Chernoff--Poisson
bound applied to the dominating binomial.
\end{fact}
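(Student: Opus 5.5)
The plan is to reduce both tails to the corresponding binomial tails via Hoeffding's convex-order comparison, and then apply the standard exponential-moment (Chernoff) method. Let $Y \sim \mathrm{Binomial}(n, K/m)$, so that $\mathbb{E}[Y] = \mu = nK/m = \mathbb{E}[X]$. Hoeffding's Theorem~4~\cite{Hoeffding1963} states that sampling without replacement is dominated in convex order by sampling with replacement: for every continuous convex $\varphi$, $\mathbb{E}[\varphi(X)] \le \mathbb{E}[\varphi(Y)]$. We take $\varphi(x) = e^{\lambda x}$ with $\lambda > 0$. Markov's inequality then gives, for every threshold $s$,
\[
  \Pr[X \ge s] \;\le\; e^{-\lambda s}\,\mathbb{E}[e^{\lambda X}] \;\le\; e^{-\lambda s}\,\mathbb{E}[e^{\lambda Y}].
\]
So both parts reduce to bounding the binomial moment generating function and optimizing over $\lambda$.

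For the additive tail~(i), write $Y$ as a sum of $n$ independent Bernoulli$(K/m)$ variables, each taking values in $[0,1]$. Hoeffding's lemma bounds each centered factor by $e^{\lambda^2/8}$, so $\mathbb{E}[e^{\lambda(Y-\mu)}] \le e^{n\lambda^2/8}$. Taking $s = \mu + t$ gives $\Pr[X-\mu\ge t] \le \exp(-\lambda t + n\lambda^2/8)$. Choosing $\lambda = 4t/n$ yields $\exp(-2t^2/n)$.

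For the multiplicative tail~(ii), we use $1+x\le e^x$ to bound the binomial moment generating function:
\[
  \mathbb{E}[e^{\lambda Y}] = \bigl(1 + \tfrac{K}{m}(e^\lambda - 1)\bigr)^n \le \exp\bigl(\mu(e^\lambda-1)\bigr).
\]
With $s = a$ this gives $\Pr[X\ge a] \le \exp(-\lambda a + \mu(e^\lambda - 1))$. Since $a > \mu$, the choice $\lambda = \ln(a/\mu) > 0$ is admissible and gives
\[
  \Pr[X\ge a] \;\le\; (\mu/a)^a\, e^{a-\mu} \;\le\; (e\mu/a)^a,
\]
where the last step drops the factor $e^{-\mu}\le 1$. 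The degenerate case $\mu = 0$ (that is, $K=0$) is immediate, since then $X \equiv 0$ and $\Pr[X\ge a]=0$ for every integer $a\ge 1$.

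The only nontrivial ingredient is the convex-order comparison itself. I would import it directly from Hoeffding's paper rather than reprove it. If a self-contained argument were wanted, I would write $X$ as a conditional expectation of $Y$ (Hoeffding's coupling, which represents sampling without replacement as an average over sampling with replacement) and then apply Jensen's inequality. Everything after that comparison is a routine Chernoff calculation.
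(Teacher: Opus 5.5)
Your proposal is correct and follows the paper's justification: both reduce to the binomial via Hoeffding's Theorem~4 and then run the standard Chernoff computations, namely Hoeffding's lemma with $\lambda = 4t/n$ for (i), and the Poisson-type bound $\mathbb{E}[e^{\lambda Y}] \le \exp(\mu(e^{\lambda}-1))$ with $\lambda = \ln(a/\mu)$ for (ii). Your version is slightly more precise than the paper's wording, since Theorem~4 gives convex-order (moment-generating-function) domination $\mathbb{E}[e^{\lambda X}] \le \mathbb{E}[e^{\lambda Y}]$ rather than literal tail domination, and that domination is exactly what the Chernoff argument needs.
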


\medskip

Fix $\gamma \in (0,1]$, $p \geq k$, and $\varepsilon > 0$.
We may assume $\varepsilon\le1$; larger $\varepsilon$ only weakens the
target bound.  Set $\varepsilon' = \varepsilon/20$, so
$\varepsilon'\in(0,1/6)$, and take $k$ large enough that
$k\ge1/\varepsilon'$, as required by the HFWK construction; this lower bound
is absorbed into the theorem's constant~$k_0$.
Set $m=\lceil 3p/\varepsilon'\rceil=\lceil 60p/\varepsilon\rceil$.
We instantiate the adapted HFWK family stated above on a
ground set $\N$ of size~$m$.  Since
$m \ge 3p/\varepsilon' \ge 3k/\varepsilon'$, all properties
(P1)--(P4) hold for this instance.
The only quantity that changes relative to HFWK's original setting is
the overlap threshold
\[
  g=\left\lceil \varepsilon'k+\frac{3k^2}{m}\right\rceil
  \le \left\lceil \varepsilon'k\left(1+\frac{k}{p}\right)\right\rceil
  \le 2\varepsilon'k+1,
\]
where the last inequality uses $k/p\le1$.
By Yao's minimax principle, it suffices to show the bound for
deterministic algorithms against the uniform distribution over
$\{f_{\tilde{T}}\}$, where $\tilde{T}$ is a random $k$-subset of~$\N$.
Let $\textsc{Alg}$ be a deterministic algorithm
making at most $C(|\N|)$ queries.
Write $C(x)\le A x^d$ for constants $A,d$ and all $x\ge1$.

\medskip\noindent
\emph{Step 1: Indistinguishability.}\;
\textbf{Goal:} show that the algorithm cannot tell whether it is facing
the null instance $f_\varnothing$ (which has no planted optimum) or the
real instance $f_{\tilde{T}}$ (which does).
If the algorithm cannot distinguish the two, its output $P_0$ is the same in
both cases, and any analysis under the null instance applies to the real one.

When given $f_\varnothing$, $\textsc{Alg}$ makes queries
$S_1, \ldots, S_\ell$ ($\ell \le C(|\N|)$) and outputs
$P_0$ with $|P_0| \le p$.
Fix any query set~$S$ on the null-instance transcript.
If $|S|\ge 3k-g$ (a large query), then $f_{\tilde T}(S)=f_\varnothing(S)$
for every $\tilde T$ by~(P4)---the query is too large to reveal the planted set.
Otherwise $|S|<3k-g$, so the query can differ only if the overlap
$X=|S\cap\tilde T|$ exceeds the threshold~$g$.
The random variable $X$ is hypergeometric with mean $\mu=k|S|/m$.
Since $|S|<3k-g$ and
$g\ge \varepsilon'k+3k^2/m$, we have
$g/k-|S|/m\ge(\varepsilon'+3k/m)-(3k-g)/m\ge\varepsilon'$,
so the gap between~$g$ and the mean satisfies $g - \mu \ge \varepsilon' k$.
Fact~\ref{fact:hyp-conc}(i) with $t = \varepsilon' k$ and $n = k$ gives
\[
  \Pr[X>g]
  \le
  \exp(-2\varepsilon'^2 k).
\]
We also apply
Fact~\ref{fact:hyp-conc}(ii) with $a=\lceil\varepsilon'k\rceil$:
since $|S|<3k$ and $m\ge3p/\varepsilon'$,
\[
  \Pr[X>g]
  \le
  \Pr[X\ge \varepsilon'k]
  \le
  \left(\frac{e k}{p}\right)^{\varepsilon'k}.
\]
The two bounds complement each other across all pruning budgets.
When $p$ is small (at most $2ek$), the ground set is only $O(k/\varepsilon')$ and the additive tail decays fast enough:
$C(|\N|)\exp(-2\varepsilon'^2 k)\le\varepsilon/40$ for sufficiently large~$k$.
When $p$ is large ($p>2ek$), the multiplicative tail dominates because $ek/p<1/2$:
\[
  C(|\N|)\left(\frac{e k}{p}\right)^{\varepsilon'k}
  \le
  A\left(\frac{4p}{\varepsilon'}\right)^d
  \left(\frac{e k}{p}\right)^{\varepsilon'k}
  \le \varepsilon/40
\]
for sufficiently large~$k$, uniformly over all $p>2e k$: writing
$r=p/k>2e$, the middle expression is
$A(4/\varepsilon')^d k^d r^d(e/r)^{\varepsilon'k}$, whose maximum over
$r>2e$ occurs at the boundary once $\varepsilon'k>d$ and then decays
exponentially in~$k$.
By a union bound over the at most $C(|\N|)$ null-transcript queries, the event
$\mathcal{E} = \bigl\{f_\varnothing(S_i) = f_{\tilde{T}}(S_i)
  \text{ for all } i = 1, \ldots, \ell\bigr\}$
satisfies
$\Pr[\mathcal{E}]\ge 1-\varepsilon/40$ for sufficiently large~$k$.
Under $\mathcal{E}$, every query returns the same value as under~$f_\varnothing$,
so $\textsc{Alg}$ follows the same execution path and outputs~$P_0$.

\medskip\noindent
\emph{Step 2: Overlap concentration.}\;
\textbf{Goal:} show that the algorithm's output $P_0$ has too little overlap
with the planted optimum $\tilde{T}$ to contain a good solution.

Since Step~1 established that $P_0$ is determined by the null transcript
(independent of $\tilde{T}$), the overlap $|P_0 \cap \tilde{T}|$ is hypergeometric with
\[
  \mathbb{E}\bigl[|P_0 \cap \tilde{T}|\bigr]
  = \frac{k \cdot |P_0|}{|\N|}
  \le \frac{k \cdot p}{3p/\varepsilon'}
  = \frac{\varepsilon' k}{3}.
\]
Note that $p$ cancels: the expected overlap is \emph{independent} of the
pruning budget.
From the definition $g = \lceil \varepsilon'k + 3k^2/|\N| \rceil \ge \varepsilon'k$
(the ceiling dominates the first summand),
the gap satisfies
$g - \mathbb{E}[|P_0 \cap \tilde{T}|] \ge 2\varepsilon'k/3$.
By Fact~\ref{fact:hyp-conc}(i) with $n = k$ and
$t = 2\varepsilon'k/3$:
\[
  \Pr\bigl[|P_0 \cap \tilde{T}| > g\bigr]
  \;\le\; e^{-2t^2/k}
  \;=\; e^{-8\varepsilon'^2 k/9}
  \;=\; e^{-\Omega(\varepsilon^2 k)}
  \;\le\; \varepsilon/20
\]
for sufficiently large~$k$.

\medskip\noindent
\emph{Step 3: Containment bound.}\;
\textbf{Goal:} combine Steps~1 and~2 to bound the expected containment.

Define the good event
$\mathcal{G} = \mathcal{E} \cap \{|P_0 \cap \tilde{T}| \le g\}$.
By union bound, $\Pr[\mathcal{G}] \ge 1 - \varepsilon/10$.
Under~$\mathcal{G}$:
\begin{itemize}
\item $\textsc{Alg}$ outputs $P_0$ (by $\mathcal{E}$);
\item every $S \subseteq P_0$ with $|S| \le k$ satisfies
  $|S \cap \tilde{T}| \le |P_0 \cap \tilde{T}| \le g$,
  so $f_{\tilde{T}}(S) = f_\varnothing(S)$ by~(P4);
\item $\max_{S \subseteq P_0,\, |S| \le k} f_{\tilde{T}}(S)
  = \max_{S \subseteq P_0,\, |S| \le k} f_\varnothing(S)
  \le 1 - e^{-\gamma} + 12\varepsilon'$ by~(P3).
\end{itemize}
Since $\OPT = f_{\tilde{T}}(\tilde{T}) = 1$ by~(P2)
and $f_{\tilde{T}}(S) \le 1$ always:
\begin{align*}
  \mathbb{E}\!\left[\max_{S \subseteq P,\, |S| \le k}
    f_{\tilde{T}}(S)\right]
  &\le \Pr[\mathcal{G}] \cdot (1 - e^{-\gamma} + 12\varepsilon')
    + \Pr[\neg\mathcal{G}] \cdot 1 \\
  &\le (1 - e^{-\gamma} + 12\varepsilon') + \varepsilon/10
  \;=\; 1 - e^{-\gamma} + \tfrac{7\varepsilon}{10}
  \;<\; 1 - e^{-\gamma} + \varepsilon. \qedhere
\end{align*}

\section{Proof of Corollary~\ref{cor:curvature-hardness}}
\label{app:curvature-proof}

\begin{corollary*}[Corollary~\ref{cor:curvature-hardness}, restated]
Let $\kappa \in (0, 1]$ denote total curvature:
$\kappa = 1 - \min_{j \in N} f(j \mid N \setminus \{j\})/f(j)$.
For every $p \ge k$ and every $\varepsilon > 0$,
no algorithm using polynomially many value queries can prune to
$p$ elements with containment
$\alpha > 1 - \kappa/e + \varepsilon$
for all monotone submodular functions with total curvature at most~$\kappa$.
Since the algorithm of Sviridenko, Vondr\'{a}k \&
Ward~\cite{SviridenkoVondrakWard2017} achieves
$\alpha = 1 - \kappa/e$, this bound is tight.
\textup{(}At $\kappa = 0$ the function is modular, so exact containment is trivial.\textup{)}
\end{corollary*}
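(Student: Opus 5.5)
We reduce to the planted-set argument of Theorem~\ref{thm:monotone-hardness}, reusing its $p$-cancellation overlap analysis unchanged, and change only how the hard instance is built. The natural route is to mimic Sviridenko, Vondr\'ak \& Ward's curvature hardness. Start from the $\gamma=1$ HFWK planted family $\{f_T\}$ on the enlarged ground set $\N$ with $m=\lceil 60p/\varepsilon\rceil$. Then \emph{mix in a modular part}:
\[
  h_T(S) \;=\; \kappa\, f_T(S) \;+\; (1-\kappa)\,\frac{|S|}{k}.
\]
The modular term $\ell(S)=|S|/k$ does not depend on $T$. Hence $h_T$ inherits properties (P4) and the indistinguishability of Step~1 from $f_T$ verbatim: $h_T(S)=h_\varnothing(S)$ whenever $f_T(S)=f_\varnothing(S)$.

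\textbf{Step 1: curvature.} We verify that $h_T$ is monotone submodular with total curvature at most $\kappa$. For each $j$ we have
\[
  h_T(j\mid \N\setminus j)\ \ge\ (1-\kappa)/k
  \qquad\text{and}\qquad
  h_T(\{j\}) = \kappa f_T(\{j\})+(1-\kappa)/k .
\]
So we need $f_T(\{j\})\le 1/k$. This holds for the HFWK/Vondr\'ak-style constructions, where singletons have value about $1/k$ since $f_T(T)=1$ is spread over $k$ symmetric-ish elements. If the construction instead gives $f_T(\{j\})\le(1+O(\varepsilon'))/k$, the curvature is $\kappa(1+O(\varepsilon'))$. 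We absorb this by running the construction with a slightly smaller $\kappa$ and using that the target bound is continuous in $\kappa$. Checking this singleton bound is where we inspect the HFWK formula, and it is the step I expect to be the main obstacle. If it fails, the fallback is to use SVW's own planted instance, which already has exactly curvature $\kappa$ and threshold-type indistinguishability. We would then only replace its ground-set size by $m\propto p/\varepsilon$.

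\textbf{Step 2: values.} By (P2), $\OPT_k\ge h_T(T)=\kappa+(1-\kappa)=1$. For $|S|\le k$, (P3) gives
\[
  h_\varnothing(S)\le \kappa(1-1/e+12\varepsilon')+(1-\kappa) = 1-\kappa/e+12\kappa\varepsilon'.
\]

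\textbf{Step 3: overlap and conclusion.} We run Steps~1--3 of the proof of Theorem~\ref{thm:monotone-hardness} with $h$ in place of $f$. With probability at least $1-\varepsilon/10$ the output $P_0$ is the null-transcript output and satisfies $|P_0\cap\tilde T|\le g$. On this event every $k$-subset of $P_0$ has value at most $1-\kappa/e+12\varepsilon'$. Since $h_T\le 1$ always (both parts are at most $1$ on feasible sets, and we cap if needed), the expected containment is below $1-\kappa/e+\varepsilon$, as in the theorem.

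\textbf{Tightness.} Tightness is immediate: the SVW algorithm outputs a $k$-set of value $(1-\kappa/e)\OPT_k$, and we prune to that set (or to its prefixes, to cover the budget range).
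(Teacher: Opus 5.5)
Your proposal is correct and follows the paper's proof. Your $h_T=\kappa f_T+(1-\kappa)|S|/k$ is exactly $\kappa$ times the paper's augmentation $\hat f_T=f_T+\frac{1-\kappa}{\kappa}|S|\,p_{\max}$ with $p_{\max}=1/k$. The singleton bound you flag in Step~1 is the same normalization $f_T(T)=k\,p_{\max}$ that the paper takes for granted in Fact~\ref{fact:svw-curvature}; with the true $p_{\max}$ the curvature bound is automatic and any slack moves into the value ratio, absorbed by the same continuity argument you give. Both arguments then reuse the indistinguishability and overlap steps of Theorem~\ref{thm:monotone-hardness} unchanged.

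One small caveat concerns tightness: SVW's algorithm is not known to have a greedy-style prefix property, so your parenthetical about covering the budget range via prefixes of its output is unsupported. The paper, however, likewise asserts tightness only by citation.
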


\paragraph{Proof overview.}
The proof is a two-step reduction.

\smallskip
\noindent\emph{Step~1: SVW augmentation (imported).}
Sviridenko, Vondr\'{a}k \& Ward~\cite{SviridenkoVondrakWard2017}
show that augmenting the normalized planted-optimum family used in their
curvature lower bound with a linear term produces controlled curvature
without altering the information-theoretic landscape
(Fact~\ref{fact:svw-curvature} below).

\smallskip
\noindent\emph{Step~2: Transfer from Theorem~\ref{thm:monotone-hardness} (original).}
Because the linear term depends only on $|A|$, it carries no information
about the planted set~$T$; the indistinguishability and overlap arguments
from Theorem~\ref{thm:monotone-hardness} transfer directly.

\begin{fact}[SVW curvature augmentation]
\label{fact:svw-curvature}
In the normalized form used by Sviridenko, Vondr\'{a}k, and
Ward~\cite{SviridenkoVondrakWard2017}, the curvature augmentation has the
following form.
Let $\{f_T\}$ be a planted-optimum family on ground set $\N$ in the
normalization used by the SVW lower bound, with
$f_T(T) = k\,p_{\max}$ and $p_{\max} = \max_i f_T(\{i\})$.
For $\kappa \in (0,1]$, define
$\hat{f}_T(A) = f_T(A) + \frac{1-\kappa}{\kappa}\,|A|\,p_{\max}$.
Then: \textup{(i)}~$\hat{f}_T$ is monotone submodular with total curvature
$\le \kappa$; \textup{(ii)}~$\hat{f}_T(T) = k\,p_{\max}/\kappa$;
\textup{(iii)}~the linear term is determined by $|A|$ alone and therefore
provides no information about~$T$.
\end{fact}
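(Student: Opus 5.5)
The plan is to verify the three claims directly from the definition $\hat f_T(A) = f_T(A) + c\,|A|\,p_{\max}$ with $c = (1-\kappa)/\kappa \ge 0$. The only properties of the base family I will use are that each $f_T$ is non-negative, monotone and submodular, that $f_T(T) = k\,p_{\max}$, and that $p_{\max} = \max_i f_T(\{i\})$. The argument is a short computation for each claim.

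\emph{Claim (i).} The map $A \mapsto c\,p_{\max}|A|$ is modular with non-negative weights. Adding it to the monotone submodular $f_T$ therefore preserves both monotonicity and submodularity, since all marginals shift by the same constant $c\,p_{\max} \ge 0$. For curvature, fix $j \in \N$ and bound numerator and denominator separately:
\[
  \hat f_T(j \mid \N\setminus\{j\}) = f_T(j \mid \N\setminus\{j\}) + c\,p_{\max} \ge c\,p_{\max},
\]
using monotonicity of $f_T$, and
\[
  \hat f_T(\{j\}) = f_T(\{j\}) + c\,p_{\max} \le (1+c)\,p_{\max} = p_{\max}/\kappa .
\]
Hence
\[
  \frac{\hat f_T(j\mid \N\setminus\{j\})}{\hat f_T(\{j\})} \ge \frac{c}{1+c} = 1-\kappa,
\]
and minimizing over $j$ gives total curvature at most $\kappa$. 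When $\kappa = 1$ we have $c = 0$, and the bound $\kappa \le 1$ is automatic for any monotone submodular function. If some $\hat f_T(\{j\}) = 0$, which can only happen when $c = 0$, I use the usual convention of excluding such $j$.

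\emph{Claim (ii).} This is arithmetic:
\[
  \hat f_T(T) = k\,p_{\max} + c\,k\,p_{\max} = k\,p_{\max}(1+c) = k\,p_{\max}/\kappa .
\]

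\emph{Claim (iii).} The added term is a function of $|A|$ and the scalar $p_{\max}$ only. The one point needing care, and the step I expect to be the only real obstacle, is that $p_{\max}$ must not depend on the planted set $T$. Otherwise the linear term could leak information about $T$.

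I will argue this from the symmetry of the planted family. For any permutation $\sigma$ of $\N$, the construction satisfies $f_{\sigma(T)}(\sigma(A)) = f_T(A)$; for instance, the HFWK family depends on $T$ only through $t_T(S)$. Consequently the multiset of singleton values $\{f_T(\{i\})\}_{i \in \N}$ is the same for every $T$, and so $p_{\max}$ is a constant of the family. In fact, in the SVW normalization, singleton values are fixed by design. With $p_{\max}$ constant, the linear term is the same function of $A$ for every $T$. So for any query $A$, the difference $\hat f_T(A) - f_T(A)$ is known in advance and reveals nothing about $T$, which proves (iii).
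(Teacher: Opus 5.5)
Your verification is correct, but it is not the paper's route: the paper gives no proof of this Fact. It attributes the statement to Sviridenko, Vondr\'ak and Ward and uses it as a black box in the proof of Corollary~\ref{cor:curvature-hardness}.

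Your argument is instead a self-contained elementary check.
\begin{itemize}
\item For (i), adding the nonnegative modular term $c\,p_{\max}|A|$ shifts every marginal by the same constant. The curvature ratio is then at least $c\,p_{\max}/((1+c)\,p_{\max}) = 1-\kappa$, which uses only $f_T(j \mid \N\setminus\{j\}) \ge 0$ and $f_T(\{j\}) \le p_{\max}$.
\item Claim (ii) is arithmetic from $|T|=k$ and the normalization $f_T(T)=k\,p_{\max}$.
\item Claim (iii) needs $p_{\max}$ to be independent of $T$. You correctly derive this from the relabeling symmetry $f_{\sigma(T)}(\sigma(A)) = f_T(A)$.
\end{itemize}

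The citation buys brevity. Your check buys transparency about hypotheses the statement leaves implicit. The paper writes $p_{\max} = \max_i f_T(\{i\})$ with a $T$-subscript. Without symmetry, or some other reason for $T$-independence, the claim that the linear term is ``determined by $|A|$ alone'' would not follow.

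Your check also isolates exactly what must be re-verified when the augmentation is moved to a different planted family, as the paper does with the HFWK family. The requirements are monotone submodularity, the normalization $f_T(T)=k\,p_{\max}$, and $T$-independence of $p_{\max}$; no deeper structure of the SVW instance is needed. For HFWK, property (P2) fixes $f_T(T)=1$, so the normalization amounts to the largest singleton value being exactly $1/k$. The paper does not check this explicitly.
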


\begin{proof}[Proof of Corollary~\ref{cor:curvature-hardness}]

\emph{Step~1 (SVW augmentation).}
Apply Fact~\ref{fact:svw-curvature} to the HFWK family from
Theorem~\ref{thm:monotone-hardness} with the same enlarged ground set
$m = \lceil 3p/\varepsilon' \rceil$.
The resulting $\hat{f}_T$ has curvature $\le \kappa$.

\emph{Step~2 (Transfer from Theorem~\ref{thm:monotone-hardness}).}
By Fact~\ref{fact:svw-curvature}(iii), the linear term carries no
information about the planted set.  All queries on $\hat{f}_T$ reduce
to queries on $f_T$ plus a cardinality lookup, so the
indistinguishability and overlap concentration arguments from
Theorem~\ref{thm:monotone-hardness} apply unchanged.
Under the good event (overlap below $g$), every $k$-subset
$S \subseteq P$ satisfies
\[
  \hat{f}_T(S)
  \;\le\; (1 - 1/e + O(\varepsilon))\,k\,p_{\max}
    + \tfrac{1-\kappa}{\kappa}\,k\,p_{\max}
  \;=\; \tfrac{1 - \kappa/e + O(\varepsilon)}{\kappa}\,k\,p_{\max},
\]
while $\hat{f}_T(T) = k\,p_{\max}/\kappa$
by Fact~\ref{fact:svw-curvature}(ii).
The ratio is $1 - \kappa/e + O(\varepsilon)$.
\end{proof}

\begin{table}[h]
\centering
\begin{tabular}{@{}lccccc@{}}
\toprule
$p$ & $c' = k/p$ & Composition $\alpha \le$ & Hardness $\alpha \le$ & Best algo $\alpha$ & Status \\
\midrule
$k$ & $1$ & $1-1/e$ & $1-1/e$ & $1-1/e$ & \textbf{Tight} \\
$2k$ & $1/2$ & $0.843$ & $1-1/e$ & $1-1/e$ & \textbf{Tight} \\
$ck$ (fixed $c>1$) & $1/c$ & $(1-1/e)/\alpha^*(1/c)$ & $1-1/e$ & $1-1/e$ & \textbf{Tight} \\
$\omega(k)$ & $\to 0$ & $\to 1$ & $1-1/e$ & $1-1/e$ & \textbf{Tight} \\
\bottomrule
\end{tabular}
\caption{Monotone containment landscape at $\gamma = 1$ (full submodularity).
The composition column is the benchmark obtained by ``prune then optimize'';
it matches $1-1/e$ only at $p=k$ and weakens for larger pruning budgets
(approaching~$1$ only when $p/k$ grows).
The hardness and best-algorithm columns show that the true monotone containment factor remains exactly $1-1/e$ for every $p\ge k$.
For $\gamma$-weakly DR functions, replace $1-1/e$ with $1-e^{-\gamma}$; for curvature~$\kappa$, use $1-\kappa/e$ (Corollary~\ref{cor:curvature-hardness}).}
\label{tab:monotone}
\end{table}

\section{Full Proof of Theorem~\ref{thm:half-containment}}
\label{app:half-containment-proof}

\begin{theorem*}[Theorem~\ref{thm:half-containment}, restated]
Let $f$ be a non-negative submodular function on ground set $N$, and let
$k \le n$.
For any $\varepsilon > 0$, there exists a set $G \subseteq N$ with
$|G| = O(k/\varepsilon)$ such that for every $k' \le k$,
\[
  \max_{T \subseteq G,\, |T| \le k'} f(T)
  \;\ge\; \left(\tfrac{1}{2} - \varepsilon\right) \cdot \OPT_{k'},
\]
where $\OPT_{k'} = \max_{|S| \le k'} f(S)$.
Moreover, $G$ can be constructed in $O(nk/\varepsilon)$ oracle queries.
\end{theorem*}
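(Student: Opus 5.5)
The plan is to analyze the sequential disjoint greedy construction from the proof sketch of Theorem~\ref{thm:half-containment}. Set $\ell = \lceil 1/\varepsilon\rceil$. If $n < \ell k$, output $G=N$ and containment is exact. Otherwise:
\begin{itemize}
\item run greedy on $N_1 = N$ for exactly $k$ steps to get $G_1 = (a^{(1)}_1,\dots,a^{(1)}_k)$;
\item then on $N_2 = N\setminus G_1$, and so on for $\ell$ runs;
\item output $G=\bigcup_i G_i$, which has size $\ell k$ and costs $O(\ell n k)$ queries.
\end{itemize}
Each greedy step adds an element of maximum marginal gain $f(u\mid \cdot)$ among the remaining candidates. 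It keeps adding even when that gain is negative, so every run has length exactly $k$ and the prefix structure used below is available; the ground set is large enough because $n\ge \ell k$.

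Fix $k'\le k$ and an optimal $O$ with $|O|\le k'$. Write $\dotO = O\cap G$, $\hatO = O\setminus G$ and $b=|\hatO|\le k'\le k$, and let $A_i$ be the length-$b$ prefix of run $i$. The argument then has three steps.

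\emph{Averaging.} The $A_i$ are pairwise disjoint. I will show
\[
\sum_{i=1}^{\ell} f(O\cup A_i) \;\ge\; (\ell-1)f(O) + f\Big(O\cup \bigcup_i A_i\Big) \;\ge\; (\ell-1) f(O).
\]
The first inequality comes from adding the sets $A_i$ one at a time. Submodularity gives $f(O\cup A_i)-f(O) \ge f(O\cup A_1\cup\dots\cup A_i) - f(O\cup A_1\cup\dots\cup A_{i-1})$, and the right-hand sides telescope. The second inequality is non-negativity. Hence some $i^*$ has $f(O\cup A)\ge (1-1/\ell) f(O)$, where $A = A_{i^*}$.

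\emph{Greedy domination.} Every $o\in \hatO$ lies outside $G$, so it was never removed by earlier runs and never chosen by run $i^*$. It was therefore a candidate at every step $j\le b$ of that run. Order $\hatO = \{o_1,\dots,o_b\}$ arbitrarily and let $A^{j}$ denote the length-$j$ prefix. Submodularity, then the greedy choice at step $j$, give
\[
f(\hatO\cup A)-f(A) \;\le\; \sum_j f(o_j\mid A) \;\le\; \sum_j f(o_j\mid A^{j-1}) \;\le\; \sum_j f(a_j\mid A^{j-1}) \;=\; f(A)-f(\emptyset).
\]
So $f(\hatO\cup A)\le 2f(A)-f(\emptyset)$.

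\emph{Combining.} Both $\hatO\cup A$ and $\dotO\cup A$ contain $A$. Their intersection is exactly $A$, because $\hatO\cap\dotO=\emptyset$ and $\hatO\cap A=\emptyset$. Their union is $O\cup A$. Submodularity therefore gives
\[
f(O\cup A)\;\le\; f(\hatO\cup A)+f(\dotO\cup A)-f(A) \;\le\; f(A)+f(\dotO\cup A)-f(\emptyset).
\]
Dropping $f(\emptyset)\ge 0$ yields $\max\{f(A),f(\dotO\cup A)\}\ge \tfrac12(1-1/\ell)\OPT_{k'}\ge(\tfrac12-\varepsilon)\OPT_{k'}$. Both witnesses lie in $G$, and they have sizes $b\le k'$ and $|\dotO|+b=|O|\le k'$. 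Since $G$ does not depend on $k'$, this holds for every $k'\le k$ simultaneously.

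The step I expect to need the most care is the averaging step in the non-monotone setting. Its correctness hinges on the $A_i$ being genuinely disjoint, and on using the telescoping form of submodularity rather than any monotonicity. A secondary subtlety is in the domination step. It must use that elements of $\hatO$ survive all earlier runs, which is exactly why the runs are taken on successively shrunk ground sets. It must also use that greedy continues through negative marginals, so that prefixes of length $b$ exist and satisfy the exchange inequality.
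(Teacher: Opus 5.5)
Your proposal is correct and matches the paper's proof step for step. The steps are the same sequential disjoint greedy construction with $\ell=\lceil 1/\varepsilon\rceil$, the same averaging over the disjoint length-$b$ prefixes, the same greedy-domination chain giving $f(\hatO\cup A)\le 2f(A)-f(\emptyset)$, and the same lattice decomposition with $X=\hatO\cup A$, $Y=\dotO\cup A$. Your telescoping derivation of the disjoint-union bound and your remark that each run must take exactly $k$ steps, even through negative marginals, spell out details the paper leaves implicit.
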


\paragraph{Proof overview.}
The proof uses the \emph{sequential disjoint greedy} (SDG) scheme and
proceeds in four steps, all original to this paper.
The high-level idea: run greedy many times on disjoint parts of the ground set, find the best run by averaging, then show that this run's greedy prefix captures at least half the optimal value.

\smallskip
\noindent\emph{Step~1: Averaging.}
Build $\ell$ disjoint greedy solutions.
Because the runs use disjoint elements, their total ``damage'' to the optimal value is bounded by submodularity (Fact~\ref{fact:submod-disjoint}).
Averaging selects the run that complements OPT best, losing only a $(1-1/\ell)$ factor.

\smallskip
\noindent\emph{Step~2: Partial-greedy domination.}
The elements of $\hatO = O \setminus G$ (the ``missed'' OPT elements) were available as candidates during the chosen greedy run but were not picked.
Since greedy always picks the highest-marginal element, each greedy pick dominates the corresponding missed element.
Telescoping gives $f(\hatO \cup A) \le 2f(A) - f(\emptyset)$: the greedy prefix captures at least half the value of the prefix augmented with the missed elements.

\smallskip
\noindent\emph{Step~3: Submodularity decomposition.}
The lattice inequality $f(X \cup Y) + f(X \cap Y) \le f(X) + f(Y)$,
applied to $X = \hatO \cup A$ and $Y = \dotO \cup A$,
relates the value of $O \cup A$ to the two feasible candidates $A$ and $\dotO \cup A$.

\smallskip
\noindent\emph{Step~4: Feasibility.}
Both $A$ and $\dotO \cup A$ have size at most $k'$ and lie in $G$, so the maximum over them
gives the claimed $1/2 - \varepsilon$ containment.

\noindent The argument uses one standard submodularity fact for disjoint sets,
stated for reference:

\begin{fact}[Submodularity of disjoint-union marginals]
\label{fact:submod-disjoint}
Let $f$ be non-negative and submodular, and let
$A_1, \ldots, A_\ell$ be pairwise disjoint.
Then for any set~$O$:
\begin{align*}
  \sum_{i=1}^\ell f(A_i \mid O)
  &\ge
  f\!\bigl(\textstyle\bigcup_{i} A_i \,\big|\, O\bigr),\\
  f\!\bigl(\textstyle\bigcup_{i} A_i \,\big|\, O\bigr)
  &\le
  f\!\bigl(\textstyle\bigcup_{i} A_i \cup O\bigr).
\end{align*}
This follows from repeated application of the diminishing-returns property
of submodularity: adding disjoint sets one at a time, each marginal is at
least as large as the joint marginal.
\end{fact}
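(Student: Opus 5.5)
The plan is to prove the two inequalities separately. The first is a telescoping argument driven by the lattice form of submodularity, and the second is immediate from non-negativity. Throughout, $f(S \mid O) := f(S \cup O) - f(O)$ for a set $S$; this extends the single-element marginal notation of Section~\ref{sec:prelim}. Note that $O$ is an arbitrary set and may intersect the $A_i$, so I will not assume $A_i \cap O = \emptyset$.

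\emph{Step 1: telescoping.} Write $B_i = A_1 \cup \dots \cup A_{i-1}$, with $B_1 = \emptyset$. The left side of the target telescopes exactly:
\[
  f\!\bigl(\textstyle\bigcup_i A_i \,\big|\, O\bigr) \;=\; \sum_{i=1}^{\ell} \bigl[f(O \cup B_i \cup A_i) - f(O \cup B_i)\bigr] \;=\; \sum_{i=1}^{\ell} f(A_i \mid O \cup B_i).
\]
So it suffices to show $f(A_i \mid O \cup B_i) \le f(A_i \mid O)$ for each $i$.

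\emph{Step 2: per-term diminishing returns.} I will not use the element-wise definition here, since $A_i$ is a set and may overlap $O$. Instead I apply the lattice inequality $f(X)+f(Y) \ge f(X\cup Y)+f(X\cap Y)$, which is equivalent to the element-wise definition by the standard chain argument, with $X = O \cup A_i$ and $Y = O \cup B_i$. Then $X \cup Y = O \cup B_i \cup A_i$ and
\[
  X \cap Y = O \cup (A_i \cap B_i) = O,
\]
because the $A_j$ are pairwise disjoint, so $A_i \cap B_i = \emptyset$. This yields
\[
  f(O \cup A_i) + f(O \cup B_i) \;\ge\; f(O \cup B_i \cup A_i) + f(O).
\]
Rearranging gives exactly $f(A_i \mid O) \ge f(A_i \mid O \cup B_i)$. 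Summing over $i$ and combining with Step~1 proves the first inequality.

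\emph{Step 3: second inequality.} By definition $f(\bigcup_i A_i \mid O) = f(\bigcup_i A_i \cup O) - f(O)$, and $f(O) \ge 0$ by non-negativity. This is the only place non-negativity is used; no monotonicity is needed.

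The only delicate point is Step~2: recognizing that the intersection $X \cap Y$ collapses to exactly $O$. This is precisely where pairwise disjointness of the $A_i$ enters, and it is what makes the argument valid even when $O$ overlaps some of the $A_i$. Everything else is routine bookkeeping.
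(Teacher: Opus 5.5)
Your proof is correct and takes the same route as the paper's one-line justification. You telescope the joint marginal over the disjoint pieces, bound each term $f(A_i \mid O \cup B_i) \le f(A_i \mid O)$ by diminishing returns, and get the second inequality from $f(O) \ge 0$. Invoking the lattice inequality with $X \cap Y = O$ just makes explicit two points the paper leaves implicit: the case where $O$ overlaps some $A_i$, and where pairwise disjointness is used.
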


\begin{proof}[Proof of Theorem~\ref{thm:half-containment}]
Set $\ell = \lceil 1/\varepsilon \rceil$.
If $|N| < \ell k$, output $G = N$; containment is exact since every $k'$-subset
of~$N$ is available, and $|G| = |N| < \ell k = O(k/\varepsilon)$.
Otherwise, build $\ell$ disjoint greedy solutions sequentially:
$G_1 = \mathrm{greedy}(N, k)$,
$G_2 = \mathrm{greedy}(N \setminus G_1, k)$, \ldots,
$G_\ell = \mathrm{greedy}(N \setminus \bigcup_{j<\ell} G_j, k)$.
Let $G = \bigcup_{i=1}^\ell G_i$, so $|G| = \ell k$.

Fix any $k' \le k$.
Write $O$ for an optimal feasible set: $|O| \le k'$ and
$f(O) = \OPT_{k'}$.
Let $b = |O \setminus G|$ and decompose:
\[
  \dotO = O \cap G, \quad \hatO = O \setminus G, \quad
  |\dotO| = |O| - b, \quad |\hatO| = b.
\]

\medskip\noindent\textbf{Step 1 (Averaging).}
For each $i$, let $A_i$ denote the first $b$ elements of greedy run~$i$.
Since $A_i \subseteq G_i$ and the $G_i$ are disjoint, the $A_i$ are disjoint.
By Fact~\ref{fact:submod-disjoint} applied to the disjoint collection:
\[
  \sum_{i=1}^\ell \bigl[f(O) - f(O \cup A_i)\bigr]
  = -\sum_{i=1}^\ell f(A_i \mid O)
  \le -f\!\bigl(\textstyle\bigcup A_i \mid O\bigr)
  = f(O) - f\!\bigl(O \cup \textstyle\bigcup A_i\bigr) \le f(O).
\]
By averaging, there exists $i^*$ with
$f(O \cup A_{i^*}) \ge (1 - 1/\ell)\, f(O)$.
Set $A = A_{i^*}$.

\medskip\noindent\textbf{Step 2 (Partial greedy).}
Recall $A = A_{i^*}$, the first $b$ elements of greedy run~$i^*$
(note $b \le k' \le k$, so these elements are well-defined).
Since $\hatO \cap G_j = \emptyset$ for every $j$ (by definition of $\hatO = O \setminus G$),
every element of $\hatO$ is available as a candidate in greedy run~$i^*$.
Thus each greedy pick $a_j$ satisfies
$f(a_j \mid A_{j-1}) \ge f(\hat{o}_j \mid A_{j-1})$
for some ordering $\hat{o}_1, \ldots, \hat{o}_b$ of $\hatO$
(since greedy picks the maximum marginal and $\hat{o}_j$ is available).

By the telescoping chain and submodularity:
\begin{align}
  f(A) - f(\emptyset) &= \sum_{j=1}^b f(a_j \mid A_{j-1})
  \ge \sum_{j=1}^b f(\hat{o}_j \mid A_{j-1})
  \ge \sum_{j=1}^b f(\hat{o}_j \mid A) \notag \\
  &\ge f(\hatO \mid A) = f(\hatO \cup A) - f(A). \label{eq:greedy-half-app}
\end{align}
Therefore $2f(A) - f(\emptyset) \ge f(\hatO \cup A)$.

\medskip\noindent\textbf{Step 3 (Submodularity decomposition).}
Apply submodularity with $X = \hatO \cup A$ and $Y = \dotO \cup A$:
$f(X \cup Y) + f(X \cap Y) \le f(X) + f(Y)$.
Since $X \cup Y = O \cup A$ and $X \cap Y = A$:
\[
  f(O \cup A) + f(A) \le f(\hatO \cup A) + f(\dotO \cup A).
\]
Using~\eqref{eq:greedy-half-app}: $f(\hatO \cup A) \le 2f(A) - f(\emptyset)$, so
\begin{equation}
  f(O \cup A) \le f(A) - f(\emptyset) + f(\dotO \cup A). \label{eq:decomposition-app}
\end{equation}

\medskip\noindent\textbf{Step 4 (Feasibility and conclusion).}
Both candidates are feasible subsets of $G$:
$A \subseteq G_{i^*} \subseteq G$ with $|A| = b \le k'$;
and $\dotO \cup A \subseteq G$ with
$|\dotO \cup A| \le |\dotO| + |A| = |O| \le k'$.
From~\eqref{eq:decomposition-app} and Step~1:
\[
  \max\bigl(f(A),\, f(\dotO \cup A)\bigr)
  \ge \frac{f(O \cup A) + f(\emptyset)}{2}
  \ge \frac{(1 - 1/\ell) \cdot \OPT_{k'} + f(\emptyset)}{2}.
\]
Since $f(\emptyset) \ge 0$ and $k' \le k$ was arbitrary with $\ell = \lceil 1/\varepsilon \rceil$,
the set $G$ achieves containment factor
$\alpha \ge (1-1/\ell)/2 \ge 1/2 - \varepsilon$ simultaneously at every
scale $k' \le k$.
\end{proof}

\section{Item-Splitting Machinery and Proof of Theorem~\ref{thm:knapsack-half-containment}}
\label{app:item-splitting}

This appendix makes the knapsack extension fully self-contained.

\emph{Why item splitting is needed.}
In the cardinality setting, each greedy prefix of length $b$ matches the budget exactly.
With knapsack constraints, a density-greedy prefix may overshoot the target budget:
the last item added may have cost larger than the remaining budget allows.
Item splitting resolves this by conceptually breaking the boundary item into two
fragments---one that fits the budget and one that does not---while preserving
submodularity of the objective function.

The item-splitting construction is due to Feldman and
Kuhnle~\cite{FeldmanKuhnle2025} (Definitions~7.13 and~7.15 there);
we restate the pieces needed for Theorem~\ref{thm:knapsack-half-containment}
and defer the submodularity-preservation proof
to~\cite{FeldmanKuhnle2025}.

\subsection{The \texorpdfstring{$\gamma$}{gamma}-split extension}

\begin{definition}[Gamma-split]
\label{def:split-appendix}
This is the split operation of Feldman and Kuhnle~\cite{FeldmanKuhnle2025}.
Let $f\colon 2^N \to \mathbb{R}_{\ge 0}$ be submodular with positive price
vector $p$. For $s \in N$ and $\gamma \in (0, 1)$, the \emph{$\gamma$-split
of $s$} produces a new ground set $N^* = (N \setminus \{s\}) \cup
\{\sigma_1, \sigma_2\}$, prices $p^*_{\sigma_1} = \gamma p_s$,
$p^*_{\sigma_2} = (1 - \gamma) p_s$, $p^*_u = p_u$ otherwise, and
function
\[
  f^*(S) \;=\; f(\widehat S)
  \;+\; \gamma \cdot f(s \mid \widehat S) \cdot \mathbb{I}[\sigma_1 \in S]
  \;+\; (1 - \gamma) \cdot f(s \mid \widehat S) \cdot \mathbb{I}[\sigma_2 \in S],
\]
where $\widehat S = S \setminus \{\sigma_1, \sigma_2\}$. When
$S \cap \{\sigma_1, \sigma_2\} = \emptyset$, $f^*(S) = f(S)$. When
$\{\sigma_1, \sigma_2\} \subseteq S$, $f^*(S) = f(\widehat S + s)$.
\end{definition}

\begin{lemma}[Split preserves submodularity]
\label{lem:split-submod-appendix}
Feldman and Kuhnle~\cite[Lemma~7.14]{FeldmanKuhnle2025} prove that
the function $f^*$ of Definition~\ref{def:split-appendix} is non-negative
and submodular on $N^*$. Iterating the split operation finitely many times
preserves both properties.
\end{lemma}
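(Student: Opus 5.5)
The plan is to rewrite $f^*$ as a convex combination of two copies of $f$, and then prove submodularity through the local criterion $f^*(x \mid S) \ge f^*(x \mid S \cup \{y\})$ for all $S \subseteq N^*$ and distinct $x, y \notin S$. This criterion is equivalent to submodularity on a finite ground set.

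For $S \subseteq N^*$ write $\widehat S = S \setminus \{\sigma_1,\sigma_2\}$ and set
\[
  w(S) = \gamma\,\mathbb{I}[\sigma_1\in S] + (1-\gamma)\,\mathbb{I}[\sigma_2\in S] \in [0,1].
\]
Then Definition~\ref{def:split-appendix} reads
\[
  f^*(S) = \bigl(1-w(S)\bigr) f(\widehat S) + w(S)\, f(\widehat S \cup \{s\}).
\]

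\textbf{Non-negativity.} This is immediate from the convex-combination form, since $f \ge 0$ and $w(S) \in [0,1]$.

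\textbf{Marginals of $f^*$.} There are two kinds of element to treat.
\begin{itemize}
  \item For an original element $u \in N \setminus \{s\}$ with $u \notin S$, the weight $w$ is unchanged by adding $u$. Hence
  \[
    f^*(u \mid S) = \bigl(1-w(S)\bigr) f(u \mid \widehat S) + w(S)\, f(u \mid \widehat S \cup \{s\}).
  \]
  \item For a split element $\sigma_1 \notin S$, the base set $\widehat S$ is unchanged and $w$ increases by $\gamma$. Hence $f^*(\sigma_1 \mid S) = \gamma\, f(s \mid \widehat S)$. Symmetrically, $f^*(\sigma_2 \mid S) = (1-\gamma)\, f(s \mid \widehat S)$.
\end{itemize}

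\textbf{Checking the local criterion.} I would go through the cases for the pair $(x,y)$.
\begin{itemize}
  \item \emph{$x = \sigma_i$.} Adding $y=\sigma_j$ leaves $\widehat S$ fixed, so the marginal is unchanged. Adding an original $y$ enlarges $\widehat S$, so $f(s \mid \widehat S)$ weakly drops by submodularity of $f$.
  \item \emph{$x = u$ original, $y$ original.} Both $f(u \mid \widehat S)$ and $f(u \mid \widehat S \cup \{s\})$ weakly drop by submodularity of $f$. The weights $1-w$ and $w$ are fixed and non-negative, so the combination drops.
  \item \emph{$x = u$ original, $y = \sigma_j$.} Here $\widehat S$ is fixed and $w$ increases. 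The marginal is a convex combination of $f(u \mid \widehat S)$ and $f(u \mid \widehat S \cup \{s\})$, and submodularity gives $f(u \mid \widehat S) \ge f(u \mid \widehat S \cup \{s\})$. Shifting weight toward the smaller term can only decrease the marginal.
\end{itemize}

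\textbf{Main obstacle.} The only genuinely non-routine step is the last case: adding a split fragment changes the weight $w$ rather than the base set $\widehat S$. The convex-combination rewriting is exactly what makes that case transparent, so the plan is to set it up first.

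\textbf{Iterating the split.} This follows by induction on the number of splits. Each split takes a non-negative submodular function with positive prices on a finite ground set to another such triple, since $\gamma p_s > 0$ and $(1-\gamma)p_s > 0$. The single-split argument therefore applies verbatim at every stage.
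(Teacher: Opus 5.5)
Your proof is correct. It is a different route only in the sense that the paper gives no argument here: it imports the statement from Feldman and Kuhnle (their Lemma~7.14) and explicitly defers the submodularity-preservation proof to that reference. So your proposal replaces a citation with a self-contained verification.

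Rewriting $f^*(S)=(1-w(S))\,f(\widehat S)+w(S)\,f(\widehat S\cup\{s\})$ does two things. It makes non-negativity immediate, and it reduces the local criterion $f^*(x\mid S)\ge f^*(x\mid S\cup\{y\})$ to short cases. The only non-routine case is original $x$, fragment $y$, where you correctly use that the marginal is affine and nonincreasing in $w$. Your case split is exhaustive. Your induction also covers the situation that actually arises in the $\theta$-construction, where a fragment $\sigma_1$ is itself split again later. This works because your single-split argument applies to an arbitrary non-negative submodular base function and an arbitrary element $s$, including a prior fragment.

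A side benefit is that the same computation yields the other split facts the paper takes from that reference:
\begin{itemize}
\item your fragment marginal $f^*(\sigma_i\mid S)=\gamma_i\,f(s\mid\widehat S)$ is exactly Observation~\ref{obs:density-preservation};
\item your convex-combination form is the identity $f^*(A)=(1-\gamma_{i^\star})f(\widehat A)+\gamma_{i^\star}f(\widehat A^{+})$ that Step~5 of the knapsack proof uses to obtain \eqref{eq:split-loss}.
\end{itemize}

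The citation buys brevity. Your version makes the appendix self-contained and verifies all the split machinery in one place.
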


\begin{observation}[Density preservation]
\label{obs:density-preservation}
This is the density-preservation property from
Feldman and Kuhnle~\cite[Obs.~7.16]{FeldmanKuhnle2025}.
After a $\gamma$-split of $s$, for any set $S \subseteq N^*$ with
$\sigma_1 \notin S$ resp.\ $\sigma_2 \notin S$,
\[
  f^*(\sigma_1 \mid S) \;=\; \gamma \cdot f(s \mid \widehat S),
  \qquad
  f^*(\sigma_2 \mid S) \;=\; (1 - \gamma) \cdot f(s \mid \widehat S).
\]
Together with $p^*_{\sigma_i} = \gamma_i p_s$ (with $\gamma_1 = \gamma$,
$\gamma_2 = 1 - \gamma$), this gives the \emph{density equality}
$f^*(\sigma_i \mid A_{\sigma_i}) / p^*_{\sigma_i}
= f(s \mid A_s) / p_s$, where $A_s$ is the greedy prefix up to $s$ and
$A_{\sigma_i}$ is its image under the split.
Moreover, the equality is stable under iteration: if $s$ itself was
produced by a prior $\gamma'$-split of an ancestor $\tilde s$, then
$f^*(s \mid A_s)/p^*_s = f^{**}(\tilde s \mid A_{\tilde s})/p^{**}_{\tilde s}$
by the same identity applied to that earlier split, and inductively
every fragment $u$ in the descendants of $\tilde s$ satisfies
$f^*(u \mid A_u)/p^*_u = f(\tilde s \mid A_{\tilde s})/p_{\tilde s}$,
i.e.\ the density class of an element is preserved across any
finite sequence of splits applied to it or to disjoint elements
of the prefix.
\end{observation}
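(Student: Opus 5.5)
The plan is a direct computation from Definition~\ref{def:split-appendix}, followed by an induction on the number of splits. The whole argument rests on one structural fact. In the formula for $f^*(S)$, the coefficient attached to $\mathbb{I}[\sigma_1\in S]$ depends on $S$ only through $\widehat S = S\setminus\{\sigma_1,\sigma_2\}$. Adding or removing $\sigma_1$ or $\sigma_2$ therefore never changes $\widehat S$, and this makes the fragments' marginals modular in each other.

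\emph{Step 1 (single split, marginals).} Fix $S\subseteq N^*$ with $\sigma_1\notin S$. Then $\widehat{S\cup\{\sigma_1\}}=\widehat S$. Subtracting the two instances of the defining formula, the term $f(\widehat S)$ cancels. The $\sigma_2$-term $(1-\gamma)f(s\mid\widehat S)\,\mathbb{I}[\sigma_2\in S]$ is identical in both and also cancels, whether or not $\sigma_2\in S$. What remains is $f^*(\sigma_1\mid S)=\gamma f(s\mid\widehat S)$. The case of $\sigma_2$ is symmetric, with $1-\gamma$ in place of $\gamma$. No submodularity is needed here; the argument is pure bookkeeping.

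\emph{Step 2 (density equality).} Let $A_s\subseteq N$ be the greedy prefix preceding $s$. Let $A_{\sigma_i}$ be its image in $N^*$: elements other than $s$ are unchanged, and since $s\notin A_s$, no fragment of $s$ appears, except that for $\sigma_2$ we may also include $\sigma_1$ if it precedes $\sigma_2$ in the split order. In either case $\widehat{A_{\sigma_i}}=A_s$. Step~1 then gives $f^*(\sigma_i\mid A_{\sigma_i})=\gamma_i f(s\mid A_s)$. Dividing by $p^*_{\sigma_i}=\gamma_i p_s$ yields $f^*(\sigma_i\mid A_{\sigma_i})/p^*_{\sigma_i}=f(s\mid A_s)/p_s$.

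\emph{Step 3 (iteration).} I would induct on the number of splits. The inductive hypothesis is that, for the current function $f^{(t)}$, every descendant fragment $u$ of an original item $\tilde s$ has density $f(\tilde s\mid A_{\tilde s})/p_{\tilde s}$ relative to its prefix image. Apply Step~2 to a new split of some $s$, with $f^{(t)}$ in the role of $f$; this is legitimate because Lemma~\ref{lem:split-submod-appendix} keeps $f^{(t)}$ a valid non-negative submodular function. The fragments of $s$ then inherit the density of $s$, which by the hypothesis equals that of its ancestor $\tilde s$.

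For all other elements $u$ of the prefix, I need that their marginals relative to their own prefixes are unchanged by splitting a later or disjoint element $s$. There are two cases. If neither fragment of $s$ lies in $u$'s prefix or in $u$'s own set, then both $f^{(t+1)}$-values involved equal $f^{(t)}$-values, by the remark after Definition~\ref{def:split-appendix} that $f^*(S)=f(S)$ when $S\cap\{\sigma_1,\sigma_2\}=\emptyset$. If both fragments lie in the prefix, the same remark gives $f^*(S)=f(\widehat S+s)$, which again reproduces the old values.

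The main obstacle is exactly this "disjoint elements" clause: one must make sure no prefix ever contains exactly one fragment of an earlier-split element. I would handle it by restricting, as the knapsack proof does, to splitting only the boundary item of a prefix. Under that restriction, every earlier prefix contains either both fragments of each split item or neither.
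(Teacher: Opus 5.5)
Your proposal is correct and is essentially the paper's own argument: read the marginal identity directly off Definition~\ref{def:split-appendix}, divide by $p^*_{\sigma_i}=\gamma_i p_s$, and induct over successive splits. Your extra care about prefixes that contain exactly one fragment addresses a point the paper leaves implicit; it holds automatically if prefix images are taken to be full lifts, and in the knapsack application every strict prefix $A_a$ contains no fragment of any split item, so your restriction costs nothing there.
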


\subsection{The \texorpdfstring{$\theta$}{theta}-construction}

In this subsection $U$ and $V$ live in the current split ground set.
We write $p^*_x$ for the cost of a split-ground-set element~$x$; for an
unsplit original item, $p^*_x=p_x$.
Given finite sets $U, V$ with $c(U) \le c(V)$ and a greedy ordering on $V$,
the following iterative procedure builds a mapping $\theta\colon U \to V$
(after possibly splitting some elements of $U$) maintaining the invariant
$\sum_{u \in \theta^{-1}(a)} p^*_u \le p^*_a$ for every $a \in V$.

\begin{definition}[Iterative theta-construction]
\label{def:theta-appendix}
This is the iterative construction of
Feldman and Kuhnle~\cite[Def.~7.15]{FeldmanKuhnle2025}.
Initially $\theta$ is undefined on $U$.
Repeat: if $\theta$ is defined on all of $U$, terminate. Otherwise let
$u$ be the last element of $U$ for which $\theta$ is undefined, and $a$
the last element of $V$ for which
$\sum_{u' \in \theta^{-1}(a)} p^*_{u'} < p^*_a$. If $p^*_u \le p^*_a -
\sum_{u' \in \theta^{-1}(a)} p^*_{u'}$, set $\theta(u) = a$. Otherwise
perform a $\gamma$-split of $u$ with
$\gamma = (p^*_a - \sum_{u' \in \theta^{-1}(a)} p^*_{u'})/p^*_u$, set
$\theta(\sigma_2) = a$, and leave $\sigma_1$ undefined to be processed
in a later iteration.
\end{definition}

\emph{Termination.} Call $a \in V$ \emph{tight} once
$\sum_{u' \in \theta^{-1}(a)} p^*_{u'} = p^*_a$, and let
$T = \#\{a \in V : a \text{ tight}\}$, $D$ the number of $u \in U$
on which $\theta$ is undefined. Each iteration falls into one of
three cases: (i) $\theta(u) = a$ with strict inequality after, in
which case $D$ decreases by $1$ and $T$ is unchanged;
(ii) $\theta(u) = a$ with equality after, in which case $D$ decreases
by $1$ and $T$ increases by $1$; or (iii) $u$ is split into
$\sigma_1, \sigma_2$, $\theta(\sigma_2) = a$ makes $a$ tight, and
$\sigma_1$ replaces $u$ in the undefined set, so $D$ is unchanged
and $T$ increases by $1$. Thus the lexicographic monovariant
$(|V| - T,\, D)$ strictly decreases each iteration and is bounded
below by $(0, 0)$, so the construction terminates after at most
$|V| + |U|$ iterations, of which at most $|V|$ are splits. The
hypothesis $c(U) \le c(V)$ guarantees that whenever some $u$ is
undefined, $\sum_{a \in V} p^*_a > \sum_{a \in V}
\sum_{u' \in \theta^{-1}(a)} p^*_{u'}$, so a non-tight $a$ exists and
the construction never blocks. After all splits, $c(U)$ summed over
the post-split elements still equals the pre-split $c(U)$.

\begin{lemma}[Density-domination lemma]
\label{lem:density-domination}
This lemma is adapted from the density-preservation observation of
Feldman and Kuhnle~\cite[Obs.~7.16]{FeldmanKuhnle2025}.
Let $\theta\colon U \to V$ be the mapping of
Definition~\ref{def:theta-appendix}, where $V$ is a density-greedy-ordered
prefix of a greedy run, possibly with its last item replaced by an included
boundary fragment. Assume every original element from which an element of
$U$ descends was a density-greedy candidate at every step of that run.
Then for every $a \in V$ and every $u \in \theta^{-1}(a)$,
\[
  \frac{f^*(u \mid A_a)}{p^*_u}
  \;\le\;
  \frac{f^*(a \mid A_a)}{p^*_a}.
\]
Consequently,
\[
  \sum_{u \in \theta^{-1}(a)} f^*(u \mid A_a)
  \;\le\;
  \frac{f^*(a \mid A_a)}{p^*_a}
  \sum_{u \in \theta^{-1}(a)} p^*_u
  \;\le\; f^*(a \mid A_a),
\]
where $A_a$ is the greedy prefix strictly preceding $a$ and
the last inequality uses the invariant
$\sum_{u \in \theta^{-1}(a)} p^*_u \le p^*_a$.
\end{lemma}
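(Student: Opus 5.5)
The plan is to prove the pointwise inequality first and then sum it. Fix $a\in V$ and $u\in\theta^{-1}(a)$. Let $\tilde u$ be the original item from which $u$ descends. By hypothesis, $\tilde u$ was a density-greedy candidate at the step of the run where $a$ (or the original item $\tilde a$ containing $a$, if $a$ is a boundary fragment) was selected. At that step the prefix was the original prefix underlying $A_a$. Greedy optimality at that step gives
\[
\frac{f(\tilde u\mid \widehat{A_a})}{p_{\tilde u}} \;\le\; \frac{f(\tilde a\mid \widehat{A_a})}{p_{\tilde a}},
\]
where $\widehat{A_a}$ denotes the original prefix. It then remains to transport both sides of this inequality into the split world.

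\emph{Right-hand side.} Observation~\ref{obs:density-preservation} with the stability-under-iteration clause gives $f^*(a\mid A_a)/p^*_a = f(\tilde a\mid \widehat{A_a})/p_{\tilde a}$. This holds whether $a$ is a genuine item or the included boundary fragment $\tau$, since a fragment inherits the density class of its parent.

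\emph{Left-hand side.} The fragment $u$ has $u\notin A_a$, because $U$ is disjoint from the run. By Definition~\ref{def:split-appendix}, $f^*(u\mid S)=\gamma_u\, f(\tilde u\mid \widehat S)$ whenever no other fragment of $\tilde u$ lies in $S$. This is the case for $S=A_a$, since fragments of $\tilde u$ live only in $U$. It gives $f^*(u\mid A_a)/p^*_u = f(\tilde u\mid \widehat{A_a})/p_{\tilde u}$. When $u$ descends through several iterated splits, I induct on the number of splits, using Lemma~\ref{lem:split-submod-appendix} so that each intermediate $f^*$ is a legitimate submodular function to split again. Chaining the three relations yields the pointwise inequality.

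\emph{Consequence.} Multiply the pointwise bound by $p^*_u\ge0$ and sum over $u\in\theta^{-1}(a)$. This gives $\sum_u f^*(u\mid A_a)\le \frac{f^*(a\mid A_a)}{p^*_a}\sum_u p^*_u$. Then apply the $\theta$-invariant $\sum_{u\in\theta^{-1}(a)}p^*_u\le p^*_a$ from Definition~\ref{def:theta-appendix} to get the final inequality.

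One caveat concerns this last step. If $f^*(a\mid A_a)<0$, multiplying by a ratio at most $1$ reverses the direction. For density greedy, however, a selected item maximizes density among all candidates. Whenever some candidate has positive density, the selected one does too. So the pointwise bound shows every mapped $u$ has density at most that of $a$, and I would handle the sign by separating two cases. In the first, $f^*(a\mid A_a)\ge0$ and the argument goes through directly. In the second, all candidates, including every $\tilde u$, have negative marginal, so each $f^*(u\mid A_a)\le0$. In that case I would restrict to runs where greedy stops at nonpositive density (dummy zero-value items, which the construction already includes, make the best density $\ge0$). This gives $f^*(a\mid A_a)\ge0$ always, and the sign issue disappears.

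The main obstacle is the left-hand-side transport for iterated splits. Later splits in the $\theta$-construction modify the ground set after $A_a$ is fixed, and one must check that the marginal identity $f^*(u\mid A_a)=\gamma_u f(\tilde u\mid\widehat{A_a})$ survives. I would argue this by noting that splits of elements of $U$ are applied to elements disjoint from $A_a$ and from the other fragments of $\tilde u$ counted in the conditioning set. Unwinding Definition~\ref{def:split-appendix} one split at a time then gives the identity, with the included fraction multiplying as $\gamma_u=\prod\gamma$.
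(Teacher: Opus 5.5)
Your proposal is correct and follows the paper's own proof. You transport both densities to the original items via density preservation (Observation~\ref{obs:density-preservation}), compare them by density-greedy optimality at the step where the parent of $a$ was chosen, then multiply by $p^*_u$, sum, and apply the $\theta$-invariant. Your induction over iterated source splits spells out the stability-under-iteration clause that the paper invokes without detail.

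Your sign caveat is a genuine point the paper's proof leaves implicit: the final step needs $f^*(a\mid A_a)\ge 0$, and your dummy-item argument supplies it. Alternatively, in the paper's application the hypothesis $c(U)=c(V)$ forces every $a\in V$ to be tight, so that last inequality is an equality regardless of sign.
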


\begin{proof}[Proof idea]
Let $s(u)$ be the original item from which $u$ descends, and let $b(a)$ be
the original greedy item from which $a$ descends (so $b(a)=a$ if $a$ is not
a boundary fragment). Observation~\ref{obs:density-preservation} gives
\[
  \frac{f^*(u\mid A_a)}{p^*_u}
  =
  \frac{f(s(u)\mid A_a)}{p_{s(u)}},
  \qquad
  \frac{f^*(a\mid A_a)}{p^*_a}
  =
  \frac{f(b(a)\mid A_a)}{p_{b(a)}}.
\]
The original item $s(u)$ was a density-greedy candidate when $b(a)$ was
selected from prefix~$A_a$, hence its density was no larger than the density
of~$b(a)$. This gives the individual density bound. Multiplying it by
$p^*_u$ and summing over $u \in \theta^{-1}(a)$ gives the displayed
consequence; no sum of densities is compared directly to a single density.
\end{proof}

\subsection{Full proof of Theorem~\ref{thm:knapsack-half-containment}}
\label{app:knapsack-proof}

\begin{theorem*}[Theorem~\ref{thm:knapsack-half-containment}, restated]
For any $\varepsilon \in (0, 1/2)$, sequential disjoint density-greedy with
$\ell = \lceil 1/\varepsilon \rceil$ runs produces a set $P \subseteq N$
with $c(P) \le 3 \ell B = O(B/\varepsilon)$ such that for every
$B' \in (0, B]$ there exists $Q \subseteq P$ with
\[
  c(Q) \;\le\; B', \qquad
  f(Q) \;\ge\; \left(\tfrac{1}{2} - \varepsilon\right) \OPT_{B'}
  \;-\; \Delta_{B'},
\]
where $\Delta_{B'} \ge 0$ is a boundary-fragment loss that arises when the
density-greedy prefix crosses the budget threshold mid-item.  When no item
is split, $\Delta_{B'} = 0$.
Total query complexity: $O(n^2/\varepsilon)$.
\end{theorem*}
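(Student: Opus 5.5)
The plan follows the proof sketch in Appendix~\ref{app:knapsack-details} and runs the four-step template of Theorem~\ref{thm:half-containment}, with the split extension of Appendix~\ref{app:item-splitting} playing the role of exact-length prefixes.

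\textbf{Construction and accounting.} Run $\ell=\lceil 1/\varepsilon\rceil$ density-greedy passes. Pass $i$ operates on $N\setminus\bigcup_{j<i}P_j$ and, at each step, adds the candidate of maximum density $f(u\mid\cdot)/p_u$ provided the run's cost stays at most $3B$. Pad each run with dummy zero-value items until its cost is at least $2B$. Output $P=\bigcup_i P_i$. Each run costs at most $3B$, so $c(P)\le 3\ell B$. Each run makes at most $n$ selections, and each selection scans $O(n)$ candidates, so the total query count is $O(n^2\ell)$.

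\textbf{Fixing a budget and averaging.} Fix $B'\le B$ and let $O$ be optimal at $B'$. Set $\dotO=O\cap P$, $\hatO=O\setminus P$ and $\beta=c(\hatO)\le B'$. If $\beta=0$, take $Q=O$. Otherwise, every run has cost at least $2B\ge\beta$, so each run has a density-order prefix of cost exactly $\beta$. If this prefix cuts through an item, apply a $\gamma$-split (Definition~\ref{def:split-appendix}) and keep the included fragment $\tau_i$. This yields disjoint sets $A'_i$ with $c(A'_i)=\beta$. All splits act on disjoint items, so by Lemma~\ref{lem:split-submod-appendix} the iterated extension $f^*$ is non-negative and submodular, and it agrees with $f$ on unsplit sets and on full lifts. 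Fact~\ref{fact:submod-disjoint}, applied to $f^*$ and $O^\uparrow$, gives $\sum_i[f(O)-f^*(O^\uparrow\cup A'_i)]\le f(O)$. Hence some run $i^\star$ satisfies $f^*(O^\uparrow\cup A)\ge(1-1/\ell)f(O)$, where $A=A'_{i^\star}$.

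\textbf{Density domination (main obstacle).} The goal is $f^*(\hatO^\uparrow\cup A)\le 2f^*(A)-f(\emptyset)$. Since $c(\hatO)=\beta=c(A)$, the $\theta$-construction (Definition~\ref{def:theta-appendix}) applies with $U=\hatO^\uparrow$ and $V=A$. It splits elements of $U$ and maps each fragment to some $a\in V$ with total cost at most $p^*_a$. Every element of $\hatO$ lies outside $P$, so it was a candidate throughout run $i^\star$. Lemma~\ref{lem:density-domination} then gives $\sum_{u\in\theta^{-1}(a)}f^*(u\mid A_a)\le f^*(a\mid A_a)$ for each $a$. Summing over $a$ and telescoping bounds $\sum_u f^*(u\mid A_{\theta(u)})$ by $f^*(A)-f(\emptyset)$. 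Submodularity lowers each conditioning prefix $A_{\theta(u)}$ to $A$, and subadditivity of marginals then yields $f^*(\hatO^\uparrow\mid A)\le f^*(A)-f(\emptyset)$, which is the goal.

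The delicate points are these. First, the boundary fragment $\tau_{i^\star}$ must satisfy the density comparison; Observation~\ref{obs:density-preservation} gives it the same density as its parent item. Second, the splits of $U$ created by $\theta$ must not disturb the values of $\hatO^\uparrow$, which holds because full lifts recover $f$. Third, the prefix-plus-fragment must be treated carefully as a greedy ordering. I would verify each of these separately.

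\textbf{Decomposition and rounding.} Apply the lattice inequality to $X=\hatO^\uparrow\cup A$ and $Y=\dotO^\uparrow\cup A$, exactly as in Step 3 of Theorem~\ref{thm:half-containment}. This gives $\max\{f^*(A),f^*(\dotO^\uparrow\cup A)\}\ge\frac{(1-1/\ell)\OPT_{B'}}{2}\ge(\frac12-\varepsilon)\OPT_{B'}$. Both candidates have cost at most $c(\dotO)+\beta=c(O)\le B'$.

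To return an original subset, let $\widehat A$ be $A$ with the fragment $\tau_{i^\star}$ deleted, and take $Q$ to be either $\widehat A$ or $\dotO\cup\widehat A$. Deleting a fragment that carries fraction $\gamma_{i^\star}$ of item $a^{(i^\star)}_{j^\star}$ reduces $f^*$ by $\gamma_{i^\star}f(a\mid\widehat S)$ for the relevant set $\widehat S$. Submodularity bounds this by $\gamma_{i^\star}f(\{a\})$ up to the $f(\emptyset)$ offset, so the loss is at most $\Delta_{B'}$. Removing items only lowers cost, so $Q$ remains feasible, and if no item was split then $\Delta_{B'}=0$. Since $P$ does not depend on $B'$, this argument gives the guarantee simultaneously for every $B'\in(0,B]$.
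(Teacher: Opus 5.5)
Your proposal is correct and follows the paper's proof essentially step for step. The shared steps are: budget-specific boundary splits to get disjoint exact-cost prefixes $A'_i$; averaging under the split extension $f^*$; the $\theta$-construction with Lemma~\ref{lem:density-domination} to obtain $f^*(\hatO^\uparrow\cup A)\le 2f^*(A)-f(\emptyset)$; the lattice decomposition with $X=\hatO^\uparrow\cup A$ and $Y=\dotO^\uparrow\cup A$; and deletion of the boundary fragment at cost at most $\Delta_{B'}$. The one case the paper spells out that your rounding paragraph glosses over is when the boundary item $a^{(i^\star)}_{j^\star}$ itself lies in $\dotO$. Then $\dotO^\uparrow\cup A$ is already the full lift of $\dotO\cup\widehat A$, so that candidate incurs no loss at all.
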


\paragraph{Proof overview.}
This is the most technical proof in the paper, and the reader may wish to
first read the simpler cardinality proof
(Theorem~\ref{thm:half-containment}, Appendix~\ref{app:half-containment-proof})
for the core ideas.

The main complication beyond cardinality is that density-greedy prefixes may
not align cleanly with a target budget: the last item added may overshoot,
requiring the item to be ``split'' into a used and unused fragment.
The item-splitting machinery of Feldman \&
Kuhnle~\cite{FeldmanKuhnle2025} (stated above) handles this by creating
a new ground set where the split item is replaced by two fragments whose
costs sum to the original.
The proof then proceeds in five steps:

\smallskip
\noindent\emph{Step~1: Boundary split (imported machinery).}
For each density-greedy run, identify the prefix whose total cost equals the
residual budget~$\beta$, splitting the boundary item if needed via
Definition~\ref{def:split-appendix}.
Split-submodularity (Lemma~\ref{lem:split-submod-appendix}) ensures the
resulting function $f^*$ is still submodular.

\smallskip
\noindent\emph{Step~2: Averaging (original, analogous to the cardinality case).}
The disjoint-collection argument from Fact~\ref{fact:submod-disjoint}, now
applied to the split function~$f^*$, identifies the best run $i^\star$ with
$f^*(O^\uparrow \cup A) \ge (1-1/\ell)f(O)$.

\smallskip
\noindent\emph{Step~3: Density domination (imported+original).}
Apply the $\theta$-construction (Definition~\ref{def:theta-appendix}) to map
$\hatO$ into the greedy prefix~$A$.
The density-domination lemma (Lemma~\ref{lem:density-domination}, adapted
from~\cite{FeldmanKuhnle2025}) then yields
$f^*(\hatO^\uparrow \cup A) \le 2f^*(A) - f(\emptyset)$.

\smallskip
\noindent\emph{Step~4: Submodular decomposition (original).}
Apply the lattice inequality
$f^*(X \cup Y) + f^*(X \cap Y) \le f^*(X) + f^*(Y)$ to relate
$f^*(O^\uparrow \cup A)$ to the two candidates.

\smallskip
\noindent\emph{Step~5: Feasible real subset (original).}
Convert the split-function guarantee back to a real-set guarantee, paying the
boundary-fragment loss~$\Delta_{B'}$.

\begin{proof}[Proof of Theorem~\ref{thm:knapsack-half-containment}]
Fix $\varepsilon \in (0, 1/2)$, $\ell = \lceil 1/\varepsilon \rceil$,
and $B' \in (0, B]$. Let $O$ achieve $\OPT_{B'}$; set
$\hatO = O \setminus P$, $\dotO = O \cap P$,
$\beta = c(\hatO) \le B'$.
If $\beta=0$, then $O\subseteq P$, so taking $Q=O$ proves the theorem.
Assume below that $\beta>0$.

\medskip\noindent\textbf{Step 1 (Budget-specific prefix and boundary split).}
For each run $i \in [\ell]$, order $A_i = \{a^{(i)}_1, \ldots,
a^{(i)}_{m_i}\}$ by density-greedy selection time. Let $j_i$ be the first
index for which
\[
  \sum_{r\le j_i} p_{a^{(i)}_r} \;\ge\; \beta .
\]
Since $c(A_i) \ge 2B \ge 2\beta$, such $j_i$ exists and
$\sum_{r < j_i} p_{a^{(i)}_r} < \beta$.
Let
\[
  r_i = \beta-\sum_{r<j_i}p_{a^{(i)}_r},
  \qquad
  \gamma_i = r_i/p_{a^{(i)}_{j_i}} \in (0,1].
\]
If $\gamma_i<1$, apply a $\gamma_i$-split to the boundary item
$a^{(i)}_{j_i}$, producing an included fragment
$\tau_i=\sigma^{(i)}_1$ with cost $p^*_{\tau_i}=r_i$ and an excluded
fragment $\sigma^{(i)}_2$. Define
$A'_i=\{a^{(i)}_1,\ldots,a^{(i)}_{j_i-1},\tau_i\}$.
If $\gamma_i=1$, no split is needed; set
$A'_i=\{a^{(i)}_1,\ldots,a^{(i)}_{j_i}\}$ and say that run~$i$ has no
boundary fragment. In either case $c(A'_i)=\beta$ exactly.
Let $f^*$ denote the function after all actual boundary splits. For an
original set $S\subseteq N$, write $S^\uparrow$ for its full lift in the
split ground set, replacing any split item by both fragments; then
$f^*(S^\uparrow)=f(S)$.
The $A'_i$ are disjoint because the greedy runs use disjoint ground sets.

\medskip\noindent\textbf{Step 2 (Averaging).}
By Fact~\ref{fact:submod-disjoint} applied to the disjoint collection
$\{A'_i\}$ under the split function $f^*$:
\begin{align*}
  \sum_{i=1}^\ell \bigl[f(O) - f^*(O^\uparrow \cup A'_i)\bigr]
  &=
  -\sum_{i=1}^\ell f^*(A'_i \mid O^\uparrow) \\
  &\le
  -f^*\bigl(\textstyle\bigcup_i A'_i \,\big|\, O^\uparrow\bigr) \\
  &=
  f(O) - f^*\bigl(O^\uparrow \cup \textstyle\bigcup_i A'_i\bigr)
  \;\le\; f(O),
\end{align*}
using $f^*(O^\uparrow) = f(O)$ and non-negativity of $f^*$. Averaging
yields some $i^\star$ with
$f^*(O^\uparrow \cup A'_{i^\star}) \ge (1 - 1/\ell) f(O)$.
Write $A = A'_{i^\star}$ and $j^\star=j_{i^\star}$. If run~$i^\star$ has an
included boundary fragment $\tau_{i^\star}$, then
$\tau_{i^\star}$ is a fragment of the original item
$a_{j^\star}^{(i^\star)}$, and its included fraction is
$\gamma_{i^\star}$.  Set
$\widehat A = A \setminus \{\tau_{i^\star}\}
= \{a^{(i^\star)}_1, \ldots, a^{(i^\star)}_{j^\star - 1}\}$ and
$\Delta_{B'}=\gamma_{i^\star} f(\{a_{j^\star}^{(i^\star)}\})$; otherwise set
$\widehat A=A$ and $\Delta_{B'}=0$.

\medskip\noindent\textbf{Step 3 (Density domination).}
Apply Definition~\ref{def:theta-appendix} with source $U = \hatO$ and
target $V = A$ in the current split ground set.
The hypothesis $c(U) = \beta = c(V)$ is met, and every $u \in \hatO$
was a density-greedy candidate at every step of run $i^\star$
(because $\hatO \cap P = \emptyset$ and run $i^\star$'s candidate pool is
$N \setminus \bigcup_{j < i^\star} G_j \supseteq N \setminus P$).
The $\theta$-construction may split elements of~$\hatO$; after it terminates,
let $\hatO^\uparrow$ denote the resulting set of source fragments. Continue
to write $\dotO^\uparrow$ and $O^\uparrow$ for the full lifts after these
additional source splits.
By Lemma~\ref{lem:density-domination}, for each $a \in A$,
$\sum_{u \in \theta^{-1}(a)} f^*(u \mid A_a) \le f^*(a \mid A_a)$, where
$A_a$ is the $f^*$-greedy prefix strictly before $a$ in run $i^\star$.
Summing and applying submodularity of $f^*$
($A_a \subseteq A$ for every $a \in A$),
\begin{align*}
  f^*(\hatO^\uparrow \cup A) - f^*(A)
  &\;=\;
  \sum_{u \in \hatO^\uparrow}
  f^*(u \mid A \cup \{\text{prior } u'\})\\
  &\;\le\;
  \sum_{u \in \hatO^\uparrow} f^*(u \mid A)
  \;\le\;
  \sum_{u \in \hatO^\uparrow} f^*(u \mid A_{\theta(u)}) \\
  &\;\le\; \sum_{a \in A} f^*(a \mid A_a)
  \;=\; f^*(A) - f(\emptyset),
\end{align*}
where the first equality telescopes over any ordering of $\hatO^\uparrow$,
the first two inequalities are submodularity of $f^*$, the third
inequality is Lemma~\ref{lem:density-domination}, and the last equality
telescopes the greedy chain. Conclusion:
$f^*(\hatO^\uparrow \cup A) \le 2 f^*(A) - f(\emptyset)$.

\medskip\noindent\textbf{Step 4 (Submodular decomposition).}
Apply submodularity of $f^*$ to
$X = \hatO^\uparrow \cup A$ and $Y = \dotO^\uparrow \cup A$.
Then $X \cap Y = A$ and $X \cup Y = O^\uparrow \cup A$, so
\[
  f^*(O^\uparrow \cup A) \;+\; f^*(A)
  \;\le\; f^*(\hatO^\uparrow \cup A) \;+\; f^*(\dotO^\uparrow \cup A)
  \;\le\; 2 f^*(A) - f(\emptyset) + f^*(\dotO^\uparrow \cup A).
\]
Combining with Step~2,
\[
  (1 - 1/\ell) f(O) \;-\; f^*(A)
  \;\le\; f^*(O^\uparrow \cup A) - f^*(A)
  \;\le\; f^*(A) - f(\emptyset) + f^*(\dotO^\uparrow \cup A),
\]
so $\max\bigl(f^*(A),\, f^*(\dotO^\uparrow \cup A)\bigr)
\ge \tfrac{1}{2}\bigl[(1 - 1/\ell) f(O) + f(\emptyset)\bigr]
\ge (1/2 - \varepsilon) \OPT_{B'}$.

\medskip\noindent\textbf{Step 5 (Reporting a feasible real subset).}
If no split occurred in run~$i^\star$, then $A=\widehat A$ is already a
real set and $\Delta_{B'}=0$.
When a split did occur, the sets $A$ and $\dotO^\uparrow \cup A$ may contain
the included boundary fragment~$\tau_{i^\star}$; let
$\widehat A^{+}=\widehat A\cup\{a_{j^\star}^{(i^\star)}\}$.
Definition~\ref{def:split-appendix} gives
\begin{equation}
  f^*(A) \;\le\; f(\widehat A) + \Delta_{B'},
  \qquad
  f^*(\dotO^\uparrow \cup A)
  \;\le\; f(\dotO \cup \widehat A) + \Delta_{B'}. \label{eq:split-loss}
\end{equation}
Indeed, in the split case this follows from
$f^*(A)=(1-\gamma_{i^\star})f(\widehat A)
+\gamma_{i^\star} f(\widehat A^+)$ and
$f(\widehat A^+) \le f(\widehat A)+f(\{a_{j^\star}^{(i^\star)}\})$.
For the second inequality, if $a_{j^\star}^{(i^\star)}\in\dotO$ then
$f^*(\dotO^\uparrow\cup A)=f(\dotO\cup\widehat A)$ already; otherwise
$f^*(\dotO^\uparrow\cup A)
=(1-\gamma_{i^\star})f(\dotO\cup\widehat A)
+\gamma_{i^\star} f(\dotO\cup\widehat A^+)$ and submodularity gives
$f(\dotO\cup\widehat A^+) \le
f(\dotO\cup\widehat A)+f(\{a_{j^\star}^{(i^\star)}\})$.
In the no-split case~\eqref{eq:split-loss} is equality.

Feasibility: $c(\widehat A) \le \beta \le B'$ and
$c(\dotO \cup \widehat A) \le c(\dotO) + \beta \le c(O) \le B'$
(both strict in general). Thus $\widehat A$ and $\dotO \cup \widehat A$
are both feasible at budget $B'$. The reported set is
\[
  Q \;=\; \arg\max\!\bigl\{\, f(T) \,:\, T \in \{\widehat A,\;
    \dotO \cup \widehat A\},\; c(T) \le B' \,\bigr\},
\]
giving
\[
  f(Q) \;\ge\; (1/2 - \varepsilon)\OPT_{B'} - \Delta_{B'}.
\]
The boundary item $a_{j^\star}^{(i^\star)}$ belongs to the chosen
density-greedy run for this budget-specific analysis, but it may have
$p_{a_{j^\star}^{(i^\star)}}>B'$ and therefore need not be a feasible
singleton for the query budget. This is why the theorem keeps the loss as
the explicit boundary-fragment term $\Delta_{B'}$ rather than replacing it
by the best feasible singleton.
\end{proof}

\begin{corollary*}[Corollary~\ref{cor:knapsack-small-items}, restated]
Fix a budget range $\mathcal{B}\subseteq(0,B]$ and let
$B_{\min}=\inf \mathcal{B}>0$.
Run sequential disjoint density-greedy with a sufficiently small internal
accuracy parameter.  Suppose that the item sizes are small at the density
prefix scale: for every $B'\in\mathcal{B}$, the boundary item satisfies
$p_{a_{j^\star}^{(i^\star)}} \le \eta \beta$
where $\beta=c(O\setminus P)$.
There is a universal constant $C$ such that, if $\eta\le \varepsilon/C$, then
for every $B'\in\mathcal{B}$ there exists $Q\subseteq P$ with
$c(Q)\le B'$ and
$f(Q)\ge (\tfrac12-\varepsilon)\OPT_{B'}$.
\end{corollary*}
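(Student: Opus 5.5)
The plan is to reuse the proof of Theorem~\ref{thm:knapsack-half-containment} unchanged through Step~4, and then absorb the boundary-fragment loss $\Delta_{B'}$ using the small-item hypothesis. Fix $B'\in\mathcal{B}$ and let $O$, $\hatO$, $\dotO$, $\beta$, $i^\star$, $A$ be as in that proof. If $\beta=0$ or run $i^\star$ has no boundary fragment, then $\Delta_{B'}=0$ and we are done. So assume a split occurred, with boundary item $a:=a_{j^\star}^{(i^\star)}$ satisfying $p_a\le\eta\beta$. Write $\ell'$ for the internal accuracy parameter, so that Steps~2--4 give $\max\{f^*(A),f^*(\dotO^\uparrow\cup A)\}\ge\tfrac12(1-1/\ell')\OPT_{B'}$. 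Choose $\ell'$ so that $1/\ell'\le\varepsilon$; this leaves an extra slack of $\varepsilon/2\cdot\OPT_{B'}$ to pay for $\Delta_{B'}$.

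The key step is to bound $\Delta_{B'}=\gamma_{i^\star}f(\{a\})$ by $O(\eta)\OPT_{B'}$. We cannot charge $f(\{a\})$ to a feasible singleton. Instead we use density. Let $\rho$ be the density $f(a\mid A_a)/p_a$ at which $a$ was picked. By Lemma~\ref{lem:density-domination}, every element of $\hatO$ had density at most $\rho$ relative to every earlier prefix of run $i^\star$, and the greedy prefix densities are nonincreasing. Steps~3--4 then bound the total greedy marginal before $a$ from below by $\rho\,c(\widehat A)$.

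The first obstacle is that $f(\{a\})$ is not $f(a\mid A_a)$, and for non-monotone $f$ the singleton value can be large. To handle this, I would apply Definition~\ref{def:split-appendix} directly. The fragment's contribution is $f^*(A)-f(\widehat A)=\gamma_{i^\star}f(a\mid\widehat A)=r_{i^\star}\rho\le p_a\rho\le\eta\beta\rho$. This gives a sharper version of \eqref{eq:split-loss}, with $\gamma f(a\mid\widehat A)$ in place of $\gamma f(\{a\})$, and likewise $\gamma f(a\mid\dotO\cup\widehat A)\le\gamma f(a\mid \widehat A)$ by submodularity. So the true loss is at most $\eta\beta\rho$.

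It remains to show $\beta\rho=O(\OPT_{B'})$. Since all greedy densities up to $a$ are at least $\rho$, we get $f(\widehat A)-f(\emptyset)\ge\rho\,(\beta-r_{i^\star})\ge\rho\beta(1-\eta)$. Also $c(\widehat A)\le B'$, so $f(\widehat A)\le\OPT_{B'}$. Hence $\beta\rho\le\OPT_{B'}/(1-\eta)\le2\OPT_{B'}$. This is where the hypothesis $p_a\le\eta\beta$ is genuinely used, since it ensures $\widehat A$ carries almost all of the cost $\beta$.

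Combining the pieces gives $f(Q)\ge(\tfrac12-\varepsilon/2)\OPT_{B'}-2\eta\OPT_{B'}$, which is at least $(\tfrac12-\varepsilon)\OPT_{B'}$ once $\eta\le\varepsilon/4$. So we may take $C=4$, up to constants from the choice of $\ell'$.

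The main obstacle I anticipate is rigorously justifying the replacement of $f(\{a\})$ by the marginal $f(a\mid\widehat A)$ for both candidates, $\widehat A$ and $\dotO\cup\widehat A$. I also need to check that the density of $a$ at the time it was picked equals the split-fragment density (Observation~\ref{obs:density-preservation}), so that the greedy-chain lower bound on $f(\widehat A)$ holds in $f$ rather than in $f^*$.
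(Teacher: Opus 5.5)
Your proposal is correct and is essentially the paper's argument. You replace the singleton bound $\Delta_{B'}$ by the true split loss $\gamma_{i^\star}f(a\mid\widehat A)=r_{i^\star}\rho$ (using submodularity for the $\dotO\cup\widehat A$ candidate), use $r_{i^\star}\le\eta\beta$, and control $\beta\rho$ through the nonincreasing density-greedy densities. The paper normalizes by $f^*(A)-f(\emptyset)\ge\rho\beta$ to get a multiplicative $(1-\eta)M$, whereas you normalize by the feasible real prefix, $f(\widehat A)\le\OPT_{B'}$; this is an equivalent bookkeeping choice.

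You should add the one-line case $\rho<0$, which the paper treats explicitly. There your chain $r_{i^\star}\rho\le\eta\beta\rho$ reverses, but the deletion loss is then nonpositive for both candidates, so the bound holds trivially.
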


\paragraph{Proof overview.}
The corollary absorbs the boundary-fragment loss $\Delta_{B'}$ from
Theorem~\ref{thm:knapsack-half-containment} under the small-item assumption.
The argument proceeds in two steps, both original:

\smallskip
\noindent\emph{Step~1.}
Bound the true split loss $L_{B'}$ by the density-prefix
average times the fragment cost, using the fact that density-greedy
marginals are nonincreasing.

\smallskip
\noindent\emph{Step~2.}
The small-item assumption gives $r_{i^\star} \le \eta\beta$,
so $L_{B'} \le \eta M$, and choosing $\eta = \varepsilon/8$ absorbs the
loss into the $\varepsilon$ slack.

\begin{proof}[Proof of Corollary~\ref{cor:knapsack-small-items}]
Run the algorithm and the proof above with internal accuracy
$\varepsilon_0=\varepsilon/4$.
Fix $B'\in\mathcal{B}$ and use the notation from the proof of
Theorem~\ref{thm:knapsack-half-containment}.
If no boundary split occurs in the selected run $i^\star$, then
$\Delta_{B'}=0$ and the theorem already gives the desired guarantee after the
constant adjustment.  Suppose a split occurs.

Let
\[
  r_{i^\star}
  = \gamma_{i^\star}p_{a_{j^\star}^{(i^\star)}}
\]
be the cost of the included boundary fragment
$\tau_{i^\star}$, and let
\[
  d_{i^\star}
  =
  \frac{f(a_{j^\star}^{(i^\star)}\mid \widehat A)}
       {p_{a_{j^\star}^{(i^\star)}}}
\]
be the density of its original item at the moment it is selected.
If this marginal is negative, deleting the boundary fragment can only
increase the reported real values, so assume $d_{i^\star}\ge0$.
The true loss from deleting the split fragment is not the crude singleton
bound $\Delta_{B'}$; by the split definition it is
\[
  L_{B'} :=
  \gamma_{i^\star} f(a_{j^\star}^{(i^\star)}\mid \widehat A)
  = r_{i^\star} d_{i^\star}
\]
for the candidate $A$, and at most the same quantity for
$\dotO^\uparrow\cup A$ by submodularity:
\[
  f^*(\dotO^\uparrow\cup A)-f(\dotO\cup\widehat A)
  \le L_{B'} .
\]

We now bound this loss by the density-prefix average.  Along the
density-greedy order defining $A$, marginal densities are nonincreasing; dummy
zero-marginal items ensure that the prefix of cost at most $B$ used here has
nonnegative densities unless the boundary loss is already nonpositive.
Therefore the boundary density is at most the average density of the split
prefix:
\[
  d_{i^\star}
  \le
  \frac{f^*(A)-f(\emptyset)}{\beta}.
\]
The small-item assumption gives
$r_{i^\star}\le p_{a_{j^\star}^{(i^\star)}}\le \eta\beta$, and hence
\[
  L_{B'} \le \eta\bigl(f^*(A)-f(\emptyset)\bigr)
  \le \eta f^*(A),
\]
using nonnegativity of $f(\emptyset)$.

Let
\[
  M=\max\{f^*(A),\,f^*(\dotO^\uparrow\cup A)\}.
\]
Step~4 of the theorem proof gives
$M\ge(1/2-\varepsilon_0)\OPT_{B'}$, while the preceding paragraph shows that
replacing the split candidate(s) by the feasible real candidates
$\widehat A$ and $\dotO\cup\widehat A$ loses at most
$L_{B'}\le\eta f^*(A)\le\eta M$.  Thus one of these real feasible candidates
has value at least
\[
  (1-\eta)M
  \ge
  (1-\eta)(1/2-\varepsilon_0)\OPT_{B'}.
\]
Choosing $\eta\le\varepsilon/8$ and $\varepsilon_0=\varepsilon/4$ yields
$(1/2-\varepsilon)\OPT_{B'}$, after adjusting constants.
\end{proof}

\medskip\noindent\textbf{Remark on the boundary term.}
The additive loss
$\Delta_{B'}=\gamma_{i^\star} f(\{a_{j^\star}^{(i^\star)}\})$ is inherent
to this density-greedy knapsack analysis with item splitting: when the
residual budget crosses a large boundary item, the split fraction
$\gamma_{i^\star}$ can approach~$1$ and the loss approaches
$f(\{a_{j^\star}^{(i^\star)}\})$.
In the \emph{algorithmic} (non-pruning) setting, standard
fixes---guessing the top constant-many items by enumeration
(see~\cite{FeldmanKuhnle2025}, Section~7)---preserve the clean
$(1/2 - \varepsilon)$ ratio, but enumeration is incompatible with pruning
because the guess depends on the query budget~$B'$.
A fully bicriteria variant (allowing $c(Q) \le (1 + \varepsilon) B'$)
admits $\widehat A^{+}$ directly and removes the boundary term without
assumption.

\section{Warm-Start BFNS Barrier}
\label{app:warm-start}

\paragraph{Overview.}
This section explores whether combining the averaging idea of
Theorem~\ref{thm:half-containment} (Step~1) with the CDG algorithm
of~\cite{BFNS2014} (Theorem~\ref{thm:bfns-cdg} in Appendix~\ref{app:deferred-prelim})
can beat the $1/2$ barrier.
The CDG guarantee (imported) supplies a three-term bound that converts a
warm-start value $g(\emptyset) = f(A)$ into additional containment;
the analysis below (original) shows that the adversary's value-split
degree of freedom limits the combined guarantee to
$\tfrac{4}{9}(1-1/\ell)$, which exceeds $0.401$ only for small warm-start
sizes $b/k \le 0.35$.  Beyond that breakpoint, the direct
$0.401$-approximation of~\cite{BuchbinderFeldman2024} dominates.

\medskip
For each $b \in \{0, 1, \ldots, k\}$, run $\ell(b)$ disjoint greedy rounds
of size $b$ on $N$, producing sets $A_1, \ldots, A_{\ell}$.
By the averaging argument (Step~1 of Theorem~\ref{thm:half-containment}),
the best $A_{\hat\imath}$ satisfies
$f(O \cup A_{\hat\imath}) \ge (1 - 1/\ell)\, \OPT$.

Apply CDG (Theorem~\ref{thm:bfns-cdg}) to
$g(S) = f(S \cup A_{\hat\imath})$ over
$G \setminus A_{\hat\imath}$ with budget $k - b$,
at density tuned to $v = 4$ (i.e., $|G \setminus A_{\hat\imath}| = 5(k-b)$).
The warm-start gives $g(\emptyset) = f(A_{\hat\imath}) = V$.

At $v = 4$, Theorem~\ref{thm:bfns-cdg} gives equal coefficients:
\[
  g(D) \;\ge\; \tfrac{4}{9}\, g(\OPT_g) + \tfrac{4}{9}\, g(\emptyset)
  = \tfrac{4}{9}\bigl(W + V\bigr),
\]
where $W = g(\OPT_g) = \max_{|T| \le k-b} f(T \cup A_{\hat\imath})$
and $V = f(A_{\hat\imath})$.
From the submodularity decomposition:
$(1 - 1/\ell)\,\OPT \le f(O \cup A_{\hat\imath}) \le V + W$.
Therefore:
\[
  f(\text{output}) \;\ge\; \tfrac{4}{9}\,(1 - 1/\ell)\,\OPT.
\]
The guarantee is independent of the value split between $V$ and $W$.

The number of greedy rounds needed to achieve $v = 4$ is
$\ell(b) = 5(k-b)/b + 1$.
The overall guarantee is $\tfrac{4}{9}(1 - 1/\ell)$:

\begin{center}
\begin{tabular}{@{}ccccc@{}}
\toprule
$b/k$ & $\ell$ & Total elements & Guarantee & $> 0.401$? \\
\midrule
$0.10$ & $46$ & $\sim 460k$ & $0.434$ & Yes \\
$0.20$ & $21$ & $\sim 420k$ & $0.423$ & Yes \\
$0.30$ & $12.7$ & $\sim 380k$ & $0.409$ & Yes \\
$0.35$ & $10.3$ & $\sim 360k$ & $0.401$ & Breakeven \\
$0.40$ & $8.5$ & $\sim 340k$ & $0.392$ & No \\
\bottomrule
\end{tabular}
\end{center}

When $|\hatO|$ is large (many OPT elements outside $G$), the budget $b$ must be
large to dominate them, forcing $\ell$ small, and the $(1-1/\ell)$ factor
eats the $4/9$ gain.
The crossover is at $b/k \approx 0.35$; for larger $b$, one falls back to
the direct $0.401$ of~\cite{BuchbinderFeldman2024}.
Adding a random-greedy fallback branch does not help: the adversary controls
the value split $\beta = f(\dotO)/\OPT$, and for any density $c$,
the guarantee is at most $1/e$ in the worst case.

\begin{observation}
For any fixed warm-start size $b$, the CDG inner density $v = 4$ is the
\emph{unique} optimal choice of the inner ground-set size parameter.
Writing $\alpha_v, \beta_v$ for the CDG coefficients on $g(\OPT_g)$ and
$g(\emptyset)$ respectively (Theorem~\ref{thm:bfns-cdg}):
at $v < 4$, $\alpha_v > \beta_v$ and the adversary sets $V = 0$;
at $v > 4$, $\beta_v > \alpha_v$ and the adversary sets $W = 0$.
Only at $v = 4$ are the coefficients equal, eliminating the adversary's
value-split degree of freedom.
The $b/k$ table above holds $v = 4$ fixed and varies the warm-start size;
the bound $\tfrac{4}{9}(1-1/\ell(b))$ then peaks at small $b/k$, where
$\ell$ is large.
\end{observation}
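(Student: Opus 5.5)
The plan is to reduce the Observation to a one-variable optimization of the worst-case CDG guarantee. Write $s=\sqrt v$. Theorem~\ref{thm:bfns-cdg} gives two coefficients: $\alpha_v = 2s/(s+1)^2$ on $g(\OPT_g)=W$ and $\beta_v = s^2/(s+1)^2 = v/(s+1)^2$ on $g(\emptyset)=V$. It also has a third term $g(N)/(s+1)^2$. I drop that term: it is nonnegative, and the adversary can make $f(N\cup A_{\hat\imath})$ negligible, so it never helps the worst case.

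\emph{Step 1.} I record the only information the analysis has about $V$ and $W$. This is the decomposition inequality $V+W \ge f(O\cup A_{\hat\imath}) \ge (1-1/\ell)\OPT$ with $V,W\ge 0$. The certified guarantee for a given $v$ is then the value of the linear program
\[
\min\{\alpha_v W+\beta_v V : V+W\ge (1-1/\ell)\OPT,\ V,W\ge 0\}.
\]
Its minimum is at a vertex, which gives $\min(\alpha_v,\beta_v)(1-1/\ell)\OPT$. If $\alpha_v>\beta_v$ the minimizer is $W=0$; if $\beta_v>\alpha_v$ it is $V=0$.

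\emph{Step 2.} I compare the coefficients: $\alpha_v-\beta_v = s(2-s)/(s+1)^2$. This is positive for $v<4$, negative for $v>4$, and zero exactly at $v=4$. So the adversary's extreme split is forced on either side of $v=4$, and the split degree of freedom disappears only at $v=4$.

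\emph{Step 3.} I show $v=4$ uniquely maximizes $\phi(s)=\min(2s,s^2)/(s+1)^2$.
\begin{itemize}
\item For $s\le 2$, $\phi(s)=(s/(s+1))^2$, which is strictly increasing.
\item For $s\ge2$, $\phi(s)=2s/(s+1)^2$, whose derivative $2(1-s)/(s+1)^3$ is negative for $s>1$, so it is strictly decreasing.
\end{itemize}
Hence the unique maximizer is $s=2$, i.e.\ $v=4$, with value $4/9$. Since $b$ is fixed, $\ell(b)$ depends on $b$ only through the ground-set sizing and not on the comparison, so the factor $(1-1/\ell)$ multiplies uniformly and does not move the argmax.

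The main obstacle is interpretive rather than computational. "The adversary sets $V=0$" should be read relative to the certified bound: the analysis cannot rule out that split. So I would state uniqueness as optimality of the worst-case guarantee obtainable from Theorem~\ref{thm:bfns-cdg} plus the decomposition inequality. I would not claim concrete hard instances realizing each extreme split. If time permits, I would sketch realizability separately: for $V=0$, warm-start sets of zero value with all value in $\dotO$; for $W\approx 0$ beyond $V$, value concentrated in $A_{\hat\imath}$ with $g(N)$ small.
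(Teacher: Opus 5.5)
Your skeleton matches the paper's, whose only justification is the coefficient comparison embedded in the Observation itself. Your monotonicity check of $\phi(s)=\min(2s,s^2)/(s+1)^2$ supplies a step the paper omits. But your Step~1 does not prove the statement as written; it contradicts it. You conclude that the minimizer has $W=0$ when $\alpha_v>\beta_v$ ($v<4$) and $V=0$ when $v>4$. The Observation asserts the reverse: $V=0$ at $v<4$ and $W=0$ at $v>4$. Your direction is the correct one for your LP, since minimizing $\alpha_vW+\beta_vV$ over $V+W=\Lambda$ puts all the mass on the smaller coefficient. So the statement has the two cases swapped, and you must say this explicitly rather than present your argument as a proof of the text.

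Your LP also drops a constraint the analysis does have. Since $W=\max_{|T|\le k-b}f(T\cup A_{\hat\imath})$ and $T=\emptyset$ is allowed, $W\ge V$. Hence $W=0$ is infeasible once $V+W\ge\Lambda:=(1-1/\ell)\OPT>0$. With this constraint:
\begin{itemize}
\item for $v<4$ the worst case is $V=W=\Lambda/2$, with value $\tfrac12(\alpha_v+\beta_v)\Lambda=\bigl(\tfrac12-\tfrac{1}{2(s+1)^2}\bigr)\Lambda$;
\item for $v>4$ the worst case is $V=0$, with value $\alpha_v\Lambda$.
\end{itemize}
The first branch is strictly increasing, the second strictly decreasing, and both equal $\tfrac49\Lambda$ at $s=2$. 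So uniqueness of $v=4$ survives. However, the extreme split for $v<4$ is neither yours nor the paper's. Your own realizability sketch (``$W\approx 0$ beyond $V$'') is exactly this $V=W$ case, and it realizes $\tfrac12(\alpha_v+\beta_v)$, not $\beta_v$.

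The last sentence of your Step~3 is also unjustified. In the paper's construction CDG runs on $G\setminus A_{\hat\imath}$, so $(\ell-1)b=(v+1)(k-b)$ and $\ell$ grows with $v$. That is exactly how $\ell(b)=5(k-b)/b+1$ arises at $v=4$, so the factor $1-1/\ell$ is not independent of $v$. With this coupling, the derivative of $\alpha_v\bigl(1-1/\ell(v)\bigr)$ at $v=4^+$ has the sign of $19-25(k-b)/b$, which is positive once $b/k>25/44$. For example, at $b/k=0.9$, $v=4$ gives $\tfrac49\cdot\tfrac{5}{14}\approx 0.159$, while $v=9$ gives $\tfrac38\cdot\tfrac{10}{19}\approx 0.197$. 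What you (and the paper) actually establish is that $v=4$ uniquely maximizes the worst-case CDG coefficient with $\ell$ held fixed. The claim about the full bound for every $b$ needs that restriction stated.

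Finally, you never address the Observation's last sentence. It only needs that $\ell(b)=5(k-b)/b+1$ decreases in $b$, so $\tfrac49(1-1/\ell(b))$ decreases in $b/k$.
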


\section{Window Containment Analysis}
\label{app:window-analysis}

\begin{algorithm}[H]
\caption{Window Containment}
\label{alg:window-greedy}
\KwIn{Ground set $N$, oracle for $f$, constraint $k$, window parameter $\omega \ge 1$}
\KwOut{Pruned set $P_k$ with $|P_k| \le k+\omega k^2$}
$S_0 \gets \emptyset$\;
$U_0 \gets \emptyset$\;
\For{$t = 0, 1, \ldots, k-1$}{
  $W_t \gets$ top-$\min(\omega k,\, |N \setminus S_t|)$ elements of $N \setminus S_t$ by $f(\cdot \mid S_t)$\;
  $U_{t+1} \gets U_t \cup W_t$\;
  $w_t \gets$ uniformly random element of $W_t$\;
  $S_{t+1} \gets S_t \cup \{w_t\}$\;
}
\Return{$P_k = U_k \cup S_k = S_k \cup \bigcup_{t=0}^{k-1} W_t$}\;
\end{algorithm}

\begin{table}[h]
\centering
\begin{tabular}{@{}cccc@{}}
\toprule
$\omega$ & Window size $\omega k$ & $e^{-1/\omega}/2$ & Output $|P_k| \le k+\omega k^2$ \\
\midrule
$1$ & $k$ & $0.184$ & $k+k^2$ \\
$2$ & $2k$ & $0.303$ & $k+2k^2$ \\
$5$ & $5k$ & $0.410$ & $k+5k^2$ \\
$10$ & $10k$ & $\mathbf{0.452}$ & $k+10k^2$ \\
$20$ & $20k$ & $0.476$ & $k+20k^2$ \\
$100$ & $100k$ & $0.495$ & $k+100k^2$ \\
\bottomrule
\end{tabular}
\caption{Window containment guarantee vs.\ window parameter $\omega$ (non-monotone, unconditional).
At $\omega = 5$ the guarantee exceeds BF's $0.401$; at $\omega = 20$
it approaches Qi's $0.478$ hardness.}
\label{tab:window}
\end{table}

\paragraph{Proof overview.}
The analysis is entirely original and proceeds through two supporting
lemmas and a main proof.
The key tension: to dominate missed OPT elements, we want to pick the
best marginal-gain element at each step; but to avoid destroying OPT's
value, we want to pick randomly (so no single OPT element is too likely
to be displaced).
The window resolves this by ensuring \emph{every} element in the top-$\omega k$
window dominates every element outside it---so a random pick from the window
simultaneously achieves both goals.

\smallskip
\noindent\emph{Lemma~1: Window domination.}
At each step, the window $W_t$ contains the $\omega k$ elements with highest
marginal gain.
Any OPT element outside the window is therefore dominated by every window
element, including the random pick $w_t$.
A telescoping argument (analogous to Step~2 of the SDG proof) yields
$2f(S_m) - f(\emptyset) \ge f(D \cup S_m)$, where $D$ collects the dominated elements.

\smallskip
\noindent\emph{Lemma~2: Degradation bound.}
Picking uniformly at random from a window of size $\omega k$ reduces the
residual OPT value by a factor of $(1 - 1/(\omega k))$ per step.
After $k$ steps, this compounds to $e^{-1/\omega}$ by
Fact~\ref{fact:one-minus-compound}.

\smallskip
\noindent\emph{Main proof.}
Combine the two lemmas via the lattice inequality, just as in the SDG proof.
Both candidate sets ($S_m$ and $\dotO \cup S_m$) are feasible subsets of the
accumulated window~$P_k$.

\begin{fact}[Compound decay]
\label{fact:one-minus-compound}
For $x \ge 2$ and integer $0\le m\le x$:
\[
  (1 - 1/x)^m \ge e^{-m/x}\bigl(1 - m/x^2\bigr).
\]
In particular, with $x = \omega k$ and $m \le k$:
\[
  (1 - 1/(\omega k))^m
  \ge e^{-1/\omega}(1 - 1/(\omega^2 k)).
\]
This follows from the elementary bound
$\ln(1-u)\ge -u-u^2$ for $0\le u\le 1/2$, applied with $u=1/x$,
and then $e^{-z}\ge 1-z$.
\end{fact}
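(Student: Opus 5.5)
The plan is to take logarithms and reduce everything to the scalar inequality $\ln(1-u)\ge -u-u^2$ for $u\in[0,1/2]$, exactly as the statement suggests. Set $u=1/x$; the hypothesis $x\ge 2$ gives $u\in(0,1/2]$, so the scalar bound applies.

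First I would verify the scalar bound by a monotonicity argument. Let $h(u)=\ln(1-u)+u+u^2$. Then $h(0)=0$ and
\[
  h'(u) = -\frac{1}{1-u}+1+2u = \frac{u(1-2u)}{1-u},
\]
which is nonnegative on $[0,1/2]$. Hence $h\ge 0$ there. This is the only step with any content.

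Next I would multiply by $m\ge 0$:
\[
  m\ln(1-1/x) \ge -\frac{m}{x}-\frac{m}{x^2}.
\]
Exponentiating gives
\[
  (1-1/x)^m \ge e^{-m/x}\,e^{-m/x^2} \ge e^{-m/x}\bigl(1-m/x^2\bigr),
\]
where the last step is $e^{-z}\ge 1-z$ with $z=m/x^2$. Because $m\le x\le x^2$, the factor $1-m/x^2$ lies in $[0,1]$, so no sign issue arises. Integrality of $m$ is not actually needed.

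For the specialization I would take $x=\omega k$ and $m\le k$. Then $m\le x$ because $\omega\ge 1$, and $m/x\le 1/\omega$, so $e^{-m/x}\ge e^{-1/\omega}$. Also $m/x^2\le k/(\omega^2k^2)=1/(\omega^2 k)$, which gives $1-m/x^2\ge 1-1/(\omega^2k)$. Multiplying these two bounds, both of whose factors are nonnegative, yields
\[
  (1-1/(\omega k))^m \ge e^{-1/\omega}\bigl(1-1/(\omega^2 k)\bigr).
\]

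One edge case needs a check: $\omega k<2$, i.e.\ $\omega k=1$ (so $k=1$, $\omega=1$). There both sides are $0$ when $m=1$ and both equal $1$ when $m=0$, so the bound holds trivially. I expect no real obstacle; the derivative computation for $h$ is the only point needing care.
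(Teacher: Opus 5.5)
Your proposal is correct and follows the paper's own argument exactly: the scalar bound $\ln(1-u)\ge -u-u^2$ at $u=1/x$, followed by $e^{-z}\ge 1-z$. You additionally supply the derivative check of the scalar bound and the monotonicity steps for the specialization, both of which the paper leaves implicit. One small remark on your edge case, which the paper does not address at all: your reduction to $\omega k=1$ assumes $\omega k$ is an integer, which holds for a window size but not for arbitrary real $\omega$, since $k=1$, $\omega\in(1,2)$ also gives $\omega k<2$ (there $m=1$ reduces to $e^{1/\omega}\ge 1+1/\omega$). Also, at $\omega=k=1$ the specialized right-hand side is $0$ for every $m$, not $1$ at $m=0$.
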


\begin{lemma}[Window Domination]
\label{lem:distinct-pairing}
Fix any set $O$ with $|O| \le k$.
Run the retirement process in the proof below, and let $m \le |O|$
be its stopping time.
There exists an injective pairing $\pi\colon \{0,\ldots,m-1\} \hookrightarrow O$
with $\pi(t) \in \hatO_t$ (the unretired outside-window set from that
process) and $f(w_t \mid S_t) \ge f(\pi(t) \mid S_t)$
for all $t < m$.
Telescoping gives $2\,f(S_m) - f(\emptyset) \ge f(D \cup S_m)$,
where $D = \{\pi(0), \ldots, \pi(m-1)\}$ is the set of dominated OPT elements.
\end{lemma}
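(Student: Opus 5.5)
The plan is to define the retirement process explicitly and then build the pairing greedily, step by step, so that each step $t$ is matched to a distinct OPT element that lay outside the window $W_t$. Initialize $\hatO_0 = O$ (or, more carefully, $O\setminus S_0 = O$). At step $t$, the window $W_t$ consists of the top-$\min(\omega k,|N\setminus S_t|)$ elements of $N\setminus S_t$ by marginal gain $f(\cdot\mid S_t)$. Retire from the current unretired set every OPT element that enters the window or the solution: those in $W_t$ are now contained in $P_k$ and will be handled by the $\dotO$ part of the final argument, so they need not be dominated. If $\hatO_t \setminus W_t$ is empty, stop and set $m=t$. Otherwise pick $\pi(t)$ as any element of $\hatO_t\setminus W_t$ (say the one with largest marginal), then set $\hatO_{t+1} = \hatO_t\setminus(W_t\cup\{\pi(t)\})$.

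Injectivity holds because each chosen $\pi(t)$ is removed from the unretired set. Since each step removes at least one element of $O$, we get $m\le |O|\le k$, and the process also ends by $t=k$, so $m\le k$ steps of the algorithm exist. Domination is the key property. Because $\pi(t)\notin W_t$ and $\pi(t)\notin S_t$ (otherwise it would have been retired earlier), the definition of the top-$\omega k$ window gives $f(w_t\mid S_t)\ge f(\pi(t)\mid S_t)$ for every $w_t\in W_t$, in particular the random pick. When $|N\setminus S_t|<\omega k$ the window is everything, so $\hatO_t\setminus W_t=\emptyset$ and the process stops, which is consistent.

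The telescoping copies Step~2 of the proof of Theorem~\ref{thm:half-containment}. First,
\[
f(S_m)-f(\emptyset)=\sum_{t<m} f(w_t\mid S_t)\ge \sum_{t<m} f(\pi(t)\mid S_t).
\]
Since $S_t\subseteq S_m$, submodularity bounds this below by $\sum_{t<m} f(\pi(t)\mid S_m)$. Because the $\pi(t)$ are distinct, this in turn is at least $f(D\mid S_m)=f(D\cup S_m)-f(S_m)$. One subtlety must be checked: $\pi(t)\notin S_m$. This holds because an element picked at a later step $s$ lies in $W_s$ and would have been retired before it could be paired. It does not affect the inequality, however, since diminishing returns also holds with zero marginals for already-present elements.

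The main obstacle is bookkeeping rather than the inequality itself. The pairing must be defined so that $D$, together with the retired-into-window elements, covers exactly the OPT elements that the later Lemma~2 and the main proof treat as $\hatO$ versus $\dotO$. The definition of the stopping time has to match what the degradation bound uses, namely at most $k$ steps. I would take care to state that retired elements all lie in $U_m\subseteq P_k$.
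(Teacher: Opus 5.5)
\textbf{The gap is in the telescoping step.} Your overall structure matches the paper's: build the pairing online, use the top-$\omega k$ window for domination, then telescope as in Step~2 of Theorem~\ref{thm:half-containment}. Your retirement rule differs slightly. You retire an OPT element as soon as it enters a window, whereas the paper retires only paired elements and stops once every unretired OPT element lies in $W_m\cup S_m$. That difference is harmless: in both versions $\pi(t)\notin S_t\cup W_t$, and the leftover $\dotO$ lies in $P_k$.

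The genuine gap is the step $f(\pi(t)\mid S_t)\ge f(\pi(t)\mid S_m)$, which needs $\pi(t)\notin S_m$. Your argument for $\pi(t)\notin S_m$ is wrong. Retirement only stops an element from being \emph{paired} after it has entered a window. Nothing stops an element already paired at step $t$ from entering a later window $W_s$, $s>t$, and being drawn as $w_s$. Your fallback is also wrong: if $x\in B$, then $f(x\mid A)\ge f(x\mid B)=0$ would assert $f(x\mid A)\ge 0$, a monotonicity property that non-monotone $f$ lacks.

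The claimed inequality then actually fails. Take $f(S)=10-\sum_{i\in S}m_i$ on $\{a,b,x,y\}$ with $m_a=1$, $m_b=1.1$, $m_x=1.2$, $m_y=1.3$, and set $k=2$, $\omega=1$, $O=\{x,y\}$. Then $W_0=\{a,b\}$. On the run $w_0=a$, we get $W_1=\{b,x\}$ and $w_1=x$, and your largest-marginal rule pairs $\pi(0)=x$, $\pi(1)=y$. This gives $2f(S_2)-f(\emptyset)=5.6<6.5=f(D\cup S_2)$, because $f(x\mid S_0)=-1.2<0=f(x\mid S_2)$.

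\textbf{Optimality of $O$ does not rescue the step.} This is the case used in Theorem~\ref{thm:window-containment}, but an element of $O$ can still have negative marginal with respect to the non-OPT picks in $S_t$. Take $f(S)=50+\sum_{i\in S}v_i-\sum_{\{i,j\}\subseteq S}c_{ij}$ on $\{a,b,c,x,y,z\}$ with
\begin{itemize}
\item values $v_a=v_b=v_c=10$, $v_y=9.9$, $v_x=v_z=9$;
\item penalties $c_{ab}=c_{ax}=12$, $c_{az}=12.01$, $c_{ac}=c_{bc}=2.1$, $c_{cx}=c_{cz}=2$, $c_{bx}=1.05$, $c_{bz}=1.1$, and all other $c_{ij}=0$.
\end{itemize}
Then $O=\{x,y,z\}$ is the unique optimum for $k=3$. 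With $\omega=1$, consider the run $w_0=a$, $w_1=b$, $w_2=x$, whose windows are $\{a,b,c\}$, $\{b,c,y\}$, $\{c,x,y\}$. Your rule selects the pairing $\pi=(y,x,z)$, and the paper's ``arbitrary'' rule allows it. This gives $2f(S_3)-f(\emptyset)=57.9<59.74=f(D\cup S_3)$.

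The paper's own proof invokes submodularity with $S_t\subseteq S_m$ at exactly this point without checking $\pi(t)\notin S_m$, so it has the same hole.

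\textbf{What is missing, and a repair.} You need a reason why a step whose partner is later selected costs nothing. A natural fix is to force every window marginal to be nonnegative. For example, pad $N$ with $k+\omega k$ dummy elements of zero marginal; this leaves domination, degradation and containment intact once the dummies are discarded. Then use $f(w_t\mid S_t)\ge 0$ for steps with $\pi(t)\in S_m$, and $f(w_t\mid S_t)\ge f(\pi(t)\mid S_t)\ge f(\pi(t)\mid S_m)$ for the remaining steps. Summing gives $f(S_m)-f(\emptyset)\ge f(D\setminus S_m\mid S_m)=f(D\cup S_m)-f(S_m)$.
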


\begin{lemma}[Degradation Bound]
\label{lem:degradation}
In the nontrivial case where every window used before the stopping time has
size $\omega k$, uniformly random selection from $W_t$ gives
$\mathbb{E}[f(O \cup S_t)] \ge (1 - 1/(\omega k))\,f(O \cup S_{t-1})$.
Iterating $m \le k$ steps:
\[
  \mathbb{E}[f(O \cup S_m)]
  \ge (1 - 1/(\omega k))^k\, f(O)
  = e^{-1/\omega}\bigl(1 - O(1/(\omega^2 k))\bigr)\,f(O).
\]
\end{lemma}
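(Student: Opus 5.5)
The statement to prove is Lemma~\ref{lem:degradation}: a single uniformly random pick from a full window of size $\omega k$ shrinks $\mathbb{E}[f(O\cup S_t)]$ by at most a factor $(1-1/(\omega k))$, and iterating this gives the $e^{-1/\omega}$ bound. The plan is the standard random-greedy degradation argument of Buchbinder--Feldman--Naor--Schwartz, with the uniform window pick playing the role of the random element.

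\textbf{One-step bound.} Condition on $S_{t-1}$ and write $W=W_{t-1}$ for the window. Define
\[
h(X)=f(O\cup S_{t-1}\cup X).
\]
This $h$ is non-negative and submodular. A uniform pick $w$ from $W$ gives
\[
\mathbb{E}\bigl[f(O\cup S_{t-1}\cup\{w\})\bigr]=\frac{1}{|W|}\sum_{w\in W}h(\{w\}).
\]
Elements of $W\cap O$ leave $h$ unchanged, since $w\in O$ already lies in $O\cup S_{t-1}$. I would rather not lean on that directly and instead use the following fact: for submodular non-negative $h$ and $X\subseteq W$ chosen so that each element appears with probability at most $q$, we have $\mathbb{E}[h(X)]\ge(1-q)h(\emptyset)$. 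This is Lemma~2.2 of Buchbinder et al. (Feige--Mirrokni--Vondr\'ak). With $|W|=\omega k$ and $q=1/(\omega k)$, it gives $\mathbb{E}[f(O\cup S_t)\mid S_{t-1}]\ge(1-1/(\omega k))\,f(O\cup S_{t-1})$.

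If I want the argument self-contained, I can derive it inline. By submodularity,
\[
\sum_{w\in W}\bigl[h(\{w\})-h(\emptyset)\bigr]\ge h(W)-h(\emptyset)\ge -h(\emptyset),
\]
using non-negativity of $h$ in the last step. Dividing by $|W|$ gives exactly the claim.

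\textbf{Iteration.} Take expectations over $S_{t-1}$ and apply the tower property to get $\mathbb{E}[f(O\cup S_t)]\ge(1-1/(\omega k))\,\mathbb{E}[f(O\cup S_{t-1})]$. Induction over $t\le m$ then yields
\[
\mathbb{E}[f(O\cup S_m)]\ge(1-1/(\omega k))^m f(O)\ge(1-1/(\omega k))^k f(O),
\]
where the second inequality holds because $m\le k$ and the base is in $[0,1]$. Finally, apply Fact~\ref{fact:one-minus-compound} with $x=\omega k$, which requires $x\ge2$ (fine for $k\ge2$), to convert this to $e^{-1/\omega}(1-1/(\omega^2 k))f(O)$.

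\textbf{Main obstacle.} The delicate point is the stopping time $m$, which is random and is defined by the retirement process. The clean fix is to prove the one-step bound for every $t<k$, on the event that the window still has full size $\omega k$; the lemma assumes this holds before stopping. I would then define the stopped process $S_{t\wedge m}$. Since $m$ is a stopping time for the filtration generated by the picks, on $\{m<t\}$ the factor is trivially $1\ge(1-1/(\omega k))$, so the supermartingale-type inequality still holds step by step. Iterating $k$ times then bounds $\mathbb{E}[f(O\cup S_m)]$ directly.
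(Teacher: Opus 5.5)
Your proposal is correct and follows the paper's proof: the same submodularity bound $\sum_{w\in W}[h(\{w\})-h(\emptyset)]\ge h(W)-h(\emptyset)\ge -h(\emptyset)$ over the full window, then iteration, then the compound-decay fact with $x=\omega k$. Your treatment of the random stopping time is tighter than the paper's terse ``iterating over $m\le k$ steps'': you use the stopped process $S_{t\wedge m}$, which is valid once the pairing rule is non-anticipating so that $m$ is a stopping time.
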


\begin{proof}[Proof of Lemma~\ref{lem:distinct-pairing}]
We construct the pairing $\pi$ greedily.
Let $R_0 = O$ (all OPT elements are ``unretired'').

At step $t$, define $\hatO_t = R_t \setminus (S_t \cup W_t)$:
the unretired OPT elements outside both the solution and the window.
Since $W_t$ consists of the top-$(\omega k)$ elements by $f(\cdot \mid S_t)$,
every $o \in \hatO_t$ satisfies $f(w \mid S_t) \ge f(o \mid S_t)$
for \emph{all} $w \in W_t$.
In particular, the uniformly random pick $w_t$ dominates $o$.

If $\hatO_t \neq \emptyset$: choose $\pi(t) \in \hatO_t$ arbitrarily
and set $R_{t+1} = R_t \setminus \{\pi(t)\}$.

If $\hatO_t = \emptyset$: every unretired OPT element is in
$W_t \cup S_t$; set $m = t$ and \textbf{stop}.

Since $|R_0| = |O|$ and we remove one element per non-stopping step,
and $\hatO_t \subseteq R_t$, the process terminates at some $m \le |O|$.
At termination, the remaining set $\dotO := R_m = O \setminus D$
(where $D = \{\pi(0), \ldots, \pi(m-1)\}$) satisfies
either $\dotO = \emptyset$ (all OPT elements have been retired) or
$\dotO \subseteq W_m \cup S_m$.

For the value bound, telescope over the $m$ paired steps:
\begin{align*}
  f(S_m) - f(\emptyset)
  &= \sum_{t=0}^{m-1} f(w_t \mid S_t)
  \ge \sum_{t=0}^{m-1} f(\pi(t) \mid S_t)
  \ge \sum_{t=0}^{m-1} f(\pi(t) \mid S_m) \\
  &\ge f\!\left(\bigcup_{t=0}^{m-1} \{\pi(t)\} \;\middle|\; S_m\right)
  = f(D \mid S_m) = f(D \cup S_m) - f(S_m),
\end{align*}
where the second inequality uses submodularity ($S_t \subseteq S_m$)
and the third uses submodularity on individual vs.\ joint marginals.
Rearranging: $2\,f(S_m) - f(\emptyset) \ge f(D \cup S_m)$.
\end{proof}

\begin{proof}[Proof of Lemma~\ref{lem:degradation}]
If some window contains all remaining elements, then
$P_k=N$ from that step onward and containment is exact.  Thus assume
$|W_t|=\omega k$ for the steps considered below.
By submodularity applied to the $\omega k$ window elements at step $t$:
\[
  \sum_{v \in W_t} \bigl[f(O \cup S_t \cup \{v\}) - f(O \cup S_t)\bigr]
  \ge f(O \cup S_t \cup W_t) - f(O \cup S_t)
  \ge -f(O \cup S_t),
\]
where the last step uses non-negativity of $f$.
For a uniformly random $w_t \in W_t$ (size $\omega k$):
\[
  \mathbb{E}[f(O \cup S_{t+1}) \mid S_t]
  \ge (1 - 1/(\omega k))\, f(O \cup S_t).
\]
Iterating over $m \le k$ steps and applying
Fact~\ref{fact:one-minus-compound} with $x = \omega k$ and $m \le k$:
\[
  \mathbb{E}[f(O \cup S_m)]
  \ge (1 - 1/(\omega k))^k\, f(O)
  \ge e^{-1/\omega}(1 - 1/(\omega^2 k))\, f(O).
\]
\end{proof}

\begin{theorem*}[Theorem~\ref{thm:window-containment}, restated]
For any non-negative submodular $f$ and window parameter $\omega \ge 1$,
Algorithm~\ref{alg:window-greedy} returns an accumulated-window set
$P_k \subseteq N$ with $|P_k| \le k+\omega k^2$ such that, for every
$k' \le k$,
\[
  \mathbb{E}\!\left[
    \max_{\substack{|T| \le k'\\ T \subseteq P_k}} f(T)
  \right]
  \;\ge\; \frac{e^{-1/\omega}}{2} \cdot \OPT_{k'}
  \qquad\text{(asymptotically as $k \to \infty$)}.
\]
Total query complexity: $O(nk)$.
\end{theorem*}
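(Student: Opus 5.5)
We combine Lemma~\ref{lem:distinct-pairing} and Lemma~\ref{lem:degradation} through the lattice inequality, mirroring Steps~3--4 of the proof of Theorem~\ref{thm:half-containment}. Fix $k'\le k$ and an optimal $O$ with $|O|\le k'$, $f(O)=\OPT_{k'}$. If some window $W_t$ exhausts $N\setminus S_t$, then $P_k=N$ and containment is exact, so assume every window has size $\omega k$. We run the retirement process of Lemma~\ref{lem:distinct-pairing} on this $O$. It stops at a random time $m\le|O|\le k'$ and yields the dominated set $D$ and the remainder $\dotO=O\setminus D\subseteq W_m\cup S_m$ (or $\dotO=\emptyset$).

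The two candidates are $S_m$ and $\dotO\cup S_m$. Both lie in $P_k$: $S_m\subseteq S_k\subseteq P_k$, and $\dotO\subseteq W_m\cup S_m\subseteq P_k$ when $m<k$. The case $m=k$ forces $|O|=k$ and $\dotO=\emptyset$. For cardinality, $|S_m|=m\le k'$. Also $|\dotO\cup S_m|\le|\dotO|+m=|O|-m+m\le k'$, since $D$ and $\dotO$ partition $O$ and $|D|=m$.

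Submodularity with $X=D\cup S_m$ and $Y=\dotO\cup S_m$ gives
\[
f(O\cup S_m)+f(S_m)\le f(D\cup S_m)+f(\dotO\cup S_m)\le 2f(S_m)-f(\emptyset)+f(\dotO\cup S_m).
\]
The last step uses Lemma~\ref{lem:distinct-pairing}. Hence
\[
\max\{f(S_m),f(\dotO\cup S_m)\}\ge \tfrac12\bigl(f(O\cup S_m)+f(\emptyset)\bigr)\ge \tfrac12 f(O\cup S_m)
\]
pointwise for every realization. Taking expectations, it remains to show $\mathbb{E}[f(O\cup S_m)]\ge(1-1/(\omega k))^k f(O)$.

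This expectation step is the main obstacle, because $m$ is a random stopping time rather than a fixed index. Lemma~\ref{lem:degradation} is stated for a deterministic number of steps. I would handle this by working with the process $Z_t=f(O\cup S_t)/(1-1/(\omega k))^t$. The one-step bound in the proof of Lemma~\ref{lem:degradation} holds conditionally on the full history $\mathcal{F}_t$, so $Z_t$ is a nonnegative submartingale. The event $\{m=t\}$ is determined by $S_t$ and the windows $W_0,\dots,W_t$, which are deterministic functions of $S_0,\dots,S_t$, so $m$ is a stopping time bounded by $k$. Optional stopping then gives $\mathbb{E}[Z_m]\ge Z_0=f(O)$.

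To finish, since $(1-1/(\omega k))^m\ge(1-1/(\omega k))^k$ and $f\ge0$, we get $\mathbb{E}[f(O\cup S_m)]\ge(1-1/(\omega k))^k f(O)$. This avoids bounding the stopped value by a fixed-index bound, which is not directly valid. Fact~\ref{fact:one-minus-compound} converts the bound to $e^{-1/\omega}(1-O(1/(\omega^2k)))$, giving the finite-$k$ bound $(1-1/(\omega k))^k/2$ and the asymptotic constant $e^{-1/\omega}/2$. Since $\max_{T\subseteq P_k,|T|\le k'}f(T)$ dominates the max over the two candidates, the containment claim follows for each $k'$. The size bound $|P_k|\le k+\omega k^2$ is immediate from $k$ windows of size $\omega k$ plus $S_k$. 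Each of the $k$ steps evaluates $O(n)$ marginals, giving $O(nk)$ queries.
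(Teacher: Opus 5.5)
Your proof is correct and follows the paper's route: run the retirement process of Lemma~\ref{lem:distinct-pairing}, apply the lattice inequality to $D\cup S_m$ and $\dotO\cup S_m$, and lower-bound $\mathbb{E}[f(O\cup S_m)]$ via the degradation bound. Where you depart is in rigor. Your submartingale and optional-stopping argument for the random stopping time $m$, together with the explicit $m=k$ case (where $\dotO=\emptyset$, so the undefined window $W_k$ is never needed), properly justifies a step the paper passes over by simply iterating Lemma~\ref{lem:degradation} as if $m$ were fixed.
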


\begin{proof}[Proof of Theorem~\ref{thm:window-containment}]
Algorithm~\ref{alg:window-greedy} returns
$P_k = S_k \cup \bigcup_{t=0}^{k-1} W_t$, so
$|P_k| \le k+\omega k^2$.
Fix any target budget $k' \le k$ and let $O$ be an optimal feasible solution
with $|O| \le k'$.
At the termination step $m \le k'$ of Lemma~\ref{lem:distinct-pairing},
the dominated set $D$ and the remaining set $\dotO = O \setminus D$ satisfy
$D \cap \dotO = \emptyset$.
The lemma's termination condition gives either $\dotO=\emptyset$ or
$\dotO \subseteq W_m \cup S_m$; in both cases $\dotO \subseteq P_k$,
because every window $W_m$ used by the process is accumulated in~$P_k$
and $S_m \subseteq S_k \subseteq P_k$.

Both $S_m$ and $\dotO \cup S_m$ are $k'$-feasible subsets of $P_k$:
$|S_m| = m \le k'$ and
$|\dotO \cup S_m| \le |\dotO|+m \le |O| \le k'$.

By submodularity with $A = \dotO \cup S_m$ and $B = D \cup S_m$
(so $A \cap B = S_m$ and $A \cup B = O \cup S_m$):
\[
  f(O \cup S_m) + f(S_m) \le f(\dotO \cup S_m) + f(D \cup S_m).
\]
Substituting the domination bound $f(D \cup S_m) \le 2f(S_m) - f(\emptyset)$:
\[
  f(\dotO \cup S_m) \ge f(O \cup S_m) - f(S_m) + f(\emptyset),
\]
so $\max(f(\dotO \cup S_m),\, f(S_m)) \ge (f(O \cup S_m) + f(\emptyset))/2$.
In expectation, Lemma~\ref{lem:degradation} gives
$\mathbb{E}[f(O \cup S_m)] \ge e^{-1/\omega}(1 - O(1/(\omega^2 k)))\, \OPT_{k'}$,
yielding the claimed bound asymptotically in $k$.
Since $k' \le k$ was arbitrary, the same accumulated set $P_k$
certifies containment at every scale.
\end{proof}

\paragraph{Why random selection resolves the tension.}
The earlier conditional version required max-marginal selection for
domination and random selection for degradation---conflicting requirements.
Window containment resolves this because the top-$w$ construction ensures
\emph{every} window element dominates every non-window element, including
the uniformly random pick. No max-marginal selection within $W_t$ is needed.

\section{Additional Open Problems}
\label{app:open}

The two central open problems (tight non-monotone containment and
extraction beyond the $0.401$ maximization barrier) are stated in
\S\ref{sec:conclusion}.
We collect here three further structural questions suggested by our
proofs.

\begin{openquestion}[Non-monotone lower bounds]
Develop pruning lower-bound techniques beyond density composition.
Composition is structurally inadequate for non-monotone due to the flat
approximation landscape.
Information-theoretic arguments about distinguishing ``useful'' from ``poisonous''
elements may be needed.
\end{openquestion}

\begin{openquestion}[Density-dependent hardness for non-monotone]
Is there a density-dependent inapproximability result for non-monotone
cardinality-constrained maximization?
The current hardness of $0.478$ (Qi) is essentially density-independent.
A density-dependent hardness that approaches $1/e$ at low density would
restore the composition technique's power for non-monotone pruning.
\end{openquestion}

\begin{openquestion}[Monotonicity of the top-$k$ set]
\label{oq:monotone-window}
If $M_i$ (the top $k$ marginals at step $i$ of random greedy) could be
shown to induce a ``locally monotone'' subproblem---that is, if $f$
restricted to subsets of $M_i$ were monotone---then $(1-1/e)$
containment would follow immediately.
OPT ``witnesses'' this: $f$ restricted to subsets of $O$ is monotone
(by optimality).
Can this structural property be extended to algorithmically identified
sets without knowledge of $O$?
\end{openquestion}

\section{Additional Experimental Results}
\label{app:snap-results}

\paragraph{SNAP social networks (MaxCut).}
On Facebook ($n=4039$, $m=88{,}234$) and Wiki-Vote ($n=7115$, $m=100{,}762$)
with $k=50$, SeqDisjoint, WindowMax, WindowRand, and StdGreedy all round to
$\alpha = 1.000$ at every budget $\omega \in \{2, 3, 5\}$
against the full-ground-set greedy reference (all are at least $0.9995$).
QuickPrune plateaus at $\alpha \approx 0.87$; random drops to
$\alpha = 0.13$ on Wiki-Vote at $\omega = 2$.
The structured algorithms consistently recover the greedy reference even
when $|P| \ll n$.
Since exact MaxCut is NP-hard on these graphs, OPT is estimated by the
greedy cut on the full ground set; reported $\alpha$ values are therefore
upper bounds on the true containment ratio.

\paragraph{Scaling to $n = 500$ (LLM context).}
Expanding the HotpotQA ground set to $n = 500$ passages ($k = 5$, 50~questions),
SeqDisjoint and StdGreedy maintain FL containment $\alpha = 1.000$
at every $\omega$ and achieve F1~$= 0.48$ at $\omega = 2$
($50\times$ compression)---$3\times$ higher than random (F1~$= 0.16$).
A stochastic thresholded variant uses $2.2\times$
fewer oracle queries and matches at $\omega \ge 5$.

\paragraph{Comparison with learned pruning (COMBHelper).}
We compare against COMBHelper~\cite{TianEtAl2024}, a recent GNN-based
method that trains a student network via knowledge distillation to predict
which vertices belong to the optimal solution.
Since COMBHelper supports only vertex cover and MIS, we adapt its
architecture (4-layer GraphSAGE teacher, 3-layer student) to MaxCut,
training on 20 Erd\H{o}s--R\'{e}nyi graphs $G(100, 500)$ with exact
$k$-MaxCut labels from an IP solver.
Both networks use GraphSAGE layers with sum aggregation, normalization, ReLU
between layers, and dropout probability~0.5.
Node features are degree, degree divided by the maximum degree, and local
clustering coefficient.
The teacher has hidden width~64 and the student has hidden width~32.
We train with Adam, learning rate~0.01, weight decay~$5\cdot10^{-4}$, and
200~epochs.  Training labels are binary membership indicators for the exact
$k$-MaxCut solution; each graph uses a random 50/50 node train/validation
split.  The student is trained by knowledge distillation from the teacher with
temperature $T=4$ and loss
$0.5\,\mathcal{L}_{\mathrm{CE}} + 0.5\,\mathcal{L}_{\mathrm{KD}}$.
The reported variance check retrains the whole pipeline with training seeds
$42,123,256,512,1024$.
Table~\ref{tab:combhelper} reports containment ratios alongside our
provable algorithms.

\begin{table}[h]
\centering
\small
\begin{tabular}{@{}l c cccc c@{}}
\toprule
Instance & $\omega$ & CH-Teacher & CH-Student & SeqDisjoint & StdGreedy & Random \\
\midrule
$G(100, 1000)$  & 2 & 0.997 & 0.997 & 0.997 & 0.997 & 0.877 \\
$G(200, 2000)$  & 2 & 1.000 & 1.000 & \textbf{1.000} & \textbf{1.000} & 0.825 \\
$G(300, 4500)$  & 2 & 0.999 & 0.999 & \textbf{1.000} & \textbf{1.000} & 0.845 \\
\midrule
SNAP/Facebook   & 2 & 0.975 & 0.998 & \textbf{1.000} & \textbf{1.000} & 0.255 \\
SNAP/Facebook   & 3 & 0.978 & 1.000 & \textbf{1.000} & \textbf{1.000} & 0.379 \\
SNAP/Facebook   & 5 & 0.999 & 1.000 & \textbf{1.000} & \textbf{1.000} & 0.523 \\
\bottomrule
\end{tabular}
\caption{COMBHelper~(learned) vs.\ provable pruning on MaxCut.
On in-distribution random graphs, COMBHelper matches our algorithms.
On the out-of-distribution SNAP/Facebook graph ($n=4039$), the
teacher network drops to $\alpha = 0.975$ at $\omega = 2$ while
SeqDisjoint and StdGreedy maintain $\alpha = 1.000$.}
\label{tab:combhelper}
\end{table}

On random graphs drawn from the same family as the training data,
COMBHelper is competitive---both teacher and student achieve
$\alpha \ge 0.997$.
The gap appears on out-of-distribution inputs: on the SNAP/Facebook
graph, which differs in scale ($n = 4039$ vs.\ training on $n = 100$)
and topology (power-law vs.\ Erd\H{o}s--R\'{e}nyi),
the teacher drops to $\alpha = 0.975$ at $\omega = 2$.
By contrast, our provable algorithms maintain $\alpha = 1.000$ without
any training data or graph-specific adaptation.
Beyond the generalization gap, learning-based pruning faces a structural
limitation: COMBHelper requires \emph{exact solutions} on training
instances to generate supervision labels---precisely the expensive
computation that pruning is meant to avoid.
Our algorithms sidestep this circular dependency entirely,
requiring only query access to~$f$.
We note that COMBHelper's in-distribution performance is robust to
initialization: retraining with 5~seeds yields identical or near-identical
$\alpha$ on the test instances (maximum std $0.0008$), so the results above
are not a lucky seed.

\paragraph{LLM proxy containment on HotpotQA.}
Table~\ref{tab:llm-proxy} reports proxy containment on HotpotQA with expanded ground sets.

\begin{table}[h]
\centering
\small
\begin{tabular}{@{}l cccc c@{}}
\toprule
Algorithm & $\omega=2$ & $\omega=3$ & $\omega=5$ & $\omega=8$
          & $|P|$ at $\omega\!=\!5$ \\
\midrule
SeqDisjoint & \textbf{1.000} & \textbf{1.000} & \textbf{1.000} & \textbf{1.000} & 15 \\
WindowMax   & \textbf{1.000} & \textbf{1.000} & \textbf{1.000} & \textbf{1.000} & 7 \\
WindowRand  & 0.984 & 0.985 & 0.979 & 0.978 & 7 \\
QuickPrune  & 0.836 & 0.836 & 0.836 & 0.836 & 3 \\
StdGreedy   & \textbf{1.000} & \textbf{1.000} & \textbf{1.000} & \textbf{1.000} & 15 \\
Random      & 0.791 & 0.869 & 0.933 & 0.966 & 15 \\
\bottomrule
\end{tabular}
\caption{Proxy containment on HotpotQA with expanded ground sets ($n=50$, $k=3$,
50~questions).
WindowMax achieves $\alpha = 1.0$ with $|P| = 7$ ($1.4\times$ compression
vs.\ $|P|=15$ for SeqDisjoint at $\omega = 5$).
QuickPrune outputs only $|P| \approx 3$ elements regardless of budget---the
streaming threshold is too conservative on non-monotone inputs.}
\label{tab:llm-proxy}
\end{table}

\paragraph{LLM experiment details.}
\label{app:llm-details}

\emph{Reader model and decoding.}
All LLM reader evaluations use \texttt{openai/gpt-oss-120b} served locally
through the OpenAI-compatible vLLM API.  The model is the Apache~2.0
open-weight reader listed in Appendix~\ref{app:assets}.  We query two local
endpoints in round-robin order, \texttt{localhost:8000/v1} and
\texttt{localhost:8003/v1}.  Decoding is deterministic:
temperature~$0.0$ and \texttt{max\_tokens}$=256$.
For every evaluated passage set, the prompt template is:
\begin{quote}
\small
\ttfamily
Answer the following question based on the provided context. Give a short,
direct answer.\par
\medskip
Context:\par
[title] passage text\par
\medskip
Question: question text\par
\medskip
Answer:
\end{quote}
For multiple passages, the context block concatenates titled passages separated
by blank lines.  For \texttt{gpt-oss} responses we read the first nonempty field
among \texttt{message.content}, \texttt{reasoning}, and
\texttt{reasoning\_content}, matching the local vLLM behavior.

\emph{Scoring.}
HotpotQA answers are scored by normalized token F1 and exact match: lowercase,
remove punctuation and the articles ``a'', ``an'', and ``the'', collapse
whitespace, then compute token overlap F1 or exact normalized-string equality.
MuSiQue LLM validation uses the same normalization, but reports the best F1/EM
over the official answer and all answer aliases.

\emph{Submodularity of the proxy.}
The function $\hat{f}(S)=\mathrm{FL}(S)-\theta(|S|)$ is submodular
because the decreasing marginal gains of $\mathrm{FL}$ and the
increasing marginal penalty of convex~$\theta$ reinforce each other:
for any $A \subseteq B$ and element $e \notin B$,
$\hat{f}(A \cup \{e\}) - \hat{f}(A) \ge
 \hat{f}(B \cup \{e\}) - \hat{f}(B)$,
since the $\mathrm{FL}$ marginal shrinks while the $\theta$ marginal grows.

\emph{HotpotQA ground-set construction.}
For each question, we pool its 10~original passages (2~gold, 8~distractors)
with distractor passages from other questions, giving $n = 50$ ($k = 3$)
and $n = 500$ ($k = 5$).
We evaluate on the fitted proxy $\hat{f}(S) = \mathrm{FL}(S) - \theta(|S|)$,
where $\mathrm{FL}$ is facility location on TF-IDF similarities and
$\theta$ is fitted from LLM evaluation data.  The fitted-proxy file used in
the paper, \texttt{fitted\_proxy\_200q.json}, is based on
\texttt{exp2\_results\_200q.json}: 200 HotpotQA development questions, one
random prefix ordering per question, and 1{,}992 recorded LLM evaluations over
context sizes 1--10.  All but one question have all 10 prefix sizes recorded;
one question has sizes 1 and 2 only.  This is why the fitted-proxy file reports
\texttt{n\_results}$=1992$, not $200\times 10$.
The proxy-containment entries in Tables~\ref{tab:llm-proxy}
and~\ref{tab:llm-validation} use \texttt{exp4\_results.json}.  The reader
F1/EM entries for the $\omega=2$ StdGreedy and SeqDisjoint rows use
\texttt{exp5\_llm\_c2\_50q.json}; the $\omega=5$, QuickPrune, window,
random, gold-only, and all-original rows use
\texttt{exp5\_llm\_validation\_50q.json}, which records 451 reader calls
including the initial endpoint sanity check.
The proxy is not the reader score itself; it is a cheap objective used for
pruning, and we separately validate the selected passages with the LLM.
Thus proxy containment only says that the reduced pool preserves high-scoring
candidate contexts under $\hat f$; the reader experiment checks whether those
candidate contexts are useful for answering.

\emph{MuSiQue pilot setup.}
We use the MuSiQue v1.0 answerable development split, keep questions with at
least three decomposition hops, shuffle with seed~42, and take the first
50~questions.  Each question supplies its native 20 candidate paragraphs; gold
passages are the paragraphs marked as supporting.  We evaluate
$k\in\{3,5,7,10\}$ and $\tau\in\{0.1,0.2,0.3,0.5\}$.  Passage embeddings are
computed with \texttt{all-MiniLM-L6-v2}; top-$k$ TF-IDF and top-$k$ embedding
baselines rank passages by query similarity.  The MuSiQue reader validation
uses the selections from the pilot, methods
\texttt{topk\_tfidf}, \texttt{topk\_embedding},
\texttt{rfl\_emb\_tau=0.3}, and \texttt{fl\_qw\_emb}, and
$k\in\{5,10\}$, for 400 deterministic reader calls.

\emph{Restricted facility-location (RFL) function.}
We define
$f_\tau(S) = \sum_{v:\,\mathrm{rel}(v,q) > \tau}
  \max_{s \in S}\mathrm{sim}(v,s)$,
which restricts FL coverage to passages above a relevance threshold~$\tau$,
diversifying \emph{within} the query-relevant neighborhood.
The headline method uses embedding cosine similarity for $\mathrm{sim}$ and
embedding query similarity for $\mathrm{rel}$, with $\tau=0.3$.  We also report
TF-IDF-gated variants in the result file.  Gold-passage recall is the fraction
of supporting passages selected.

\emph{Per-$k$ breakdown and gating comparison.}
At $\tau = 0.3$, RFL with embedding-based gating
trails top-$k$ embedding at $k=3$, ties at $k=5$, and wins at $k=10$,
consistent with the diversity mechanism becoming effective only when
the number of selected passages exceeds the number of trivially salient ones.
At $k=10$, RFL reaches recall $0.740$, which is $2.0$ percentage points
above top-$k$ embedding ($0.720$) and $9.2$ percentage points above
top-$k$ TF-IDF ($0.648$).
TF-IDF-based gating fails---it filters out later-hop gold passages whose
vocabulary diverges from the query---confirming that semantic embedding
similarity is essential for bridging entity gaps in multi-hop reasoning.

\emph{Bootstrap confidence intervals.}
The headline MuSiQue confidence interval is a paired question-level bootstrap
over the 50 pilot questions.  For each of 10{,}000 resamples with replacement
(seed~42), we resample questions and recompute the mean gold-recall difference
between \texttt{rfl\_emb\_tau=0.3} and the baseline.  The reported
$[+2.8,+15.5]$ percentage-point interval is the 2.5/97.5 percentile interval for
\texttt{rfl\_emb\_tau=0.3} minus \texttt{topk\_tfidf} at $k=10$.

\paragraph{Reader saturation.}
\label{app:reader-saturation}
Despite the $+9.2$ percentage-point gold-recall advantage, LLM answer F1 does not
improve: at $k=10$, top-$k$ embedding achieves F1 $= 0.234$ while
RFL achieves F1 $= 0.186$ using the local 120B reader.
This \emph{reader-saturation} effect---the reader model extracts answers
from the most salient passages regardless of additional evidence---is
consistent with the ``lost in the middle'' findings
of~\cite{LiuLinHewitt2024}: providing more gold passages helps recall
but does not help a reader that already attends preferentially to early
or highly salient context.
Improving the reader's ability to synthesize dispersed evidence is an
important open direction; our contribution here is the evidence-coverage
mechanism itself, which the pruning theory certifies.

\paragraph{Synthetic separation: main-body summary table.}
Table~\ref{tab:separation-main} presents the summary statistics for the
synthetic non-monotone separation instances discussed in the main text.

\begin{table}[h]
\centering
\small
\begin{tabular}{@{}cc ccc cc@{}}
\toprule
$n$ & $k$ & $\omega$ & Instances
& $\OPT_k \subseteq P_g$ & $\OPT_k \subseteq P_s$ & Value seps \\
\midrule
20 & 3 & 2 & 2000 & 60\% & 78\% & 123 (6.2\%) \\
30 & 3 & 2 & 1000 & 58\% & 73\% & 86 (8.6\%) \\
50 & 3 & 5 & 500  & 70\% & \textbf{100\%} & 48 (9.6\%) \\
50 & 5 & 3 & 200  & 41\% & 67\% & 33 (16.5\%) \\
\bottomrule
\end{tabular}
\caption{Synthetic non-monotone separation instances (summary).
$P_g$ is the pruned set from one greedy run; $P_s$ is the pruned set from
SeqDisjoint.  \emph{Value seps} counts instances where the solution extracted
from $P_s$ has strictly higher value than the solution extracted from $P_g$.
Full results are in Table~\ref{tab:separation}.}
\label{tab:separation-main}
\end{table}

\section{Greedy vs.\ Sequential Disjoint Greedy: Separation Instances}
\label{app:separation}

We evaluate whether the theoretical containment advantage of sequential disjoint greedy
(Theorem~\ref{thm:half-containment}) translates to practical value differences
on random interference coverage functions
$f(S) = |\bigcup_{i \in S} C_i| - \lambda \sum_{i < j \in S} \mathrm{intf}(i,j)$.
Each element~$i$ covers a random subset $C_i \subseteq [m]$, $|C_i| \in [3,8]$;
pairwise interference arises with probability~0.25, intensity~$\mathrm{Unif}(1,5)$;
$\lambda \sim \mathrm{Unif}(0.5, 2.5)$.
Both algorithms receive the same budget $p = \omega k$ (so $|P_{\textsc{Greedy}}| = |P_{\textsc{SDG}}|$).
After pruning, a $k$-subset is extracted via greedy from each pruned set.

\begin{table}[h]
\centering
\small
\begin{tabular}{@{}cc ccc cc cc@{}}
\toprule
$n$ & $k$ & $\omega$ & Instances & Greedy-$k$ subopt
& $\OPT_k \!\subseteq\! P_g$ & $\OPT_k \!\subseteq\! P_s$
& Val.\ seps & Max gap \\
\midrule
20 & 3 & 2 & 2000 & 39\% & 60\% & 78\% & 123 (6.2\%) & +0.150 \\
30 & 3 & 2 & 1000 & 40\% & 58\% & 73\% & 86 (8.6\%) & +0.136 \\
50 & 3 & 2 & 500 & 34\% & 66\% & 79\% & 42 (8.4\%) & +0.125 \\
50 & 3 & 3 & 500 & 34\% & 67\% & 91\% & 42 (8.4\%) & --- \\
50 & 3 & 5 & 500 & 34\% & 70\% & \textbf{100\%} & 48 (9.6\%) & --- \\
\midrule
30 & 5 & 2 & 500 & 59\% & 41\% & 53\% & 0 & --- \\
50 & 5 & 2 & 200 & 60\% & 36\% & 42\% & 11 (5.5\%) & +0.135 \\
50 & 5 & 3 & 200 & 60\% & 41\% & 67\% & 33 (16.5\%) & --- \\
\bottomrule
\end{tabular}
\caption{Greedy vs.\ SeqDisjoint containment and value separation on random
interference coverage instances.
\emph{Val.\ seps} counts instances where SeqDisjoint's extracted $k$-subset
has strictly higher value.
At $\omega = 2$, containment differences produce measurable value gaps
(up to~$+0.15$).
At $\omega \ge 5$ the gap vanishes: sufficient budget provides enough
alternatives within any pruned set.
Mean $\alpha$ is $\ge 0.96$ for both algorithms across all configurations---the
separation is a tail phenomenon.}
\label{tab:separation}
\end{table}

\section{External Assets, Licenses, and Terms}
\label{app:assets}

We use external assets only as benchmarks or baselines; we do not redistribute
the SNAP, HotpotQA, MuSiQue, COMBHelper, or LLM model assets with this
submission.

\paragraph{SNAP social-network graphs.}
The Facebook and Wiki-Vote graphs are from the Stanford Network Analysis
Project (SNAP) dataset collection~\cite{LeskovecKrevl2014}.  The SNAP
Facebook page identifies the graph as anonymized Facebook social-circle data
from McAuley and Leskovec~\cite{McAuleyLeskovec2012}; the Wiki-Vote page
identifies the graph as Wikipedia adminship vote-history data.  The public
download pages are \url{https://snap.stanford.edu/data/ego-Facebook.html} and
\url{https://snap.stanford.edu/data/wiki-Vote.html}.  We did not find a named
data license on these SNAP dataset pages; accordingly, we use the public
benchmark data only for aggregate MaxCut experiments, cite the original source,
and do not redistribute the raw graph files.

\paragraph{HotpotQA.}
HotpotQA~\cite{YangQiZhang2018} is distributed under the Creative Commons
Attribution--ShareAlike 4.0 International (CC BY-SA 4.0) license; the official
dataset page also states that the processed Wikipedia corpus released with
HotpotQA is under the same license.  We use the distractor setting and
Wikipedia-derived passages as an evaluation benchmark and report only aggregate
metrics.  Official page: \url{https://hotpotqa.github.io/}.

\paragraph{MuSiQue.}
MuSiQue~\cite{TrivediMultiHop2022} is distributed under the Creative Commons
Attribution 4.0 International (CC BY 4.0) license.  The repository also cautions
that MuSiQue composes questions from seed single-hop datasets and releases
dev/test single-hop IDs to avoid train/test leakage.  We use the released
MuSiQue benchmark for evaluation only and do not train on those seed datasets.
Official repository: \url{https://github.com/StonyBrookNLP/musique}.

\paragraph{COMBHelper baseline.}
COMBHelper~\cite{TianEtAl2024} is released under the MIT License in the public
repository \url{https://github.com/1041877801/COMBHelper}.  Our MaxCut
comparison adapts the published architecture for evaluation; if code is
released with a future version, the MIT notice must be retained for any
substantial reused portions.

\paragraph{LLM reader model.}
The reported LLM validation experiments use \texttt{openai/gpt-oss-120b} as a
locally served reader model.  The model card lists the license as Apache 2.0;
the model weights are not redistributed, and only aggregate answer-quality
metrics are reported.  Model card:
\url{https://huggingface.co/openai/gpt-oss-120b}.

\end{document}